\documentclass[letterpaper,USenglish,cleveref]{lipics-v2021}

\pdfoutput=1
\hideLIPIcs 

\bibliographystyle{plainurl}

\usepackage{macros}
\usepackage{lineno}

\title{Simple Circuit Extensions for XOR in PTIME}

\author{Marco Carmosino}{MIT-IBM Watson AI Lab, Cambridge, MA, USA}{mlc@ibm.com}{https://orcid.org/0009-0007-1118-1352}{}

\author{Ngu Dang}{Boston University, Boston, MA, USA}{ndang@bu.edu}{https://orcid.org/0009-0004-2774-2247}{}

\author{Tim Jackman}{Boston University, Boston, MA, USA}{tjackman@bu.edu}{https://orcid.org/0000-0002-2293-5670}{}

\authorrunning{M. Carmosino, N. Dang, and T. Jackman}

\Copyright{Marco Carmosino, Ngu Dang, and Tim Jackman}

\ccsdesc[100]{Theory of computation~Circuit complexity}
\ccsdesc[100]{Theory of computation~Fixed parameter tractability}

\keywords{Minimum Circuit Size Problem, Circuit Lower Bounds, Exponential Time Hypothesis} 

\category{} 

\relatedversion{} 
\acknowledgements{}

\begin{document}
\nolinenumbers

\maketitle

\begin{abstract}
 The Minimum Circuit Size Problem for Partial Functions ($\MCSP^*$) is hard assuming the Exponential Time Hypothesis (ETH) (Ilango, 2020). This breakthrough hardness result leveraged a characterization of the optimal $\{\land, \lor, \neg\}$ circuits for $n$-bit $\OR$ ($\OR_n$) and a reduction from the partial $f$-Simple Extension Problem where $f = \OR_n$. It remains open to extend that reduction to show ETH-hardness of total $\MCSP$.  However, Ilango observed that the total $f$-Simple Extension Problem is easy whenever $f$ is computed by read-once formulas (like $\OR_n$).  Therefore, extending Ilango's proof to total $\MCSP$ would require one to replace $\OR_n$ with a slightly more complex but similarly well-understood Boolean function.
    
  This work shows that the $f$-Simple Extension problem remains easy when $f$ is the next natural candidate: $\XOR_n$. We first develop a fixed-parameter tractable algorithm for the $f$-Simple Extension Problem that is efficient whenever the optimal circuits for $f$ are (1) linear in size, (2) polynomially ``few'' and efficiently enumerable in the truth-table size (up to isomorphism and permutation of inputs), and (3) all have constant bounded fan-out. $\XOR_n$ satisfies all three of these conditions. When $\neg$ gates count towards circuit size, optimal $\XOR_n$ circuits are binary trees of $n-1$ subcircuits computing $(\neg)\XOR_2$ (Kombarov, 2011). We extend this characterization when $\neg$ gates do not contribute the circuit size. Thus, the $\XOR$-Simple Extension Problem is in polynomial time under both measures of circuit complexity.

  We conclude by discussing conjectures about the complexity of the $f$-Simple Extension problem for each explicit function $f$ with general circuit lower bounds over the DeMorgan basis. Examining the conditions under which our Simple Extension Solver is efficient, we argue that \emph{multiplexer} functions ($\MUX$) are the most promising candidate for ETH-hardness of a Simple Extension Problem, towards proving ETH-hardness of total $\MCSP$.
\end{abstract}

\newpage

\section{Introduction}\label{sec:intro}
Circuits model the computation of Boolean functions on fixed input lengths by acyclic wires between atomic processing units --- logical ``gates.''  To measure the circuit complexity of a function $f$, we first fix a set of gates $\cB$ --- called a \emph{basis}. This work studies circuits over the following basis: fan-in 2 AND, fan-in 2 OR, and fan-in 1 NOT gates. We consider two complexity size measures $\mu_{\cD}$ and $\mu_{\cR}$, which count \emph{only} the binary gates and the \emph{total} number of gates in a circuit respectively. We will refer to $\cB$ equipped with these two complexity measures as $\cD$, the DeMorgan basis, and $\cR$, the Red'kin basis respectively\footnote{We will specify a basis if a statement pertains to \emph{only} that basis. If the basis is not specified, then the statement applies to both $\cD$ and $\cR$.}.

Basic questions about these models have been open for decades; we cannot even rule out the possibility that every problem in $\NP$ is decided by a sequence of linear-size circuits (see page 564 of \cite{Jukna2012}). Despite this, the ongoing search for circuit complexity lower bounds has fostered rich and surprising connections between cryptography, learning theory, and algorithm design \cite{HiraharaS2017, GolovnevIIKKT2019, Santhanam2020, CheraghchiKLM2020, RenS2021, HuangIR2023}. The \emph{Minimum Circuit Size Problem} ($\MCSP$, \cite{KabanetsC00}) appears in all of these areas, asking:
\begin{quote}
    Given an $n$-input Boolean function $f$ as a $2^n$-bit truth table, what is the minimum $s$ such that a circuit of size $s$ computes $f$ ?
\end{quote}
The \textbf{existential} question --- do functions that require ``many'' gates exist? --- was solved in 1949: Shannon proved that almost all Boolean functions require circuits of near-trivial\footnote{From using a lookup table.} size $\Omega(\frac{2^n}{n})$ by a simple counting argument \cite{Shannon49}.  The current best answer to the \textbf{explicit} question in the DeMorgan basis --- is such a hard function in $\NP$? --- is a circuit lower bound of $5n - o(n)$, proved via \emph{gate elimination} \cite{IwamaM02}. This is far from the popular conjecture that $\NP$-complete problems require super-polynomial circuit size.

The \textbf{algorithmic} question --- is $\MCSP$ $\NP$-hard? --- remains open after nearly fifty years \cite{Trakhtenbrot84}, even under strong complexity assumptions such as the Exponential Time Hypothesis (ETH). But, many natural variants of $\MCSP$ have been proven $\NP$-hard unconditionally. For instance, $\DNF$-$\MCSP$ \cite{Masek1979}, $\MCSP$ for $\OR$-$\AND$-$\MOD$ Circuits \cite{HiraharaOS18}, and $\MCSP$ for multi-output functions \cite{IlangoLO20} are now known to be $\NP$-hard. Furthermore, $\MCSP$ for partial functions ($\MCSP^*$) \cite{Ilango20} is hard under the Exponential Time Hypothesis (ETH), later extended to unconditional $\NP$-hardness under randomized reductions \cite{Hirahara22}.

Our work studies the feasibility of generalizing Ilango's technique for ETH-hardness of $\MCSP^*$ to total $\MCSP$. In particular, underlying Ilango's proof is a related decision problem about circuit complexity of Boolean \emph{simple extensions} which we call the $f$-Simple Extension Problem ($\SEP{f}$).
\begin{definition*} [Simple Extension]
    Let $f$ be a Boolean function that depends on all of its $n$ variables. A simple extension of $f$ is either $f$ itself or a function $g$ on $n + m$ variables satisfying:
    \begin{enumerate}
        \item $g$ depends on all of its inputs.
        \item $CC(g)$ --- the circuit-size complexity of $g$ --- is $CC(f) + m$.
        \item There exists a setting $k \in \{0,1\}^m$, a \emph{key}, such that for all $x \in \{0,1\}^n,$ $g(x, k) = f(x)$.
    \end{enumerate}
\end{definition*}
We define the $f$-Simple Extension decision problem for \emph{total functions} below.\footnote{For partial function $f$-Simple Extension ($\SEP{f}^*)$, $g$ is a \emph{partial} function and we must determine whether any completion of $g$ is a simple extension of $f$.} 
\begin{problem*} [The $f$-Simple Extension Problem]
    Let $f$ be a sequence of Boolean functions $\{f_n\}_{n \in \mathbb{N}}$ such that each $f_n$ depends on all of its $n$ inputs. The $f$-Simple Extension Problem is defined as follows: Given $n \in \mathbb{N}$ and $tt(g)$---the truth table of a binary function $g$---decide whether $g$ is a simple extension of $f_n$.
\end{problem*}

For a fixed $f$ whose truth table can be efficiently computed and whose exact circuit complexity is known, $\SEP{f}$ reduces to a single call to an $\MCSP$ oracle, because checking whether $g$ is a non-degenerate extension of $f$ can be done in polynomial time via brute force (given the truth table of $g$). This observation gives rise to an $\MCSP$-hardness proof framework: if one can identify an explicit function $f$ for which deciding the $f$-Simple Extension Problem is hard, then $\MCSP$ is also hard. 

This framework was implicitly used in Ilango's hardness proof for $\MCSP^*$ \cite{Ilango20}, i.e. reducing an ETH-hard problem to deciding whether a partial function is a simple extension of $f = \OR$. 
We make this observation explicit in Section \ref{sec:reframing}. 
Now, a natural question arises: can one extend this idea to total $\MCSP$ and prove $\MCSP \not\in \P$ assuming ETH? 
Ilango suggested that
\begin{quote}
    \emph{``the most promising approach is to skip $\MCSP^*$ entirely and extend our techniques to apply to $\MCSP$ directly.''} - Rahul Ilango, SIAM J. of Computing, 2022
\end{quote}

However, optimal circuits for $\OR$ are so well-structured that deciding whether a \emph{total} function is a simple extension of it is actually easy (see the discussion in Section 1.2.2 in \cite{Ilango20}). Minimal $\OR$ circuits are \emph{read-once formulas}: each of the input is read exactly once and each internal gate has fan-out $1$. Simple extensions of it will also be computed by read-once formulas, and deciding whether a given Boolean function has a read-once formula is \emph{easy} \cite{AngluinHK93, GMR06}. Therefore, 
\begin{quote}
    \emph{``the missing component in extending our results to $\MCSP$ is finding some function $f$ whose optimal circuits we can characterize but are also sufficiently complex.''} - Rahul Ilango, SIAM J. of Computing, 2022
\end{quote}

Could simply replacing $\OR$ with some read-many $f$ --- perhaps $\XOR$, which enjoys tight bounds and a full characterization --- allow Ilango's technique to prove $\MCSP$ is ETH-hard? For $\XOR$, we show the answer is a resounding \textbf{no.} For other potential functions, the answer is more ambiguous. We will discuss prospects for alternative hard functions in Section \ref{sec:future-discuss}.

\subsection{Our Results and Contributions}
\label{sec:our-results}
We narrow the field of candidate functions for such a hardness proof by developing a fixed-parameter tractable algorithm for the $f$-Simple Extension problem (Section \ref{sec:se-algorithm}). Such an algorithm is surprising, because $f$-Simple Extension is a meta-complexity problem about \emph{general circuits} and our algorithm works in regimes where we \emph{know} explicit circuit lower bounds. Often, the combinatorial facts used in lower bounds imply a hardness result for the appropriately-restricted meta-complexity problem (e.g., $\DNF$-$\MCSP$)!  Nonetheless, we obtain:

\begin{mainresult*}
    The $f$-Simple Extension Problem is in $\P$ whenever
    \begin{enumerate}
        \item $CC(f)$ --- the circuit-size complexity of $f$ --- is linear,
        \item the maximum fan-out over all optimal circuits for $f$ is constant, and
        \item the optimal\footnote{In $\cD$, we require a circuit to be \emph{normalized} for it to be optimal. In particular, it cannot contain any double-negations. This prevents every function from having an infinite number of optimal circuits.}circuits for $f$, up to isomorphism and permutation of its $n$ inputs, are efficiently enumerable and polynomial few with respect to the length of its truth table: $2^n$.
    \end{enumerate} 
\end{mainresult*}

To apply our main result and discount a particular $f$, we require an exact specification of its optimal circuits. The next natural candidate --- $\XOR$, a simple function whose circuits are neither read-once nor monotone --- has been well studied. Beyond enjoying exact size bounds \cite{Schnorr74,Redkin1973}, $\XOR$ is one of the few functions whose structure has been studied; it is known that, in $\cR$, \emph{all} optimal $\XOR_n$ circuits are binary trees of $n-1$ $\XOR_2$ sub-blocks \cite{Kombarov2011}. We extend this structural analysis to $\cD$ in Section \ref{sec:opt-XOR-ckt}, obtaining
\begin{mainlemma*}[\cite{Kombarov2011}, Theorem \ref{thm:XOR-structure}]
    Optimal $\XOR_n$ circuits consist of $(n-1)$ $(\neg)\XOR_2$ sub-circuits.
\end{mainlemma*}

Each $\XOR_n$ circuit can therefore be characterized using binary trees with $n$ leaves, of which there are $C_{n-1} = O(2^n)$, where $C_n$ is the $n^{th}$ Catalan number \cite{van_Lint_Wilson_1992}. As there are a finite number of optimal normalized $(\neg)\XOR_2$ circuits, combining this characterization and our main result to immediately yields
\begin{maincorollary*}
    The $\XOR$-Simple Extension Problem is in $\P$.
\end{maincorollary*}

Applying Ilango's technique successfully will itself require a deeper study of circuit minimization. This is not merely because any hardness proof needs to bypass our algorithm; knowledge of circuit lower-bounds and optimal constructions for the base function is intrinsic to the reduction itself. We make this connection explicit in Section \ref{sec:reframing}, identifying that

\begin{mainobservation*}
    $\SEP{f}^*$ is ETH-hard under Levin reductions.
\end{mainobservation*}

Lastly, in Section \ref{sec:roadmap}, we inspect each explicit function $f$ that enjoys DeMorgan circuit lower bounds and argue how plausible it is that the optimal set of $f$ circuits avoids our Main Result --- a \emph{roadmap} towards ETH-hardness of total $\MCSP$ via $\SEP{f}$.

\subsection{Related Work}

\subparagraph*{(Non-)hardness of $\MCSP$ Variants.} Hirahara showed that \emph{Partial $\MCSP$} is unconditionally $\NP$-hard under \emph{randomized reductions} \cite{Hirahara22}.  Extending his breakthrough result is another approach towards hardness of total $\MCSP$.  Though promising, this also faces challenges: under believable cryptographic conjectures (indistinguishability obfuscation and subexponentially-secure one-way functions), Gap$\MCSP$ is not $\NP$-complete under randomized Levin-reductions \cite{DBLP:conf/coco/MazorP24a}.  Such reductions appear to suffice for Hirahara's proofs, so one may need new ideas to obtain $\NP$-hardness of total $\MCSP$ via his approach.

We bypass the issue by working towards hardness of total $\MCSP$ under ETH --- a stronger assumption than $\P \neq \NP$.  Even so, ETH-hardness of total $\MCSP$ remains a major open problem, and there are no known barriers to extending Ilango's approach in this setting.  The reader can decide for themselves if our algorithm constitutes such a barrier or not.

\subsection{Discussion and Future Directions}
\label{sec:future-discuss}

\subparagraph{A Remark on Bases.}
Both $\cR$ and $\cD$ are compatible with Ilango's proof of ETH-hardness of $\MCSP^*$. That proof relied on functions whose optimal $\{\land, \lor, \neg\}$-circuits are read-once monotone formulas. There it was irrelevant whether $\neg$ gates contribute to size: those circuits simply did not contain negations. However, when we move to more complex functions like $\XOR$, negations \emph{must} appear in the optimal circuits and as such, we must decide how to treat them. 

In the search for non-linear circuit lower bounds, the choice between $\mu_R$ and $\mu_D$ is largely irrelevant: the two complexity measures the same up to a small constant factor. However, as we discuss in Section \ref{sec:reframing}, Ilango's technique requires knowledge of the \emph{structure} of optimal circuits. In this setting, there is not necessarily as strong connection between the two bases. By extending Kombarov's characterization of $\XOR$ circuits in $\cR$ to $\cD$ in Section \ref{sec:opt-XOR-ckt}, we show for that \emph{particular} function, optimal circuits are structurally similar in both bases. But, for other functions, this may well not be the case. It seems likely that negations could enable a function's optimal circuits to greatly vary under the two complexity measures. Indeed, negations can greatly increase a problems complexity: \cite{BlaisCOST2014} showed that a Boolean function learning problem became difficult only once the number of $\neg$ gates exceeded a small threshold.

By considering both bases in this work, we limit how negations impact the complexity of $\SEP{f}$. We show that they do not greatly increase the complexity of the simple extensions themselves: in Section \ref{sec:SE-tools}, we find that simple extensions under both complexity measures are highly structured. If a future hardness reduction relies on negations, their role will be in increasing the structural complexity of the underlying base function, either by increasing the number of distinct optimal circuits, or by enabling non-constant fan-out in a base circuit.

\subparagraph{A Roadmap for ETH-Hardness Proofs via Simple Extensions}
\label{sec:roadmap}
To prove $\SEP{f}$ is ETH-hard we will need a Boolean function whose optimal circuits are more complex and/or varied than $\XOR$. Specifically, these circuits must either (1) be superlinear in size, (2) require non-constant fanout, or (3) be sufficiently numerous. Superlinear bounds seem beyond current techniques---the most fruitful of which, gate elimination, seems unlikely to be able to prove lower bounds above even $11n$ for wide classes of functions \cite{GolovnevHKK18}. As such, it seems more sensible to identify Boolean functions which violate the latter conditions. However, structural characterization of the optimal circuits is also hard, and seems to require very tight circuit bounds. Indeed, our DeMorgan basis characterization of $\XOR$ repeatedly exploited Schnorr's \emph{exact} $3(n-1)$ bound for $\XOR_n$ \cite{Schnorr74}. Regardless, this greatly narrows the prospective class of functions from the original specification: ``more complex than read-once formulas.'' However, the known explicit functions with tight DeMorgan bounds that may violate these conditions are few and far between. In Table \ref{tab:candidates}\footnote{Some of listed bounds are in $\cU_2$, the basis consisting of every binary Boolean function besides $\XOR_2$ and $\neg\XOR_2$. For non-degenerate functions besides $f(x) = \neg x$, $\cU_2$ and $\cD$ are equivalent in terms of size. The multiplexer lower bound of \cite{Paul75} is for the $\cB_2$, the basis of all binary Boolean functions, but it also serves as the best known lower bound in $\cD$.}, we summarize these explicit functions and assess how suitable they are for ETH-hardness of $f$-Simple Extension Problem.

\begin{table}[t]
\centering
  \caption{Explicit Functions with Circuit Lower Bounds in the DeMorgan Basis}
  \label{tab:candidates}
  \begin{tabular}{ l c c c c c}
    Function(s) & Lower Bound  & & Upper Bound & $\Omega(1)$ Fanout & Source(s)\\
    \hline
    $\XOR$ & $\bm{3(n - 1)}$ & $\bm{=}$ & $\bm{3(n - 1)}$ &  NO & \cite{Schnorr74} \\
    Sum Mod 4 & $4n - O(1)$ & & $5n - O(1)$ & NO? & \cite{Zwick91} \\
    Sum Mod $2^k$ & $4n - O(1)$ & & $7n - o(n)$ & NO? & \cite{Zwick91} \\
    Multiplexer & $2(n - 1)$ & & $2n + O(\sqrt{n})$ & YES? & \cite{DBLP:journals/siamcomp/Paul77,KleinPaterson80} \\
    Well-Mixed &  $5n - o(n) $  & & $\poly$ & MAYBE? & \cite{LachishR01,IwamaM02} \\
    Weighted Sum of Parities &  $5n - o(n) $  & & $5n + o(n)$ & YES? & \cite{AmanoT11}
  \end{tabular}  
\end{table}

Observe that every function besides $\XOR$ in the table has a (small) gap between the circuit lower and upper bound, and the ``$\Omega(1)$ Fanout'' column ends with a question mark (?).  This is because $\XOR$ is the \emph{only} listed function for which we know the \emph{exact} circuit complexity and an optimal circuit characterization. The other ``$\Omega(1)$ Fanout'' entries above are extrapolated by assuming that their respective DeMorgan \emph{upper bound} constructions are optimal. For instance, Zwick conjectured that optimal circuits computing the Sum Mod 4 ($\MOD_4$) function are ``shaped like'' ternary full-adder blocks \cite{Zwick91}. If this conjecture is true, then $\MOD_4$-Simple Extension can be solved in $\poly$-time since such circuits satisfy the properties of our Main Lemma. Since the $\MOD_{2^k}$ functions are computed similarly, we conjecture that exactly characterizing the optimal circuits for Zwick's functions would yield efficient Simple Extension Solvers --- not a proof of ETH-hardness for total $\MCSP$.

However, we do have linear lower bounds for functions whose best known constructions have non-constant fanout: the multiplexing function ($\MUX$) contains sub-circuits which are reused a logarithmic number of times \cite{KleinPaterson80}.  In contrast to $\XOR$ however, the bounds for $\MUX$ are not tight.  The best lower bound is $2(n-1)$, given by Paul \cite{Paul75}.

\subparagraph{Future Directions.}
The most obvious next step is to either (1) obtain total characterization of the Multiplexer or (2) extend our Simple Extension Solver to handle circuits with super-constant fanout.  Neither of these tasks seems easy, but also they have not been subject to intensive research the way that super-linear circuit lower bounds and hardness of $\MCSP$ have.  We hope that connecting these kinds of results to ETH-hardness of $\MCSP$ provides new perspective and motivation.

Ilango's $\MCSP^*$ result has also formed the basis for several hardness results in other models of computation such as formulas and branching programs \cite{Ilango21, GlinskihR22, GlinskihR24}; could one show that the simple extension problem for formulas or branching programs is hard? These other models of computations may prove easier to work with than unrestricted circuits. They also enjoy superlinear lower bounds thereby bypassing our algorithm. Investigating the simple extension problem in these settings would provide insight into the feasibility of the approach in the circuit setting.

We conclude this section with a discussion of the simple extension problem in general. Despite having only been studied as a tool for proving hardness of $\MCSP$ thus far, it may be of independent interest.  For example, while hardness of time-bounded Kolmogorov complexity is tightly connected to the existence of one-way functions, hardness of $\MCSP$ has much weaker quantitative connections \cite{DBLP:conf/focs/LiuP20,RenS2021}.  The $f$-Simple Extension Problem is more ``structured'' than $\MCSP$ and easily reduces to it, so hardness assumptions about $f$-Simple Extension Problem are stronger. Could such assumptions imply one-way functions?

\subsection{Proof Techniques}
\subsubsection{The Structure of Optimal \texorpdfstring{$\XOR$}{XOR} Circuits}
\label{sec:xor-opt-ckt-summary}

Similar to \cite{Kombarov2011}, we show that \emph{every} optimal $(\neg)\XOR_n$ circuit over the DeMorgan basis partitions into trees of $(n-1)$ sub-circuits computing $(\neg)\XOR_2$  --- even when NOT gates are free. The structure of optimal circuits computing the $\XOR$-function is a crucial ingredient for ruling it out as a candidate function. 
We carry out an elementary but intricate case analysis of restricting and eliminating gates from optimal $\XOR$ circuits. Essentially we extract more information from the proof of Schnorr's lower bound by using it to identify ``templates'' that must be found in \emph{any} optimal $\XOR$ circuit.  We push this process to the limit, fully characterizing the ``shape'' of all such circuits. Specifically,
\begin{itemize}
    \item Schnorr's proof is essentially a technical lemma which says that any one-bit restriction will eliminate at least 3 costly gates \cite{Schnorr74}. This means that at the bottom level of every optimal $\XOR$ circuit, any variable must be fed into two distinct costly gates, and furthermore, one of these two must be fed into another costly gate. Any deviation from these properties will violate essential properties of the $\XOR$-function, such as ``$\XOR$ depends on all the input bits.'' Via a basic inductive argument and the fact that $\XOR$ is downward self-reducible, Schnorr's lower bound follows: $CC(\XOR_n) \geq 3(n-1)$.
    \item Schnorr's proof leaves the local structure of the optimal circuit computing $\XOR$ ``open.'' Namely, it does not provide any information about the other inputs of the costly gates or where their outputs connect to the rest of the circuit, since we consider fan-in 2 and unbounded fan-out. However, we know that $\XOR$ circuit has a matching upper-bound of $3(n-1)$. In particular, this means \emph{each one-bit restriction cannot remove more than $3$ gates}. We also know that \emph{each variable in optimal $\XOR$-circuits must be read twice}.
    \item We leverage these two properties to show that in every optimal $\XOR$ circuit, any two distinct input variables $x_i$ and $x_j$ must be fed into a block $\mathcal{B}$ as shown in the left sub-figure of Figure \ref{fig:XOR-true-shape}. Specifically, we argue that any deviations from the block will violate at least one of the properties via exhaustive case analyses of gate elimination steps. Finally, we argue that this block $\mathcal{B}$ must compute either $\XOR_2$ or $\lnot \XOR_2$ and apply a basic inductive argument to obtain the desired structural characterization of any optimal circuit computing $\XOR_n$ as depicted in the right sub-figure of Figure \ref{fig:XOR-true-shape}. 
\end{itemize}
Besides the linear size for optimal circuits computing $\XOR$, our structural theorem yields two more properties that rule out $\XOR$ as a candidate function for $\MCSP$-hardness via Simple Extension. That is, for optimal circuits computing $\XOR_n$, (1) \emph{the maximum fan-out is a constant}, and (2) \emph{the number of such optimal circuits up to permutation of variables, is $2^{O(n)}$}.

\begin{figure}[t]
    \centering
    \includegraphics[width=\textwidth]{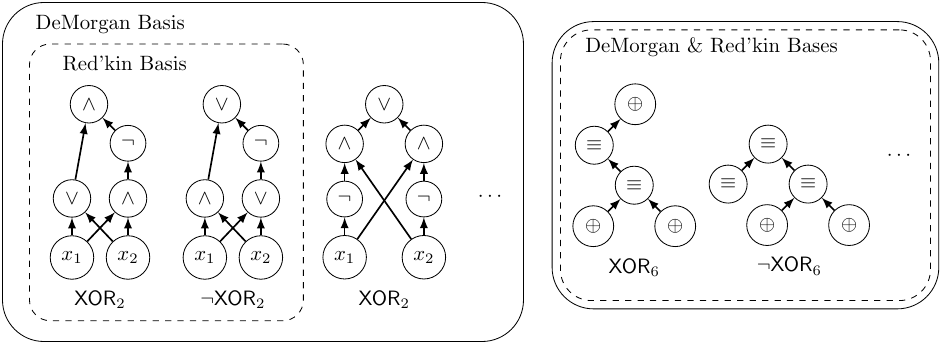}
    \caption{An example of the binary tree structure of optimal circuits computing $\XOR_6$. The left sub-figure depicts possible $(\neg)\XOR_2$ blocks in the Red'kin and DeMorgan Bases. Notice each optimal Red'kin circuit is an optimal DeMorgan circuit, but not vice-versa. The right sub-figure depicts that the arrangement of $\XOR_2$ blocks that make up $\XOR_6$ circuits are shared by both bases.}
    \label{fig:XOR-true-shape}
\end{figure}

\subsubsection{A Fixed-Parameter Tractable Simple Extension Solver}
\label{sec:fpt-sep-solver}
It is easy to see that the approach of solving the Simple Extension Problem for $\XOR_n$ via brute-forcing over all possible circuits of size $CC(f) + m$ is super-polynomial in terms of the length of the input truth-tables. Using the following ingredients, we design Algorithm \ref{alg:ckt-se-solver-informal} below, a Fixed-Parameter Tractable (FPT) algorithm for the Simple Extension problem that depends on the following three parameters: (1) the number of optimal circuits for $f$ (up to isomorphism \& permutation of variables), (2) the maximum fanout of any node in any optimal circuit for $f$, and (3) $CC(f)$.

\algnewcommand{\LComment}[1]{\State \textcolor{gray}{\(\triangleright\) #1}} 
\begin{algorithm}[t]
  \caption{Informal Simple Extension Solver, taking input
    $n \in \mathbb{N},~ g \in \cF_{n+m}$}
  \label{alg:ckt-se-solver-informal}
  \begin{algorithmic}[1]
    \If{there is no key to $f$ in $g$ or $g$ is degenerate}
    \State \Return \textbf{False}
    \EndIf
    \LComment{If the above tests pass, then $g$ is a non-degenerate extension of $f$.  It remains to check simplicity.}
    \For{each isomorphism class $\cC$ of open optimal circuits for $f$}
    \State $F \gets$ an arbitrary element of $\cC$ with all gates labelled in topological order
    \State{label the open nodes of $F$
      by an arbitrary permutation of $x_1, \dots , x_n$}
    \For{each reverse elimination $E$ that adds exactly $m$ costly gates to $F$}
    \State $\Tilde{G} \leftarrow \texttt{Decode}(F, E)$ 
    \If{$\texttt{tt}(\Tilde{G}) \simeq \texttt{tt}(g)$}
    \Comment{Test using the procedure of Theorem \ref{thm:tt-iso}.}
    \State \Return \textbf{True}
    \EndIf
    \EndFor
    \EndFor
    \State \Return \textbf{False}
  \end{algorithmic}
\end{algorithm}

\subparagraph*{Structured Simple Extension Circuits.}
By analyzing the \emph{behavior} of optimal simple extension circuits under gate elimination, we are able to characterize the \emph{structure} of \emph{every} optimal circuit computing a simple extension. By definition, if $g$ is a simple extension of $f$ then there are restrictions of $g$'s added variables (called \emph{extension} variables and denoted $y_i$) that yield $f$. We call such a restriction a \emph{key} to $f$ in $g$. We first show that \emph{circuits obtained by partially restricting with a key are themselves optimal simple extension circuits for intermediate extensions}. Building on this, we then develop convenient \emph{all-stops restrictions} that order substitutions and simplification steps with the following properties: (1) \emph{single-bit substitutions from this key in the given order eliminate exactly one costly gate at each step}, (2) \emph{there exists such an all-stops restriction for any optimal circuit computing a simple extension} (Lemma \ref{lem:exists-all-stops}).

Combining these tools, we inductively show a robust structure arises in optimal simple extension circuits: \emph{each extension variable occurs in an isolated read-once subformula that depends only on other extension variables} (referred to to as the \emph{Y-trees}). Formally,

\begin{definition}[Y-Tree Decomposition]
  Let $G$ be a circuit with two distinguished sets of inputs: \emph{base variables} $X$ and \emph{extension variables} $Y$.  A \emph{$Y$-Tree Decomposition} of $G$ is a set of triples $\langle \gamma , b , T \rangle$ where $\gamma$---referred to as a \emph{combiner}---is a costly gate of $G$, bit $b \in \bool$ designates an input of $\gamma$, and $T$ is a sub-circuit of $G$ rooted at the $b$ child of $\gamma$ such that
  
  \begin{enumerate}
  \item Each $T$ is a read-once formula in only extension variables $Y$.
  \item Each $y_i \in Y$ appears in at most one $T$.
  \item Each $T$ is \emph{isolated} in $G$ --- gate $\gamma$ is the unique gate reading from $T$, and it only reads the root of $T$.
  \item The sub-circuit of $G$ rooted at the $\neg$b child of $\gamma$ contains at least one $X$ variable.
  \end{enumerate}

  The \emph{size} of a decomposition is the number of tuples --- $Y$-trees and their associated combiner gates --- present in the decomposition. The \emph{weight} of a $Y$-tree decomposition is the number of extension variables that are read in some $T$. We say a $Y$-tree decomposition is \emph{total} if its weight is $|Y|$, i.e. every extension variable appears.  An example Y-tree decomposition of size three is depicted in Figure \ref{fig:example-y-tree}, where the shaded circles represent the circuity around each combiner $\gamma$ connecting each $T_\cY$ to the rest of the circuit.
\end{definition}

When gate elimination is performed with a total key, these added Y-trees and their combiners are pruned to reveal an embedded optimal circuit for the base $f$ function. We get the following structural insight: \emph{every optimal simple extension circuit has a total $Y$-tree decomposition}. This decomposition forms the basis of our strategy: we brute force over every optimal base circuit and try to ``splice in'' every possible $Y$-tree. 

\begin{figure}[t]
    \centering
    \includegraphics[]{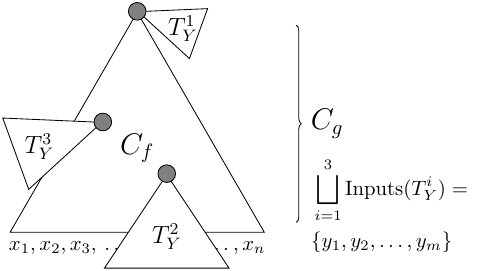}
    \caption{An example of a Y-Tree Decomposition of size three.}
    \label{fig:example-y-tree}
    
\end{figure}

\subparagraph*{Encoding \& Decoding the ``Grafts'' in a Y-Tree Decomposition.}
To ensure we can efficiently construct these candidate simple extension circuits, we devise an encoding scheme and corresponding decoding algorithm which efficiently captures the difference in local neighborhoods after each new Y-tree is spliced on top of an existing gate or input. Our final encoding must be $O(n + m)$ bits long to ensure brute-force runs in $2^{O(n+m)}$. We present our encoding as a communication problem to clarify the overhead and constraints involved.

Suppose $g$ is a simple extension of $f$ and Alice knows $G$, an optimal circuit for $g$.  Alice can obtain an optimal circuit $F$ computing $f$ by simply restricting the $y$-variables of $G$ with a key and performing gate elimination.  Now consider the following communication problem: Bob (i.e., line 5 of Algorithm \ref{alg:ckt-se-solver-informal}) knows $F$, and Alice would like to send him $G$ using as few bits as possible.  Because $g$ is a simple extension of $f$,  Alice can compute the $Y$-tree decomposition of optimal circuit $G$.  The idea is to send Bob a sequence of instructions that tell him exactly how to graft each Y-Tree of $G$ onto the gates of $F$, where all information is encoded relative to isomorphism-invariant properties of $F$.

\subparagraph*{Speeding Up Via Truth-table Isomorphism.}
Brute-forcing over the total encodings described above is still incredibly inefficient. Since each $Y$-tree is a read-once formula in the added $m$ variables there are least $C_{m-1} \cdot m!$ such explicit $Y$-trees, where $C_a$ is the $a^{th}$ Catalan number \cite{van_Lint_Wilson_1992}. The dominating term---$m!$---comes from permuting the labels of the variables. The same issue arises if our base function $f$ is symmetric: the number of optimal $f$ circuits is $\Omega(n!)$.

We sidestep this issue and drastically improve the speed of brute-force search. If we ``incorrectly'' assign variables in the base circuit or in the Y-trees, the result is a circuit for a Boolean function that is \emph{truth-table isomorphic} to $g$, i.e. their truth-tables are the same up to a permutation of the inputs. If $h$ and $g$ are truth-table isomorphic then they have the same circuit complexity. Thus it suffices to brute-force over unlabeled (``open'') base circuits and $Y$-trees, assign variables to inputs arbitrarily, generate each circuit's truth table, and check if it is truth-table isomorphic to $g$. This final step is feasible since truth-table isomorphism testing can be done in polynomial time \cite{Luks99}.

This results in our final algorithm, which runs in time $O(|L| \cdot 2^{O(\ell(s+m))})$ where $|L|$ is the number of optimal base $f$ circuits up to permutation of variables, $\ell$ is the maximum fanout in any of those base circuits, and $s = CC(f)$. As discussed above, for $\XOR$ these parameters are all sufficiently small and hence $\XOR$-Simple Extension is in $\P$.

\subsubsection{Paper Outline}
The paper is laid out as follows. Section \ref{sec:reframing} explores the implicit reduction present to $\SEP{f}$ in the ETH-hardness proof for $\MCSP^*$ in \cite{Ilango20}. This discussion yields our Main Observation and motivates our study of $\SEP{f}$ and circuit structures. It, in conjunction with this introduction, constitutes an extended abstract of our paper.

Appendix \ref{sec:Circuits} and Appendix \ref{sec:neo-gate-elimination} formally define circuits and gate elimination respectively. In Appendix \ref{sec:SE-tools}, we establish that every optimal circuit computing simple extensions are highly structured: they can be decomposed into $Y$-trees. This observation forms the basis of our Main Result, a fixed parameter tractable algorithm for $\SEP{f}$, in Appendix \ref{sec:xor-se-solver}. Finally, in Appendix \ref{sec:opt-XOR-ckt}, we extend Kombarov's characterization of optimal $\XOR$ circuits in $\cR$ to $\cD$, in order to apply our Main Theorem to $\SEP{\XOR}$. 

The dependence between sections varies. For example, Appendix \ref{sec:opt-XOR-ckt} is independent of the $\SEP{f}$ material like Section \ref{sec:reframing} and Appendix \ref{sec:SE-tools}. To aid the reader, we provide a reading order in Figure \ref{fig:reading-order}.
\begin{figure}[h]
    \centering
    \includegraphics[width=\textwidth]{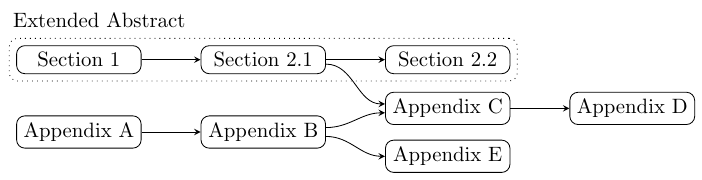}
    \caption{The structure of our paper. Sections 1 and 2 form an extended abstract. An incoming arrow indicates that the material depends on the previous section.}
    \label{fig:reading-order}
\end{figure}

\section{Revisiting ETH hardness for \texorpdfstring{$\MCSP^*$}{MCSP*} via Simple Extensions} \label{sec:reframing}
We re-examine the proof that $\MCSP^*$ is ETH-hard from \cite{Ilango20}. Our aim is to better understand the reduction from $2n \times 2n$ Bipartite Permutation Independent Set ($\BPIS$) to the Partial $f$ Simple Extension problem ($\SEP{f}^*)$. 

\subsection{The \texorpdfstring{$f$}{f}-Simple Extension Problem}

We give formal definitions of simple extensions and its associated decision problem. We first define non-degeneracy of a Boolean function. 

\begin{definition}
\label{def:bit-dependency}
    A function $f \in \cF_n$, the set of Boolean functions on $n$ variables, depends on its $i^{th}$ variable, $x_i$, if there exists an input $\alpha \in \cF_n$ such that
    $ f(\alpha) \neq f(\alpha \oplus e_i), $
    where $e_i$ denotes the Boolean vector that is $0$ everywhere except for a $1$ at index $i$ and $\oplus$ is bitwise $\XOR$. If $f$ depends on all of its variables, then we say $f$ is a \emph{non-degenerate} function. Conversely, we say $f$ is a \emph{degenerate} function if it does not depend on at least one variable.
\end{definition}

We now define simple extension as
\begin{definition}
\label{def:simp-ext}
    Let $f \in \cF_n$ be non-degenerate. A \emph{simple extension} of $f$ is either $f$ itself or a function $g \in \cF_{n+m}$ satisfying:
    \begin{enumerate}
        \item $g$ is a non-degenerate function,
        \item $CC(g) = CC(f) + m$, and
        \item there exists a setting $k \in \{0,1\}^m$, called a \emph{key}, such that for all $x \in \{0,1\}^n,$ $g(x, k) = f(x)$.
    \end{enumerate}
\end{definition}

We denote the first $n$ inputs of $f$ and $g$ by $x_1, \ldots x_n$ and will refer to the extra $m$ inputs of $g$ as \emph{extension variables} and refer to them as $y_1, \ldots, y_m$. From the definition of simple extension above, we define the following decision problem.

\begin{problem}[$\SEP{f}$]
\label{problem:decide-m-SE}
    Let $f$ be a sequence of Boolean functions $\{f_n\}_{n \in \mathbb{N}}$ such that each $f_n$ is a non-degenerate function in $\cF_n$. The $f$-Simple Extension Problem is defined as follows: Given $n \in \mathbb{N}$ and $tt(g)$---the truth tables of a binary function $g \in \cF_{n+m}$---decide whether $g$ is a simple extension of $f_n$.
\end{problem}

We extend this to partial functions,

\begin{problem}[$\SEP{f}^*$]
\label{problem:decide-m-SE*}
    Given $n \in \mathbb{N}$ and a \emph{partial} $tt(g)$, decide whether any completion of the truth table is a simple extension of $f_n$.
\end{problem}

\subsection{An Explicit Reduction \texorpdfstring{$\BPIS$}{BPIS} from to \texorpdfstring{$\SEP{f}^*$}{f-SEP*}}

Recall the formulation of $\BPIS$ from \cite{Ilango20},
\begin{problem}[$\BPIS$]
    The $2n \times 2n$ Bipartite Permutation Independent Set is defined as follows: Given $G = (V, E)$ a directed graph with vertex set $V = [n] \times [n]$. Decide whether there exists a permutation $\pi: [2n] \to [2n]$ such that:
        \begin{enumerate}
            \item $\pi([n]) = [n]$,
            \item $\pi(\{n + i : i \in [n]\}) = \{n + i:i \in [n]\}$,
            \item if $((j, k), (j', k')) \in E$, then either $\pi(j) \neq k$ or $\pi(j' + n) \neq k' + n$
        \end{enumerate}
\end{problem}

If the Exponential Time Hypothesis holds, then $\BPIS$ cannot be solved much faster than brute-forcing over all $2^{o(n \log n)}$ permutations \cite{Lokshtanov18}. $\BPIS$ is a natural problem to show hardness of circuit size problems since there are $O(2^{s \log s})$ circuits of size $s$. Reducing from $\BPIS$ implies, assuming ETH, that our problem can not be solved much faster than by brute-forcing over all possible circuits.

The reduction from $\BPIS$ to $\SEP{f}^*$, as it appears as part of the original proof, is somewhat \emph{implicit}; the target problem, as written, could be described more simply as ``determine whether a partial truth table is ever consistent with a monotone read-once formula.'' Since this is \emph{easy} for total functions \cite{AngluinHK93, GMR06}, this view of the reduction cannot help us to extend the reduction technique to total $\MCSP$. We will need the more general $\SEP{f}$ and by reframing Ilango's proof we gain insight into how it might extend to total functions.

The connection to $\SEP{f}^*$ is \emph{explicitly} stated, however, in the introduction of \cite{Ilango20}. Here, the reduction is identified with $\SEP{\OR_{4n}}^*$ where each $z$ variable is an extension variable. This is true, though non-degenerate functions computed by read-once formulas are simple extensions of \emph{any} of their non-degenerate restrictions. When framing the reduction to $\SEP{f}^*$ explicitly, we find it more compelling to choose a different function $\hat{f}$, whose truth table is given by $\bigvee_{i \in [2n]}(y_i \land z_i)$ --- because using $\hat{f}$ makes the intuitive description of Ilango's technique ``reverse gate elimination'' an obvious property of the reduction. Under this framing, the extension variables will instead be the $x$ variables in Ilango's original proof. Despite this, fixing $f$ to be $\OR_{4n}$ is \emph{not} arbitrary; afterwards, we discuss what insight it provides.  Informally, the two choices of base function provide distinct ``channels'' for ETH to imply hardness of $\SEP{f}^*$.

\subparagraph*{Structural Lemmas} To encode $\BPIS$ in $\SEP{\hat{f}}^*$, we first prove two lemmas which were essentially proven in tandem in \cite{Ilango20} as Lemma 16. We separate them out here and stay faithful to the original arguments. Like Lemma 16, these lemmas establish structural properties of circuits computing our base function $\hat{f}$ and our eventual output $\hat{g}$. 
\begin{lemma}\label{lem:rahul-base-lemma}
    Let $\hat{f} : \{0,1\}^{2n} \times \{0,1\}^{2n} \to \{0,1\}$ be the Boolean function computed by $\bigvee_{i \in [2n]}(y_i \land z_i)$. If $\psi$ be an optimal \textit{normalized}\footnote{A formula is normalized if all negations are pushed down to the input level. Normalization does not affect the size of the formula, and thus $\SEP{f}^*$ still reduces to $\MCSP^*$ even if we restrict ourself to normalized formulas.} formula computing $\hat{f}$, then $\psi$, as a formula, is equal to $\bigvee_{i \in [2n]}(y_i \land z_i)$. 
\end{lemma}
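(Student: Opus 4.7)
The plan is to show that $\psi$ must be a monotone read-once formula and then pin down its shape using the prime-implicant structure of $\hat{f}$. First, I would observe that $\hat{f}$ depends on all $4n$ of its input variables: flipping $y_i$ with $z_i = 1$ and every other pair zero changes $\hat{f}$, and symmetrically for $z_i$. Every formula for $\hat{f}$ must therefore contain each variable as a leaf at least once. A normalized formula is a binary tree of $\land$ and $\lor$ gates whose leaves are literals, so with $k$ leaves it has exactly $k - 1$ binary gates. The explicit formula $\bigvee_i (y_i \land z_i)$ achieves $k = 4n$ with $4n - 1$ binary gates, so by optimality $\psi$ has at most $4n - 1$ binary gates and hence at most $4n$ leaves; combined with the lower bound $k \geq 4n$, each variable appears in $\psi$ exactly once and $\psi$ is read-once. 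Moreover, a normalized read-once formula is monotone nondecreasing in each unnegated input and monotone nonincreasing in each negated input; since $\hat{f}$ is monotone nondecreasing in every input, no leaf of $\psi$ can be negated, so $\psi$ is a monotone read-once $\{\land,\lor\}$-formula.

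Next, I would determine the structure of $\psi$ via prime implicants. The prime implicants of $\hat{f}$ are exactly the $2n$ pairs $\{y_i, z_i\}$. For a monotone read-once formula $\varphi = \varphi_0 \star \varphi_1$ with variable-disjoint children, the prime implicants of $\varphi$ are the union of those of $\varphi_0$ and $\varphi_1$ when $\star = \lor$, and are the pairwise unions $a \cup b$ (with $a$ a prime implicant of $\varphi_0$ and $b$ one of $\varphi_1$) when $\star = \land$. If $\psi$'s root were $\land$, then each pair $\{y_i, z_i\}$ would have to split across the two children, and picking two distinct split pairs $\{y_i, z_i\}$ and $\{y_j, z_j\}$ would yield a spurious prime implicant $\{y_i, z_j\}$ of $\psi$, contradicting that $\{y_i, z_j\}$ is not even an implicant of $\hat{f}$ (set $y_i = z_j = 1$ with all other inputs $0$). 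So $\psi$'s root is $\lor$. Flattening all top-level $\lor$-associativity writes $\psi = \bigvee_j \psi_j$ where each $\psi_j$ is either a single literal or $\land$-rooted; a short restriction argument (set the variables outside $\psi_j$'s scope to $0$) shows that $\psi_j$ computes $\bigvee_{i \in S_j}(y_i \land z_i)$ over the pairs fully contained in its variable set. Applying the prime-implicant argument recursively to every $\land$-rooted $\psi_j$ forbids $|S_j| > 1$, so each $\psi_j$ equals $y_i \land z_i$ for a unique $i$, and hence $\psi = \bigvee_{i \in [2n]}(y_i \land z_i)$.

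The main obstacle is the final recursive step. ``Equal as a formula'' has to be interpreted up to associativity and commutativity of the symmetric $\land$ and $\lor$ gates (matching Ilango's convention in Lemma 16), and a careful accounting is required to justify that (1) the variable sets of the top-level disjuncts respect the pair structure of $\hat{f}$, and (2) the prime-implicant obstruction to a top $\land$-gate continues to apply verbatim at every $\land$-rooted subformula encountered during the recursion. Both points can be handled cleanly by phrasing the whole argument as an induction on the size of the read-once formula with the inductive hypothesis ``every monotone read-once formula computing $\bigvee_{i \in S}(y_i \land z_i)$ on variable set $\bigcup_{i \in S}\{y_i, z_i\}$ is, up to commutativity of $\land$ and $\lor$, the explicit $\bigvee_{i \in S}(y_i \land z_i)$.''
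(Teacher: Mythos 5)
Your proposal is correct and mathematically sound, but it pins down the structure of $\psi$ by a genuinely different route than the paper. Both arguments begin identically: establish that $\psi$ reads each of the $4n$ variables exactly once and positively (yours via the ``$k$ leaves $\Rightarrow k-1$ gates'' tree count plus monotonicity of $\hat f$, the paper's via the $4n-1$ gate bound from non-degeneracy plus normalization). The divergence is in the pin-down step. The paper proceeds entirely by restriction and gate-counting: restrict $z=1^{2n}$ to count at least $2n-1$ \textsc{or}-gates, then restrict $z_i\leftarrow 1$ and $z_i\leftarrow 0$ individually to force each $z_i$ into an \textsc{and}-gate paired with $y_i$, and finally close by counting. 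You instead invoke the classical prime-implicant decomposition of monotone read-once formulas (variable-disjoint children: PIs union under $\lor$, cross-product under $\land$) to rule out an \textsc{and} at the root via the spurious implicant $\{y_i, z_j\}$, then recurse. What each buys: your route leverages established read-once theory (Gurvich, Angluin--Hellerstein--Karpinski) and gives a cleaner conceptual explanation of \emph{why} the shape is forced — the pair structure of the prime implicants is simply incompatible with a top \textsc{and}-gate — whereas the paper's restriction argument is self-contained, elementary, and parallels the subsequent Lemma 2.6 proof style (which also relies on restrictions), making the two lemmas read more uniformly. One small thing to tighten in a full write-up: when you flatten the top-level $\lor$-associativity into $\bigvee_j \psi_j$, you should say explicitly that the variable set of each $\psi_j$ is a union of complete pairs $\{y_i, z_i\}$ — this follows because a monotone read-once formula depends on every variable it reads, so a $\psi_j$ containing $y_i$ but not $z_i$ would depend on $y_i$ while the restricted target $\bigvee_{\{i : y_i, z_i \in \mathrm{vars}(\psi_j)\}}(y_i\land z_i)$ does not — and that this justifies the recursive invocation of the $\land$-root obstruction inside each $\psi_j$.
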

In particular, the only difference between two distinct optimal circuits for $\hat{f}$ is which binary tree of fanin 2 $\lor$ gates is used.

\begin{proof}
    Observe that $\psi$ must read all of its input variables and furthermore must do so positively. This is because (1) $f$ depends on all of its variables and is monotone in them, and (2) $\psi$ is normalized and thus no $\neg$ gates can appear internally in the circuit. We note that $\psi$ must use a total of $4n-1$ gates since $f$ is non-degenerate on $4n$ variables computable. We see that $\psi$ must contain at least $2n-1$ $\lor$ gates: substituting $z = 1^{2n}$ and simplifying yields a monotone read once formula computing $\OR_{2n}(y_1, \ldots, y_{2n})$ which must consist of $2n-1$ $\lor$ gates that appear in $\psi$. 
    
    We now argue that each $z_i$ feeds into an $\land$ gate. Assume otherwise, then observe that setting $z_i = 1$ and simplifying removes at least two gates (since $z_1 \lor p\equiv 1 \lor p \equiv 1)$. The resulting read once formula for $\hat{f}|_{z_i \leftarrow 1}$ uses at most $4n-3$ gates. This is a contradiction since $\hat{f}|_{z_i \leftarrow 1}$ still depends on all of its remaining $4n-1$ variables: it cannot be computed by a circuit with fewer than $4n-2$ gates.

    We now argue the other input to the $\land$ gate fed by $z_i$ is $y_i$. Assume otherwise. Notice that since $\psi$ is read-once, setting $z_i \leftarrow 0$ simplifying disconnects the other input to $\land$ gate and thus removes dependence on any variables it depends on. However, $\hat{f}|_{z_i \leftarrow 0}$ still depends on all of its variables besides $y_i$. Thus the $\land$ gate can only read $y_i$.

    Since each $z_i$ and $y_i$ feed a distinct $\land$ gate, there are at least $2n$ $\land$ gates. Since there are $4n-1$ total gates, and at least $2n-1$ $\lor$ gates, we know that these $\land$ gates are the only $\land$ gates that appear. Thus the remainder of the circuit is a binary tree of $2n-1$ $\lor$ gates whose $2n$ leaves are the $2n$ $\land$ gates. 
\end{proof}

Having established the structure of optimal circuits computing $\hat{f}$ we can further restrict its simple extensions. Extra restrictions to the truth table enforce that extension variables must be spliced into the circuit using $2n$ additional $\lor$ gates that each read a different $y_i$. This pairing of each $y_i$ with a different $x_j$ will define a permutation in which we can encode $\BPIS$ solutions.

\begin{lemma}\label{lem:rahul-structural-lemma}
    Let $g : \{0,1\}^{2n} \times \{0,1\}^{2n} \times \{0,1\}^{2n} \to \{0,1\}$ be any simple extension of $\hat{f}$ satisfying the following conditions:
    \begin{equation*}
        g(x,y,z) = 
            \begin{cases}
                \hat{f}(y,z) & \text{if } x = 0^{2n} \\
                \OR_{2n}(z_1, \ldots z_{2n}) & \text {if } x = 1^{2n} \\
                \OR_{4n}(x_1,\ldots x_{2n}, y_1, \ldots y_{2n}) & \text {if } z = 1^{2n} \\
                0 & \text {if } z = 0^{2n}
            \end{cases}
    \end{equation*}
    If $\phi$ is an optimal normalized formula computing $g$ then there exists a permutation $\pi : [2n] \to [2n]$ such that $\phi$ equals, as a formula, $\bigvee_{i \in [2n]}((x_{\pi(i)} \lor y_i) \land z_i)$.
\end{lemma}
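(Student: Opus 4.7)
The plan is to first restrict $\phi$ along the key $x=0^{2n}$ to recover the base structure given by Lemma \ref{lem:rahul-base-lemma}, and then use the structural machinery for simple extensions (the $Y$-tree decomposition from Appendix \ref{sec:SE-tools}) together with the four restriction hypotheses to pin down how each extension variable $x_j$ is inserted. Because $g$ is a simple extension and $\phi$ is an optimal normalized formula for $g$, substituting $x \leftarrow 0^{2n}$ and propagating constants yields an optimal normalized formula for $\hat f$; by Lemma \ref{lem:rahul-base-lemma} that reduced formula equals $\bigvee_{i \in [2n]} (y_i \land z_i)$, so $\phi$ contains a visible skeleton of $2n$ $\land$-gates (one per pair $(y_i, z_i)$) together with a binary tree of $2n-1$ $\lor$-gates above them.

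Next I would invoke the decomposition theorem for simple extensions: the $2n$ added variables $x_1, \dots, x_{2n}$ group into $k$ isolated read-once sub-formulas $T_1, \dots, T_k$, each attached to the rest of $\phi$ by a single combiner gate $\gamma_i$. Counting the $2n$ extra gates of $\phi$ gives $k + \sum_i (|T_i|-1) = 2n$, so $\sum_i |T_i| = 2n$ and the trees cover every $x_j$ exactly once. Two quick observations then restrict the combiners to $\lor$-gates sitting on input wires of the skeleton's $\land$-gates. First, each $T_i$ evaluates to $0$ under $x=0^{2n}$ (a read-once $\{\land,\lor\}$-formula on all-zeros is $0$); if any $\gamma_i$ were an $\land$-gate, propagation of that $0$ would knock out strictly more than $|T_i|+1$ gates, contradicting the exact extra-gate count. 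Thus every combiner is an $\lor$-gate. Second, each $T_i$ evaluates to $1$ under $x=1^{2n}$; if any combiner $\gamma_i$ sat on a wire at or above the top $\lor$-tree, evaluating $(x,z)=(1^{2n},0^{2n})$ would force $\gamma_i=1$ to propagate to the output, contradicting $g|_{z=0^{2n}}\equiv 0$. The only surviving positions are the two input wires of some skeleton $\land$-gate.

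A short case split, guided by $g|_{x=1^{2n}} = \OR_{2n}(z)$, finishes the structural description. Placing a combiner on the $z_i$-input of the $(y_i,z_i)$ $\land$-gate would yield $y_i \land (z_i \lor T_i)$, which under $x=1^{2n}$ becomes $y_i$, making $g|_{x=1^{2n}}$ depend on $y_i$ --- contradiction. So every combiner sits on a $y_i$-input, producing $(y_i \lor T_i) \land z_i$. Finally, if some $y_i$ carried no combiner, the untouched term $y_i \land z_i$ would make $g|_{x=1^{2n}}$ depend on $y_i$, again contradicting $\OR_{2n}(z)$. Every $y_i$ therefore hosts at least one combiner; combined with $\sum|T_i| = 2n$ and $k \geq 2n$, we conclude $k=2n$ and $|T_i|=1$ for all $i$, with exactly one combiner per $y_i$. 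The map $\pi(i)=j$ where $T_i = \{x_j\}$ is the desired bijection, and $\phi = \bigvee_{i \in [2n]}\bigl((x_{\pi(i)} \lor y_i) \land z_i\bigr)$ as a formula.

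The main obstacle is the bridging step in the second paragraph: certifying both the tree-count identity $\sum |T_i|=2n$ and the ``each eliminated bit removes exactly one gate'' property requires careful invocation of the all-stops restriction lemma from the simple-extensions framework. Once combiners have been localized to input wires of the base-formula $\land$-gates, the semantic case analysis is entirely mechanical.
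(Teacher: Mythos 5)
Your proposal is essentially correct in outline but takes a genuinely different route from the paper, and has one real gap.

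\textbf{Different route.} The paper's proof is self-contained and elementary: it directly counts $\lor$ and $\land$ gates ($4n-1$ and $2n$ respectively, from the restrictions at $x=0^{2n}$ and $z=1^{2n}$), identifies the $2n$ extra $\lor$ gates that disappear when restricting $x=0^{2n}$, and then uses the four boundary restrictions to pin down exactly where those $\lor$ gates can sit. You instead invoke the full $Y$-tree decomposition machinery of Appendix~\ref{sec:SE-tools} (Theorem~\ref{thm:y-tree-decomposition}), treating the $x$'s as extension variables (correctly), then localize the combiners semantically. This buys systematicity --- the decomposition theorem immediately gives you the isolated read-once trees, the combiner gates, and the exact added-gate count $\sum_i |T_i| = 2n$ --- at the cost of reaching forward into an appendix the paper treats as independent of this section. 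That is logically fine, but it is heavier machinery than the paper deploys; the direct argument is also a useful warm-up for Appendix~\ref{sec:SE-tools} rather than a consequence of it. (A small consistency note: one should first observe that $g$ is non-degenerate on $6n$ variables with $CC(g)=6n-1$, so every optimal \emph{circuit} for $g$ is a read-once formula; this is what lets you apply the circuit-level decomposition theorem to the optimal formula $\phi$ without loss.)

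\textbf{The gap.} Your key semantic facts --- ``each $T_i$ evaluates to $0$ under $x=0^{2n}$'' and ``each $T_i$ evaluates to $1$ under $x=1^{2n}$'' --- silently assume that the extension variables $x_j$ are read \emph{positively} in $\phi$. The $Y$-tree definition does not forbid negated leaves, and in a normalized formula $T_i$ could be e.g.\ $\lnot x_j$, which evaluates to $1$ under $x=0^{2n}$ and kills your combiner-type argument. Positivity must be established first, and the only hypothesis that gives it is $g(x,y,1^{2n}) = \OR_{4n}(x_1,\dots,x_{2n},y_1,\dots,y_{2n})$ --- the one restriction your proof never invokes. The paper makes this step explicit (``all $x$ variables must be read positively, since $g(x,y,1^{2n})=\OR_{4n}(\dots)$''). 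Adding that observation before the second paragraph closes the gap. A secondary, more cosmetic issue: in ruling out $\land$ combiners, the phrase ``would knock out strictly more than $|T_i|+1$ gates'' does not cover the degenerate case where the $\lnot b$ child is a single input leaf --- there no extra gate is removed, but the restriction disconnects a $y$ or $z$ variable on which $\hat f$ depends, which is also a contradiction. Either framing works; just make sure the statement you rely on is the one you proved.
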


\begin{proof}
    Since $g(x,y,z) = \hat{f}(y,z)$ when $x = 0^{2n}$, we know that $\phi$ must read all $y$ and $z$ variables positively. Similarly, all $x$ variables must be read positively, since $g(x,y,1^{2n}) = \OR_{4n}(x_1,\ldots x_{2n}, y_1, \ldots y_{2n})$. Note that these restrictions also imply that $\phi$ contains exactly $4n - 1$ $\lor$ gates and $2n$ $\land$ gates since substituting and simplifying yields optimal formulas for those restrictions. From the lemma above, we know know that when setting $x = 0^{2n}$ and simplifying, we obtain a circuit structurally equivalent to $\bigvee_{i \in [2n]}(y_i \land z_i)$. Therefore $2n$ $\lor$ gates must be removed during simplification. These $\lor$ gates cannot feed any remaining $\lor$ above the $\land$ gates, since otherwise setting $x = 1^{2n}$ would fix the circuit to be $1$, rather than $\OR_{2n}(z_1, \ldots z_{2n})$. Similarly, $z_i$ cannot feed any of these $\lor$ gates, as setting $x = 1^{2n}$ would remove dependence on that $z_i$ since the circuit is a read-once formula. Thus each $\lor$ gate can only depend on the $x$ and $y$ variables. Observe that each $y_i$ must feed into one of these $\lor$ gates instead of the $\land$ gate fed by $z_i$, as otherwise when we set $x = 1^{2n}$, the function would still depend on $y_i$. Since there are exactly $2n$ additional $\lor$ gates, and exactly $2n$ $y$ and $2n$ $x$ variables, its easy to see that each additional $\lor$ gate must read one $x_i$ and one $y_j$. Therefore, as a formula, $\phi$ must be $\bigvee_{i \in [2n]}((x_{\pi(i) \lor y_i}) \land z_i)$ for some permutation $\pi$.
\end{proof}

\begin{table}[t]
\centering
\caption{$\BPIS$ requirements and the corresponding restrictions on $\hat{g}$}
\label{tab:BPIS-to-SEP}
\renewcommand\tabularxcolumn[1]{m{#1}}
\renewcommand{\arraystretch}{1.2}
\begin{tabularx}{\textwidth}{l|l|X}
$\BPIS$ Requirement on $\sigma$ & Corresponding $\hat{g}$ Restriction & Impact If $\pi$ Violates  \\ \hline
$\sigma(\{1, \ldots, n\}) = \{1, \ldots, n\} $& \pbox{\textwidth}{$\OR_n(x_1, \ldots, x_n)$ when \\ $z = 1^n0^n$ and $y = 0^{2n}$} &  If $\pi(i) = j \geq n$, then $z_j \leftarrow 0$ removes $x_i$ in $C_\pi$\\ \hline
$\begin{aligned} \sigma(\{&n+1, \ldots, 2n\})\\ &= \{n+1, \ldots, 2n\}\end{aligned}$& \pbox{\textwidth}{$\OR_n(x_{n+1}, \ldots x_{2n})$ when \\ $z = 0^n1^n$ and $y = 0^{2n}$}& As above, $C_\pi\hook_{z_j \leftarrow 0}$ will not depend on $x_i$ \\ \hline
\pbox{\textwidth}{If $((j, k), (j',k')) \in E$ then \\ $\sigma(j) \neq k$ or $\sigma(n+j') \neq \sigma(n + k')$}& \pbox{\textwidth}{$1$ if $(x, y, z) = (\overline{e_k}\overline{e_{k'}}, 0^{2n}, e_je_{j'})$ \\ where $\exists((j,k),(j',k')) \in E$} & $C_\pi$ wrongly outputs $0$
\end{tabularx}
\end{table}

\subparagraph*{An Explicit Reduction} We now provide an explicit reduction from $\BPIS$ to $\SEP{\hat{f}}^*$. Given an instance $G$ of $\BPIS$ we output $4n$ and the partial truth-table for a function $\hat{g}$ that is consistent with the requirements of Lemma \ref{lem:rahul-structural-lemma}. We add three additional restrictions (listed in Table \ref{tab:BPIS-to-SEP}) to ensure that any permutation $\pi$, whose corresponding circuit $C_\pi \equiv \bigvee_{i\in[2n]}\left((x_{\pi(i)}\lor y_i)\land z_i\right)$ is consistent with $\hat{g}$, \emph{is} also a solution for $G$ (and vice versa). All other rows of the truth table are left undefined (e.g. as $\star$). We summarize the requirements for any valid $\BPIS$ solution $\sigma$, the corresponding restriction, and how it enforces the requirement on $\pi$ in Table \ref{tab:BPIS-to-SEP}.

This completes the reduction as any $\sigma$ satisfying $\BPIS$ for $G$ can be used to construct a read-once circuit consistent with $\hat{g}$ and vice versa. The arguments verifying this are the same as in \cite{Ilango20}, and we refer the reader there for the full details.

\begin{lemma}
    $\BPIS$ reduces to $\SEP{\hat{f}}^*$ in $2^{O(n)}$ time.
\end{lemma}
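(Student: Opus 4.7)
The plan is to package the reduction already sketched in Table \ref{tab:BPIS-to-SEP} into an explicit construction and then verify both directions of the equivalence, leaning heavily on Lemma \ref{lem:rahul-structural-lemma}. Given a $\BPIS$ instance $G = (V,E)$ on vertex set $[n] \times [n]$, I would produce the output $(4n, tt(\hat{g}))$, where $\hat{g}$ is the partial function on $6n$ bits $(x,y,z) \in \{0,1\}^{2n} \times \{0,1\}^{2n} \times \{0,1\}^{2n}$ defined by the four case restrictions of Lemma \ref{lem:rahul-structural-lemma} together with the three additional constraints of Table \ref{tab:BPIS-to-SEP}; all remaining rows are left as $\star$. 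I would first check that these constraints are pairwise consistent (which amounts to noticing that each pair of rules specifies disjoint $z$ or $y$ patterns, or agrees on overlapping rows), so that $\hat{g}$ is a well-defined partial truth table.

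For the time bound, I would observe that $\hat{g}$ is a function on $6n$ bits, so its truth table has $2^{6n} = 2^{O(n)}$ entries. Each entry can be evaluated in $\poly(n)$ time by checking which of the (finitely many) rules of Lemma \ref{lem:rahul-structural-lemma} and Table \ref{tab:BPIS-to-SEP} apply to the given $(x,y,z)$, so the whole partial truth table is produced in $2^{O(n)}$ time, giving the stated bound.

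Next I would prove the two directions of the equivalence. For the \emph{completeness} direction, given a permutation $\sigma$ witnessing a $\BPIS$ solution for $G$, I would define $g := C_\sigma \equiv \bigvee_{i\in[2n]} ((x_{\sigma(i)} \lor y_i) \land z_i)$ and verify: (i) $CC(g) = CC(\hat{f}) + 2n$, since $C_\sigma$ uses $6n-1$ gates, matching the bound of Lemma \ref{lem:rahul-base-lemma} shifted by $2n$; (ii) $g$ is non-degenerate; and (iii) $g$ agrees with every defined row of $\hat{g}$. Consistency with the four rules of Lemma \ref{lem:rahul-structural-lemma} is by direct substitution into $C_\sigma$. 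Consistency with the three Table \ref{tab:BPIS-to-SEP} rules uses the $\BPIS$ conditions on $\sigma$ exactly as indicated in the rightmost column of the table (e.g.\ $\sigma([n]) = [n]$ ensures that under $z = 1^n 0^n,\ y = 0^{2n}$ the disjuncts with $i \leq n$ contribute $x_{\sigma(i)}$ for $\sigma(i) \in [n]$).

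For the \emph{soundness} direction, suppose some completion $g$ of $\hat{g}$ is a simple extension of $\hat{f}$. Since $g$ satisfies the hypotheses of Lemma \ref{lem:rahul-structural-lemma}, every optimal normalized formula for $g$ has the form $\bigvee_{i\in[2n]} ((x_{\pi(i)} \lor y_i) \land z_i)$ for some permutation $\pi$. I would then extract $\pi$ from any such formula and read off the three $\BPIS$ conditions from the Table \ref{tab:BPIS-to-SEP} rows: each Table row is constructed so that if $\pi$ violated the corresponding $\BPIS$ requirement, the circuit $C_\pi$ would evaluate incorrectly on the designated input (as the ``Impact If $\pi$ Violates'' column records), contradicting the fact that $C_\pi$ computes a completion of $\hat{g}$. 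This hands us a $\BPIS$ solution $\sigma := \pi$ for $G$. The main obstacle in this plan is the careful case analysis required for the soundness direction --- in particular, checking that the edge-constraint row of Table \ref{tab:BPIS-to-SEP} really does force $C_\pi$ to output $0$ on $(\overline{e_k}\,\overline{e_{k'}}, 0^{2n}, e_j e_{j'})$ exactly when $\pi(j) = k$ and $\pi(n+j') = n + k'$ --- but this follows by a mechanical evaluation of $C_\pi$ on the indicated inputs and matches the corresponding computation in \cite{Ilango20}, to which I would defer for the last routine details.
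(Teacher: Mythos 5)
Your proposal is correct and follows the same route as the paper: output $(4n, tt(\hat{g}))$ where $\hat{g}$ is the $6n$-bit partial function defined by the four conditions of Lemma~\ref{lem:rahul-structural-lemma} and the three rows of Table~\ref{tab:BPIS-to-SEP}, with everything else left as~$\star$; the $2^{O(n)}$ bound is immediate from the $2^{6n}$-row truth table being fillable in $\poly(n)$ time per entry; and both directions of the equivalence go through Lemma~\ref{lem:rahul-structural-lemma} exactly as you describe, with the routine case analysis deferred to Ilango's paper — which is what the paper itself does.
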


\subparagraph*{The Original Framing} In its introduction, \cite{Ilango20} identifies $
\hat{g}$ as a simple extension of $\OR_{4n}$. Under this lens, the hardness comes from determining which $\OR_{4n}$ base circuit can have the $z$ extension variables added. The additional truth-table restrictions on $\hat{g}$ force each $z_i$ to be spliced in a particular way adjacent to $y_i$. Assuming ETH, there are $\Omega(n!)$ optimal base $\OR_{4n}$ circuits that must be checked via brute-force.

\subparagraph*{Implicit Circuit Lower Bounds \& Enumeration of Optimal Circuits} From both presentations, we see that leveraging $\SEP{f}$ involves explicit circuit size lower bounds. Indeed, both Lemma \ref{lem:rahul-base-lemma} and Lemma \ref{lem:rahul-structural-lemma} prove formula lower bounds for specific non-degenerate functions. However, in a sense, circuit lower bounds are intrinsic to the reduction itself. This connection can be made rigorous: the reduction can used to produce explicit Boolean functions which enjoy non-vacuous lower bounds. On \texttt{no} instance of $\BPIS$, the reduction outputs a partial truth table where every completion is non-degenerate but \emph{not} a simple extension. Hence, the circuit complexity of these completions is not the vacuous $6n-1$ lower bound obtained by knowing that functions produced are non-degenerate.

Furthermore, the reduction did not solely rely on the circuit complexity of $\hat{f}$ and $\hat{g}$. Lemmas \ref{lem:rahul-base-lemma} and \ref{lem:rahul-structural-lemma} \emph{tightly control} how base circuits and their extensions can be arranged; and this is pivotal for encoding $\BPIS$ permutations. This structural requirement can be formalized by observing the reduction is also an efficient \emph{Levin reduction}.

Recall, from \cite{DBLP:conf/coco/MazorP24a}, that a Levin reduction is a many-one reduction that also efficiently maps witnesses, not just problem instances. More precisely, let $R$ be a set of ordered pairs $(x,w)$ where $x$ is a yes-instance of a problem and $w$ is an accompanying certificate. We define $L_R$, the language defined by $R$, to be the set of elements $x$ such that $(x,w) \in R$ for some $w$. Then a Levin reduction between two languages $A$ and $B$ is an efficient many-one reduction $r$ between problem instances paired with two efficient mappings $m, \ell$ between instance-witness pairs that satisfy (1) if $(x,w) \in R_A$ then $(r(x),m(x,w)) \in R_B$ and (2) $(t(x), w) \in R_B$ implies $(x, \ell(x,w)) \in R_A$. 

For $\BPIS$, the witnesses for an instance are simply the valid permutations $\sigma$ and witnesses for $\SEP{\hat{f}}$ are optimal circuits computing the extension. Let $\cR_{\BPIS}$ and $\cR_{\SEP{\hat{f}}^*}$ be the sets of ordered pairs consisting of problem instances and all of their witnesses as described. The reduction admits linear time mappings between witnesses: given $\sigma$, simply construct $\bigvee_{i \in [2n]}((x_{\sigma(i)} \lor y_i) \land z_i)$ and given a circuit for $\hat{g}$, simply read off the permutation from the $x$ variables.
\clearpage

\bibliography{references}
\clearpage

\appendix
\section{Circuits}\label{sec:Circuits}
In this section, we precisely define Boolean circuit complexity and accompanying notions related to the ``shapes'' and ``parts'' of circuits.  These definitions are straightforward, but the details matter because we are working with \emph{general} circuits and functions of \emph{linear} size complexity.  We cannot, for example, afford to layer circuits or put them into negation normal form (NNF, all $\NOT$-gates on the inputs) because size-optimal circuits do not (in general) obey these restrictions.  In particular, optimal circuits for $\XOR$ are neither layered nor in NNF (Theorem \ref{thm:XOR-structure}).  We will require normal forms that are \emph{free} to impose (Lemma \ref{lem:normalization}).

Define general circuits over the basis $\cB = \{\land, \lor, \neg, 0, 1 \}$ of Boolean functions: binary $\land$ and $\lor$, unary $\neg$ and zero-ary (constants) $1$ and $0$. Circuits take zero-ary variables in $X = \{x_1, x_2, \dots , x_n\}$ and $Y=\{y_1, y_2, \ldots y_m\}$ for some fixed $n$ and $m$ as inputs. Throughout this work, we use the standard formulation of circuits consisting of single-sink and multi-source DAGs where internal nodes, called gates, are labeled by function symbols, sources are labeled by input variables, and edges as ``wires'' between the gates. We write $V_C$ and $E_C$ to denote the set of nodes and the set of edges of a circuit $C$ respectively, omitting the subscript when it is clear which circuit is being referenced. Circuits compute Boolean functions through \emph{substitution} followed by \emph{evaluation}. 

An \emph{assignment} $\rho$ of input variables is a mapping from the set of inputs to $\{0,1\}$. To substitute into a circuit according to an assignment $\rho$, each input $x_i$ is replaced by the constant $\rho(x_i)$. For a circuit $G$ and assignment $\rho$ we write $G\hook_\rho$ to denote the circuit obtained by substituting into $G$ according to $\rho$. To evaluate a circuit, the values of interior nodes labeled by function symbols are computed in increasing topological order. Each gate's value is obtained by applying it's function to the value of the node's incoming wires. The output of the circuit overall is the value of its sink. 

Let $\cF_n$ be the family of Boolean functions on $n$ variables. We say a circuit $G$ on $n$ variables computes $g \in \cF_n$ if for all $\alpha \in \{0,1\}^n$, $G(\alpha) = g(\alpha)$. The size of a circuit $G$ is denoted $|G|$ and $CC(g)$, the circuit complexity of a Boolean function $g$, is the minimum size of any circuit computing $g$. We study two notions of circuit size, $\mu_{\cD}$ and $\mu_{\cR}$, which count the number of binary gates and the total number of gates respectively. We decorate any notation with $\cD$ or $\cR$ (e.g. $CC(G)^\cD$) to denote $\mu_D$ and $\mu_R$ whenever necessary, but will omit if the measure is clear from context or if a statement holds for both measures. We say a circuit $G$ computing $g$ is optimal if $|G| = CC(g)$ and, in $\cD$, require that the circuit is \emph{normalized}, i.e. does not contain double-negations and every non-$\neg$ node feeds at most one $\neg$ gate.

We will often order the nodes of a circuit by \emph{depth}, the maximum number of binary gates on any path from the node to the output. We sort a circuits nodes in \emph{decreasing} order by depth, i.e. $\texttt{depth}(u) > \texttt{depth}(v)$ implies $u$ appears before $v$ in our ordering, breaking ties arbitrarily. Such orderings are efficiently realizable as depth is efficiently computable. To ensure we correctly account just binary gates for the depth, we first assign the outgoing edges of each binary gate and negation gate with weight $-1$ and $0$ respectively. Then, we simply take the circuit's underlying DAG, reverse its edges, and compute the shortest path from the output node to every other node in linear time \cite{CormenLRS}.

We will sometimes work closely with specific parts of circuits. To this end we formally define a subcircuit rooted at a vertex $\alpha$.
\begin{definition}[Induced Subcircuit rooted at $\alpha$] 
Let $C = (V, E)$ be a circuit and let $\alpha \in V$. Let $P_{\alpha} = \{b \in V | \exists \text{ a path from } \beta \text{ to } \alpha \}$. The induced subcircuit rooted at $\alpha$ in $C$, denoted $C[\alpha]$, is the vertex-induced subgraph of $C$ whose vertex set is $P_\alpha$ and whose edge set consists of the edges in $E_C$ whose endpoints are nodes in $P_\alpha$.
\end{definition}
\section{Gate Elimination}
\label{sec:neo-gate-elimination}

Most of the proofs in this paper involve arguments by gate elimination: we take a circuit, substitute a subset of inputs with constants, and then eliminate gates according to a set of basic simplification rules. This is illustrated in Figure \ref{fig:gate-elim-example}.

\begin{figure}[H]
    \centering
    \includegraphics[]{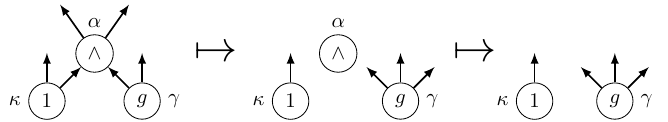}
    \caption{An example of applying the passing simplification $1 \land \gamma \to \gamma$. Notice that $\gamma$ inherits the fanout from $\alpha$, and $\alpha$ can then be garbage collected.}
    \label{fig:gate-elim-example}
\end{figure}

Towards algorithms for the $f$-Simple Extension Problem, we fix a concrete encoding of these circuit-manipulation steps.  Gates $\gamma$ are identified by natural numbers.  We have one book-keeping manipulation: \emph{garabage collection} deletes a gate $\gamma$ with no outgoing wires; i.e, fanout($\gamma$) = 0.  A garbage collection step is encoded by $\langle \textsc{Gc} ,~ \gamma \in \mathbb{N} \rangle$.

The main \emph{gate elimination} rules are organized into four categories in Table \ref{tab:rules}, according to how they impact fanout. All these rules (i) eliminate at least one logic gate and (ii) have no effect on the function computed by a circuit $C$, because they are Boolean identities.

    \begin{table}[t]
    \centering
    \caption{Gate elimination rules, categorized by their impact on fanout}
    \label{tab:rules}
    \newcolumntype{Y}{>{\raggedright\arraybackslash}X}
    \renewcommand\tabularxcolumn[1]{m{#1}}
    \renewcommand{\arraystretch}{1.2}
    \begin{tabularx}{\textwidth}{YYYY}
    Fixing & Passing & Resolving & Pruning \\
    $0 \land \gamma \to 0$ & $1 \land \gamma \to \gamma$ & $\gamma \land \neg \gamma \to 0$ & $\gamma \land \gamma \to \gamma$ \\
    $\gamma \land 0 \to 0$ & $\gamma \land 1 \to \gamma$ & $\neg \gamma \land \gamma \to 0$ & \\
    $1 \lor \gamma \to 1$ & $0 \lor \gamma \to \gamma$ & $\gamma \lor \neg \gamma \to 1$ & $\gamma \lor \gamma \to \gamma$ \\
    $\gamma \lor 1 \to 1$ & $\gamma \lor 0 \to \gamma$ & $\neg \gamma \lor \gamma \to 1$ & \\
    $\neg 0 \to 1$ & & & $\neg \neg \gamma \to \neg$ \\
    $\neg 1 \to 0$ & & &
    \end{tabularx}
  \end{table}
  
  A gate elimination step is encoded by the tuple $\langle \textsc{Ge} ,~ r \in 17 ,~ b : \mathbb{N} \to 7 \rangle$ where $r$ identifies a rule and $b$ is a ``binding'' that maps gates of $C$ to gates of the circuit \emph{pattern} on the left hand side of rule $r$.  Each pattern is a well-formed Boolean formula and thus has a \emph{main connective} $\alpha$ in the standard sense. To apply a rule, examine the right hand side: if it is
  \begin{itemize}
  \item \textbf{a constant $c \in \{0,1\}$:} delete all incoming wires to $\alpha$ and change the type of $\alpha$ to $c$.
  \item \textbf{a matched node (i.e., $\gamma$):} redirect all wires reading $\alpha$ to read from $\gamma$ instead.
  \end{itemize}

  Note that the identifier of $\alpha$ \textbf{never} changes; only its type and incident wires.  Depending on the initial fanout of each gate involved in the pattern, applying a rule could totally disconnect some gate(s) and leave the circuit in a state that needs garbage collection.  We introduce notation for applying sequences of these steps to circuits, and define composed circuit manipulations that are ``well behaved'' with respect to specific circuits.
  
\begin{definition}
  \label{def:simp}
  A \emph{simplification} is a sequence of gate elimination and garbage collection steps.  If $\lambda$ is a simplification and $C$ is a circuit, write $\lambda(C)$ to denote the new circuit obtained by applying each step of $\lambda$ to $C$ in order.  If a single step of $\lambda$ fails to apply in $C$, then fix $\lambda(C) = C$ so that invalid $\lambda$ for $C$ are idempotent by convention.  A simplification $\lambda$ is called
    \begin{itemize}
    \item \emph{terminal} for $C$ if, for every $\lambda'$ that extends $\lambda$ by one additional gate elimination or garbage collection step, $\lambda(C) = \lambda'(C)$, and/or

    \item\emph{layered} for $C$ if the depth of \textbf{binary} gates binding to $\alpha$ in each step is non-decreasing.
    \end{itemize}
  \end{definition}

  Simplifications formalize every sequence of circuit-manipulation steps except for substitution.  They change the function computed by a circuit if and only if some input gate is garbage-collected.  Simplifications suffice to impose convenient structural properties on arbitrary circuits.

  \begin{definition}
    \label{def:normal-ckt}
    A circuit $C$ is \emph{normalized} or \emph{in normal form} if
    \begin{enumerate}
    \item $C$ is constant-free \textbf{or} $C$ is a single constant,
    \item every gate in $C$ has a path to the output, and
    \item no sub-circuit of $C$ matches the left hand side of a gate elimination rule.
    \end{enumerate}
  \end{definition}

  Any terminal simplification suffices to put a circuit $C$ in normal form, and it straightforward to see that the number of constants in $C$ lower-bounds the number of eliminated gates.  We require more: every circuit $C$ can be normalized by a \emph{layered} simplification, and the number of eliminated gates is lower-bounded by the \emph{cumulative fanout} of constants in $C$.

\begin{table}[t]
\centering
\caption{Changes to fanout after each type of gate elimination step.}
\label{tab:fanout-updates}
\renewcommand\tabularxcolumn[1]{m{#1}}
\renewcommand{\arraystretch}{1.2}
\begin{tabularx}{\textwidth}{llll}
Fixing & Passing & Resolving & Pruning \\
$\fo'(\kappa) = \fo(\kappa) - 1$ & $\fo'(\kappa) = \fo(\kappa) - 1$ & $\fo'(\alpha') = \fo(\alpha)$ & $\fo'(\gamma) \geq \fo(\gamma) + \fo(\alpha) - 2$ \\
$\fo'(\gamma)=\fo(\gamma)-1$ & $\fo'(\gamma) = \fo(\gamma) + \fo(\alpha) -1$ & $\fo'(\gamma) = \fo(\gamma) - 1$ & $\fo'(\alpha) = 0$ \\
$\fo'(\alpha) = \fo(\alpha)$ & $\fo'(\alpha) = 0$ & $\fo'(\neg) = \fo(\neg) - 1$ & $\fo'(\neg) = \fo(\neg) - 1$
\end{tabularx}
\end{table}

\begin{lemma}
  \label{lem:normalization}
  For any well-formed circuit $C$ with $q$ constants that have total fanout $\ell$, there is a terminal and layered simplification $\lambda$ such that $\lambda(C)$ is a single constant \textbf{or} $|\lambda(C)| \leq |C| - \ell$.
\end{lemma}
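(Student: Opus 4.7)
The plan is to construct $\lambda$ by simulating constant-propagation in $C$ and then bound $|\lambda(C)|$ via a token-charging argument. Place one token on each of the $\ell$ outgoing wires from any constant in $C$. Processing a token on an edge into a binary gate $\alpha$ triggers the applicable fixing or passing rule, eliminating $\alpha$; if fixing applies, $\alpha$'s position becomes a new constant inheriting $\alpha$'s fanout (as per Table~\ref{tab:fanout-updates}), and one new token is deposited on each of those inherited wires. Processing a token on an edge into a $\neg$ gate triggers $\neg c \to \bar{c}$, again spawning new tokens from the fanout of the resulting constant. Iterate until no constant feeds any gate, then append any leftover resolving/pruning and garbage-collection steps to force $\lambda$ to be terminal.

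For the size bound, each such step eliminates exactly one gate. In the Red'kin basis where every gate contributes to size, the total number of steps is $\ell$ plus the sum of fanouts of the newly spawned constants --- hence at least $\ell$ --- so $|\lambda(C)|^{\cR} \leq |C|^{\cR} - \ell$. In the DeMorgan basis only the elimination of binary gates contributes, while the $\neg c \to \bar{c}$ rule does not. To handle this, I trace each initial token along its propagation: either (a) the token is eventually consumed at a binary gate, yielding at least one binary elimination (with fanout amplification at intermediate $\neg$ gates only increasing the count), or (b) it passes through only $\neg$ gates until it reaches the output sink, at which point the output node becomes a constant, every other node is disconnected, and garbage collection leaves $\lambda(C)$ as a single constant --- satisfying the ``or'' clause of the conclusion.

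For layeredness, observe that the selected set of simplification steps forms a partial order of dependencies: step $s$ depends on step $s'$ only when $s$ rewrites a gate whose type or fanout was previously altered by $s'$. Because gate-elimination rules are local and act on disjoint neighborhoods commute, any linearization of this poset yields the same final circuit (up to the order in which orphans are garbage collected), so I linearize by the depth of the binary gate $\alpha$ bound in each step to meet the layered condition in Definition~\ref{def:simp}.

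The main obstacle is the DeMorgan accounting. Fanout amplification at $\neg$ gates can turn a single initial token into many downstream binary eliminations, while a chain of $\neg$s reaching the output can cause a token to vanish without contributing any binary elimination. I would need to argue carefully that whenever \emph{any} single initial token reaches the output sink via $\neg$s alone, the terminal simplification collapses the entire circuit to one constant regardless of other parallel cascades --- so that the two clauses of the lemma are never both falsified simultaneously. The remaining technical care goes into ensuring that the dependency poset above is well-defined even when fixing converts a gate to a new constant that then spawns tokens further into the circuit, and that linearization in depth order does not reintroduce applicability of a rule whose pattern was destroyed by a later step.
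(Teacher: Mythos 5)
Your high-level construction --- propagate constants bottom-up, spawn new constants at fixing steps, garbage-collect afterwards, order by depth --- matches the paper's. The gap is in the size accounting. You argue ``trace each initial token; it either yields at least one binary elimination or reaches the output via $\neg$-gates alone,'' but tokens can \emph{evaporate} without tracing to any elimination: when both inputs of a binary gate $\alpha$ are constants, the fixing rule consumes one token while the other input wire is deleted outright, so the second token dies producing nothing. Your per-token case split never accounts for this. The bound is nonetheless true, but the correct invariant is aggregate, not per-token: let $\phi_t$ be the cumulative fanout of all constants in the circuit after $t$ steps and $\mu_t$ the number of eliminated (binary, in $\cD$) gates. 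Checking each rule against Table~\ref{tab:fanout-updates}, and using $\fo(\alpha) \geq 1$ for any live main connective, shows $\mu_t + \phi_t$ never decreases. Since $\phi_0 = \ell$ and termination forces $\phi = 0$ unless the circuit has collapsed to a constant, $\mu \geq \ell$ follows immediately. This one invariant subsumes both the $\neg$-chain-to-output subcase you flagged and the token-merging subcase you did not.

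You also flag, but do not resolve, the layeredness/dependency concern, and this too is handled more directly than your poset-linearization plan. The paper never constructs a dependency poset of all planned steps: it greedily picks the rule whose main connective $\alpha$ has maximal depth at each stage. Two consequences come for free. First, the resulting simplification is layered by construction, since any rule application only creates new applicable patterns with strictly shallower main connectives. Second, garbage collection never deletes a gate reading a constant --- such a gate would be an unfired main connective at greater depth than $\alpha$, contradicting the greedy choice --- so GC steps never decrease $\phi_t$ and the potential argument survives them. Your worry about fixing rules creating constants deeper in the processing queue and about linearization ``reintroducing applicability'' both dissolve under this greedy-by-depth framing; you should adopt it rather than try to prove commutativity of steps acting on overlapping neighborhoods.
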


\begin{proof}
  We update two potential functions as $C$ is simplified: $\phi_t$, the cumulative fanout of constants in $C$ and $\mu_t$, the number of binary gates eliminated after $t$ manipulations. Since a circuit is normalized only when it is a single constant or $\phi_t = 0$, it suffices to exhibit a terminal layered simplification $\lambda$ such that $\mu_t \geq \ell$ if $\phi_t = 0$.
  
  Sort the gates of $C$ by depth (breaking ties arbitrarily) to fix a total order on gates.  This ordering remains consistent throughout the entirety of simplification.  Constructing $\lambda$ is straightforward: apply gate elimination rules in depth-order (by $\alpha$) from the input gates of the circuit ``up'', garbage collecting any disconnected gate(s) immediately afterwards so only gates with a path to the output remain.
  
  To update $\phi_t$, we enumerate how fanout changes for the possible gates that may appear in each rule: the main connective $\alpha$, a matched node $\gamma$, a constant $\kappa$, and a negation gate $\neg$.  Write $\fo(\cdot)$ for the fanout of a gate before rule application, and $\fo'(\cdot)$ for the fanout of a gate afterwards.  For example, in Figure \ref{fig:gate-elim-example}, we have $\fo'(\alpha') = 0 ,~ \fo'(\kappa) = \fo(\kappa) - 1$, and $\fo'(\gamma) = \fo(\gamma) + \fo(\alpha) - 1$.  We display the fanout-updates for each rule type in Table \ref{tab:fanout-updates}.  

  Because $\alpha$ has a path to the output, $\fo(\alpha) \geq 1$.  Thus, each rule application reduces $\phi_t$ by at most $1$, even when $\gamma$ matches a constant (as in Figure \ref{fig:fix-with-constant-example}).  Any reduction in $\phi_t$ by $1$ is accompanied by a corresponding increase in $\mu_t$. Furthermore, observe that any subsequent garbage collection steps only increase $\mu$ and never reduce the $\phi$. Any $\beta$ that is garbage collected cannot read a constant: $\beta$ would be the main connective for a rule application lower in the circuit than $\alpha$. Hence if $\phi_t = 0$, we must have $\mu_t \geq \ell$.

Lastly, always choosing the depth-maximal $\alpha$ guarantees that our simplification is layered. This follows from the fact that rule applications only introduce new opportunities for gate elimination whose main connectives are shallower in the circuit.
\end{proof}

\begin{figure}[b]
    \centering
    \includegraphics[]{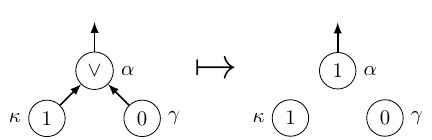}
    \caption{An application of a pruning rule where $\gamma$ itself is a constant. Since $\fo(\alpha) \geq 1$, we have $\phi_{t+1} = \phi_t + \fo'(\alpha) - 2 = \phi_t + \fo(\alpha) - 2 \geq \phi_t + 1 - 2 = \phi_t - 1.$}
    \label{fig:fix-with-constant-example}
\end{figure}

To finish capturing circuit manipulation, we encode a variable substitution step by the tuple $\langle\textsc{Sub} ,~ v \in \{x, y\} ,~ i \in \mathbb{N} ,~ c \in \{0, 1\} \rangle$ where $v$ identifies the target set of variables (base $x$'s or extension $y$'s), $i$ gives the variable number, and $c$ is the value to be substituted.  Apply a single substitution $\langle\textsc{Sub} ,~ v ,~ i ,~ c \rangle$ and to the circuit $C$ by changing the type of every $v_i$-gate $\gamma$ to the given constant $c$ while \textbf{leaving the identifier of each $\gamma$ intact,} exactly as in the application of a fixing rule.

Substitution \emph{always} changes the function computed by $C$, restricting its domain to a subcube.  Additionally, $C$ remains well-formed after any substitution, because input gates have fanin $0$.  However, $C$ is \emph{never} in normal form immediately after a substitution because it introduces a constant.  So we generalize simplifications by allowing for substitution steps.

\begin{definition}
  A \emph{restriction} is a sequence of substitution, gate elimination, and garbage collection steps.  Restrictions generalize simplifications, so we extend notation and conventions accordingly:  $\rho(C)$ denotes the result of applying restriction $\rho$ to circuit $C$, and invalid restrictions for $C$ are idempotent by convention.
\end{definition}

The categories of \emph{terminal} and \emph{layered}
simplifications apply immediately to restrictions,  because they explicitly reference only gate elimination or garbage collection steps.  Thus, restrictions are terminal and/or layered only with respect to how they simplify and \textbf{not} how they substitute.  When applying a restriction $\rho$ to a Boolean function instead of a circuit, we simply ignore the additional simplification steps (if any).
\section{Structural Properties of Optimal Simple Extension Circuits}
\label{sec:SE-tools}

In this section we characterize the structure of optimal simple extension circuits.  The main result is a \emph{$Y$-trees decomposition} (Theorem \ref{thm:y-tree-decomposition}), summarized below.

\begin{theorem*}[Informal]
In any optimal circuit for a simple extension, extension variables occur in isolated read-once formulas \emph{($Y$-trees)} that depend only on extension variables.
\end{theorem*}

\begin{remark} From this point onward, we exclusively consider the DeMorgan basis $\cD$. The same arguments show an identical decomposition theorem in $\cR$. In fact, $\cR$ is simpler. Negations cannot be involved in any of our restrictions with keys, since Lemma \ref{lem:normalization} removes at least $m$ \emph{binary gates} from the circuit and no simplification steps introduce negations. Therefore, the removal of any negations would surpass the budgeted additional $m$ gates in our complexity measure. Consequently, each $Y$-tree in an optimal $\cR$ simple extension circuit is not only a read-once formula, it is also monotone.
\end{remark}

\subsection{Restrictions With ``Speed Limits'' on Gate Elimination}

First we construct restrictions that are guaranteed to eliminate the gates of an optimal simple extension circuit as ``slowly'' as possible: exactly one costly logic gate is eliminated for each substituted variable.  Structural information is then obtained by ``watching'' these slow restrictions operate.  We begin by establishing basic properties of optimal simple extension circuits.  First, combining normalization with the requirement that simple extensions are non-degenerate functions yields

\begin{corollary}[Extension Variables are Read-Once]\label{cor:extension-read-once}
    In any optimal circuit for a simple extension, the fanout of each extension variable is exactly one. Furthermore, if an extension is negated, this negation has fanout exactly one.
\end{corollary}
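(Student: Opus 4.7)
The strategy is to substitute the key into an optimal simple extension circuit and use the normalization lemma (Lemma~\ref{lem:normalization}) to bound how many binary gates can be removed. Let $G$ be an optimal circuit for a simple extension $g$ of $f$, with key $k \in \bool^m$. Since $G\hook_{y \leftarrow k}$ computes $f$, any normalization of it must contain at least $CC(f) = CC(g) - m$ binary gates, so in $\cD$ normalization can remove at most $m$ binary gates from $G\hook_{y \leftarrow k}$.

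For the first claim, the constants of $G\hook_{y \leftarrow k}$ are exactly the substituted inputs, with total fanout $\sum_{i=1}^m \fo(y_i)$. Lemma~\ref{lem:normalization} produces a terminal, layered simplification that eliminates at least this many binary gates, hence $\sum_i \fo(y_i) \leq m$. Non-degeneracy of $g$ forces $\fo(y_i) \geq 1$ for every $i$, so equality holds throughout and $\fo(y_i) = 1$.

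For the second claim, I continue simplifying $G\hook_{y \leftarrow k}$ by firing $\neg 0 \to 1$ or $\neg 1 \to 0$ on each $\neg$ gate reading an (already substituted) $y_i$. In $\cD$ these steps only touch $\neg$ gates, so the binary gate count of the intermediate circuit is still $CC(g)$. Partition $[m]$ into $A = \{i : y_i \text{ feeds a binary gate in } G\}$ and $B = \{i : y_i \text{ feeds a } \neg \text{ gate in } G\}$; by the first claim $|A| + |B| = m$. In the intermediate circuit, the $y_i$ with $i \in A$ are still constants of fanout $1$, the $y_i$ with $i \in B$ are disconnected, and each former $\neg y_i$ with $i \in B$ is now a fresh constant of fanout $\fo(\neg y_i)$, so the total constant fanout is $|A| + \sum_{i \in B} \fo(\neg y_i)$. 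Reapplying Lemma~\ref{lem:normalization} to the intermediate circuit together with the same upper bound of $m$ eliminated binary gates yields
\[
|A| + \sum_{i \in B} \fo(\neg y_i) \;\leq\; m \;=\; |A| + |B|,
\]
and since each $\fo(\neg y_i) \geq 1$ in a normalized circuit, this forces $\fo(\neg y_i) = 1$ for every $i \in B$.

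The main obstacle is purely one of bookkeeping: checking that Lemma~\ref{lem:normalization} can be legitimately invoked a \emph{second} time on the intermediate circuit obtained after the manual $\neg$-elimination, and confirming that the $\neg$-fixing rules in $\cD$ truly leave the binary gate count untouched so that the $m$-gate budget is preserved. Once that is in place, both parts of the corollary fall out of one inequality $CC(g) - \ell \geq CC(f)$, applied to two different choices of the initial constant fanout $\ell$.
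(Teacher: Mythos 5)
Your proposal is correct and uses the same core idea as the paper: substitute the key, invoke Lemma~\ref{lem:normalization} to convert the cumulative constant fanout $\ell$ into a lower bound on eliminated binary gates, and compare against the budget $CC(g) - CC(f) = m$. For the first claim, your argument matches the paper's one-sentence proof exactly.

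Where you add genuine value is the second claim. The paper's own proof paragraph never addresses the negation-fanout statement explicitly, and a single application of Lemma~\ref{lem:normalization} cannot reach it: if every $y_i$ has fanout one but some $\neg y_i$ has fanout two, the initial constant fanout is still $\ell = m$, so the lemma only yields $\mu \geq m$, which is no contradiction. Your two-phase trick --- first manually fire the constant-negation rules (which cost no binary gates in $\cD$), then re-apply the lemma to the intermediate circuit whose constant fanout is now $|A| + \sum_{i \in B}\fo(\neg y_i)$ --- cleanly closes that gap. One could alternatively prove a stronger version of the lemma that lower-bounds the eliminated gates by the total downward travel of the potential $\phi_t$ rather than its initial value, but your formulation avoids re-proving the lemma and is the more surgical fix. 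The bookkeeping concern you flag at the end is not an obstacle: the intermediate circuit is still well-formed, its binary-gate count equals $CC(g)$ because only unary gates were touched, and its normalization still must compute $f$, so the $m$-gate budget carries over verbatim.
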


If any extension variable has fanout greater than one, substituting a key and normalizing (Lemma \ref{lem:normalization}) would eliminate more than $m$ gates. This leads to a contradiction: the resulting circuit computing $f$ would have size less than $CC(f)$.  Generalizing this observation, we obtain a short list of feasibly-checkable properties that guarantee a given circuit is the optimal circuit of a simple extension in Lemma \ref{lem:no-false-witnesses}.

Essentially, the existence of a circuit $G$ computing $g$ with size $CC(f) + m$ that (1) is not ``trivially sub-optimal'' and (2) not ``trivially degenerate'' witnesses that $g$ is a Simple Extension of $f$.  This is not immediate; such a circuit does not necessarily \emph{guarantee} non-degeneracy of $g$ and that $CC(g) = CC(f) + m$ as $G$ could admit non-trivial simplifications.  However, non-degeneracy of $f$ implies that such straightforward witness circuits suffice.

\begin{lemma}
    \label{lem:no-false-witnesses}
    Let $f \in \cF_n$ be non-degenerate and suppose $g \in \cF_{n+m}$ is an \emph{arbitrary} extension of $f$. If there exists a normalized circuit $G$ computing $g$ such that
    \begin{enumerate}
    \item $|G| = CC(f) + m$,
    \item $G$ reads each base variable at least once, and
    \item $G$ reads each extension variable exactly once,
    \end{enumerate}
    then $g$ is a \emph{simple} extension of $f$ and $G$ is an optimal circuit for $g$.
\end{lemma}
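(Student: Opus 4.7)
I would establish that $g$ is non-degenerate in two stages and then pin down $CC(g) = CC(f) + m$, at which point $g$ is a simple extension of $f$ and $G$ is optimal. Base-variable non-degeneracy is immediate: since $g(x, k) = f(x)$ and $f$ is non-degenerate, $g$ already depends on every $x_i$.

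The interesting step is the extension variables. First apply the substitution $y \leftarrow k$ to $G$. Each $y_j$ has fanout exactly one (condition 3), so the cumulative constant-fanout in the post-substitution circuit is exactly $m$, and Lemma \ref{lem:normalization} produces a normalized $f$-circuit $F$ with $|F| \leq |G| - m = CC(f)$. Since $F$ computes $f$, in fact $|F| = CC(f)$ and normalization removes \emph{exactly} $m$ binary gates. This tightness forces each gate $\alpha_j$ reading $y_j$ in $G$ to be eliminated by a \emph{passing} rule rather than a \emph{fixing} rule, because any fix would cascade through $\alpha_j$'s nonempty fanout and remove strictly more than $m$ binary gates in total. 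So $\alpha_j$ has the shape $y_j \land \beta_j$ with $k_j = 1$ or $y_j \lor \beta_j$ with $k_j = 0$ (plus obvious variants when $y_j$ reaches $\alpha_j$ through a $\neg$-gate in $\cD$), and in $F$ the former children $c_1, \dots, c_q$ of $\alpha_j$ simply read $\beta_j$ directly.

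Now suppose for contradiction that $g$ is degenerate on some $y_j$. Evaluating $G$ at the input $(x, y_j = \neg k_j, y_{-j} = k_{-j})$ still yields $f(x)$ by degeneracy, yet the passing form makes $\alpha_j$'s output at that input a fixed constant $c \in \{0, 1\}$ rather than $\beta_j(x)$. Translating back to $F$: splicing a fresh constant $c$ in place of $\beta_j$ along exactly the $q$ wires to $c_1, \dots, c_q$ yields a (non-normalized) circuit that still computes $f$ on every input. This modified circuit has cumulative constant-fanout $q \geq 1$, so Lemma \ref{lem:normalization} normalizes it to a circuit strictly smaller than $F$ that still computes $f$---contradicting $|F| = CC(f)$. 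In $\cD$ the fresh constant is free, so shedding $q \geq 1$ binary gates already does it; in $\cR$ the $+1$ overhead of the new constant is offset by its own garbage-collection once the downstream neighbour is simplified, so the net size still strictly decreases. Hence $g$ depends on every $y_j$.

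Once $g$ is known to be non-degenerate, any optimal circuit $G^{*}$ for $g$ must read each $y_j$, so re-applying the $y \leftarrow k$ argument with Lemma \ref{lem:normalization} to $G^{*}$ gives $|G^{*}| \geq CC(f) + m$. Combined with the witness $|G| = CC(f) + m$, this yields $CC(g) = CC(f) + m = |G|$, so $g$ is a simple extension of $f$ and $G$ is optimal. The hard step is the extension-variable argument: the \emph{exact} size equality in condition 1, together with Lemma \ref{lem:normalization}, must be leveraged to force the passing structure on every $\alpha_j$, which then converts the purely functional hypothesis of degeneracy into a structural splice into $F$ whose normalization contradicts $F$'s optimality; the bookkeeping is cleanest in $\cD$ and requires mild extra care in $\cR$ and for $y_j$ routed through a $\neg$-gate.
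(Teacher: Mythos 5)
Your overall strategy is sound and reaches the right conclusion, but it is worth making the comparison precise, because the core mechanism is the same as the paper's while the organization is genuinely different. Both proofs hinge on the same observation: if $g$ were degenerate on some $y_j$, then $(\neg k_j, k_{-j})$ is also a key, and restricting with it makes the gate $\alpha_j$ that reads $y_j$ \emph{fix} (produce a fresh constant with fanout $\geq 1$) rather than \emph{pass}, which forces normalization to shed more than $m$ binary gates, contradicting $CC(f)$. The paper realizes this by reverse induction on the set $D$ of degenerate extension variables: it substitutes one degenerate $y_i$ with its fixing value, garbage-collects, then substitutes the remaining $m-1$ keys, and needs the induction precisely to control cascading garbage collection when several $y$-variables are degenerate. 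You sidestep that induction entirely by substituting the whole key at once, invoking Lemma~\ref{lem:normalization} to get a size-$CC(f)$ circuit $F$, and extracting the intermediate structural fact that tightness forces every $\alpha_j$ to be eliminated by a passing rule; degeneracy of any $y_j$ then contradicts this in one step, with no case analysis on $|D|$. This is a cleaner organization and it automatically handles multiple degenerate variables.

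Two places in your write-up need more care, though. First, the claim that a fixing step at $\alpha_j$ forces strictly more than $m$ binary gates to be eliminated is asserted via a ``cascade'' intuition but not actually proved; what makes it true is the potential-function accounting \emph{inside} the proof of Lemma~\ref{lem:normalization}: a fixing rule applied to a costly $\alpha_j$ increments the gate-elimination count $\mu$ while leaving the cumulative constant fanout $\phi$ non-decreasing (it trades the $y_j$-constant's fanout~$1$ for a new constant of fanout $\fo(\alpha_j) \geq 1$), so the at-least-$m$ steps that \emph{do} decrease $\phi$ to zero are all \emph{in addition} to it, giving $\mu \geq m+1$. You should cite that argument rather than the Lemma's statement, which only guarantees $\geq m$, not $> m$. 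Second, the ``splicing a fresh constant into $F$'' construction is a roundabout route to the same place as simply substituting the alternate key $(\neg k_j, k_{-j})$ into $G$ and normalizing; it also requires you to verify that the spliced circuit really equals that restriction of $G$ after the other passing steps, including when $\beta_j$'s remaining fanout drops to zero. The direct route avoids those checks, and it is also closer to where your own step~(2) argument already lives.
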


\begin{proof}
Let functions $f,g$ and circuit $G$ be as specified in the lemma.  If $g$ is non-degenerate in all variables, then $G$ \emph{must} be an optimal circuit for $g$ --- otherwise we could restrict $G$ with any key to $f$ in $g$ and contradict the circuit complexity of $f$. So in this case $g$ is indeed a simple extension of $f$, and $G$ is optimal. Suppose now (towards contradiction) that $g$ is degenerate with respect to some set of extension variables $D$.

We proceed by reverse induction starting with base case $|D| = 1$. Let $y_i$ be the unique degenerate variable.  By definition of degeneracy, $g|_{y_i = 0} = g|_{y_i = 1}$.  Thus, there are keys to $f$ in $g$ with both $y_i = 1$ and $y_i = 0$.  Therefore, we can substitute $y_i$ to eliminate $\geq 1$ costly logic gate $\alpha$ of $G$ with a \emph{fixing rule.}  The matched gate $\gamma$ loses a wire to $\alpha$, so it may become disconnected.  Run garbage collection recursively starting at $\gamma$, until no more fanout-0 gates exist in $G$, and call the resulting circuit $G'$.

Because $y_i$ is the \emph{only} degenerate variable, no other input gates can be disconnected by garbage collection after substituting into $y_i$ (though some logic gates may be eliminated). 
So $G'$ is a well-formed circuit with constant fanout $\geq 1$ and size $|G'| \leq |G| - 1$.  Furthermore, $G'$ computes function $g'$, an $(n + m - 1)$-variable extension of $f$, because the keys to $f$  in $g$ matching one setting of $y_i$ are preserved in $g'$.  Now substitute the remaining $(m-1)$ extension variables of $G'$ according to one of these keys to $f$ in $g'$ and normalize (Lemma \ref{lem:normalization}) to obtain a circuit $F^*$ that either
\begin{enumerate}
\item has at most $|G'| - (m - 1) + 1 < CC(f)$ gates, or
\item computes a constant function. 
\end{enumerate}
Both cases yield a contradiction; either to the circuit complexity or non-degeneracy of $f$.

Now suppose $|D| \geq 1$.  Because all $y$-variables in $D$ are degenerate, there are keys for all possible settings of them.  So we reduce to the base case by substituting $|D| - 1$ extension variables such that, for each variable, exactly one costly logic gate is eliminated by a \emph{passing rule.}  Passing rules never disconnect the matched $\gamma$.  This leaves a single degenerate $y$-variable in the new function $g''$, an $(n + m - |D| + 1)$-variable extension of $f$, computed by circuit $G''$ obtained by normalization.  $G''$ and $g''$ are now exactly as specified in the statement of this Lemma, except for the guarantee that $g''$ has a unique degenerate extension variable.  Therefore the base case with $|D| = 1$ applies to $G''$ and $g''$, concluding this proof.
\end{proof}

We will want to substitute key bits one by one to better analyze intermediate circuits. This analysis lends itself to induction only if the intermediate circuits themselves are optimal circuits for intermediate simple extensions.  This motivates the following special class of restrictions.

\begin{definition}[All-Stops Restrictions]
  Let $C$ denote an arbitrary circuit with $n+m$ input variables.  A restriction $\rho$ that substitutes $m$ variables is \emph{all-stops} for $C$ if, for each $i \in m$, there is a prefix $\rho_i$ of $\rho$ such that $\rho_i(C)$ is an optimal circuit for a simple extension of the function computed by $\rho(C)$ on $(n + m - i)$ variables.
\end{definition}

If a restriction is all-stops, then exactly one binary gate is eliminated per substitution. Since each extension variable (or its negation) is read by a single binary gate, it is straightforward to analyze the simplifications that occur.

\begin{claim}[Simple Simplifications]\label{cor:all-stop-rewrites-are-simple}
    Simplifications between substitutions in all-stops restrictions are \emph{simple}, i.e. after substitution, the following gate elimination rules are applied in this exact order: a constant negation (if necessary), a passing rule, and lastly a double negation (if necessary).
\end{claim}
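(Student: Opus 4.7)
The plan is to analyze, case-by-case, what can happen to a normalized optimal simple extension circuit $C$ immediately after the all-stops substitution step $\langle \textsc{Sub}, y, i, c\rangle$. Throughout, I call a simplification step \emph{free} if it eliminates only unary $\neg$ gates, since these are not counted by $\mu_\cD$ and therefore do not consume the ``exactly one binary gate per substitution'' budget enforced by the all-stops assumption. By Corollary \ref{cor:extension-read-once}, $y_i$ has fanout exactly one and feeds either a binary gate $\alpha$ directly, or a unique negation gate $\nu$ (also of fanout one) that in turn feeds $\alpha$. In the latter case, immediately after substitution $\nu$ is the only location where any rule can fire: $\alpha$ still reads a $\neg$ gate rather than a constant, so the only applicable rule is $\neg c \to (1-c)$, which is free. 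In the former case this first step is simply skipped.

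Next I would argue the passing rule is forced at $\alpha$. After the possibly-trivial constant negation, $\alpha$ has a constant input and the other input is some non-constant gate $\gamma$, because $C$ was constant-free before substitution and only one constant has been introduced. The candidate rules with $\alpha$ as main connective are fixing, passing, resolving, and pruning. Fixing retypes $\alpha$ as a constant $c'$; since $\fo(\alpha) \geq 1$ in an optimal normalized circuit, $c'$ propagates to $\alpha$'s parent, and a straightforward induction (mirroring the potential argument of Lemma \ref{lem:normalization}) shows the propagation cannot terminate without either producing an additional binary-gate elimination via a downstream passing or pruning match, or collapsing the circuit to a constant and contradicting non-degeneracy of $g$. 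Either way all-stops is violated. Resolving has the same effect since it too retypes $\alpha$ as a constant. Pruning requires both inputs of $\alpha$ to be the same gate, which is impossible because one input is the freshly introduced constant and the other is a non-constant gate. Thus passing is the unique applicable rule and it eliminates exactly $\alpha$.

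Finally I would handle the optional trailing double-negation cleanup. After passing, every wire that read $\alpha$ is redirected to read $\gamma$, so new rule matches can only arise at former parents of $\alpha$. The unique fresh match that does \emph{not} eliminate a binary gate is when $\alpha$'s parent is a $\neg$ gate and $\gamma$ is itself a $\neg$ gate, producing $\neg\neg\gamma$; this is cleaned up by the free double-negation rule. Any other triggered match---most notably a pruning configuration $\gamma \circ \gamma$ arising if some parent $\beta$ of $\alpha$ already read $\gamma$ before passing---would remove an additional binary gate and violate the all-stops budget, so it cannot occur. I expect the main obstacle to be precisely this last case analysis: carefully enumerating which new left-hand sides can appear around $\gamma$ after passing, and for each, showing that either the match is the benign double-negation or that the match contradicts the one-binary-gate budget. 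Ordering (constant negation, then passing, then double negation) is then forced by the facts that each step strictly enables the next and that no rule matching a constant input or a fresh $\neg\neg$ can appear before its enabling rewrite has fired.
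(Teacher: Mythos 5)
Your proposal is correct, and it follows the same underlying approach as the paper's own (very terse) justification, which simply observes that the lone binary gate cannot be eliminated by a fixing rule because the resulting constant would propagate and force either a constant circuit or further binary-gate eliminations. You expand this sketch into the full case analysis that the paper leaves implicit: using Corollary~\ref{cor:extension-read-once} to localize the only possible initial match at $\nu$ (if present) or $\alpha$; excluding fixing and resolving at $\alpha$ via the constant-propagation/potential argument of Lemma~\ref{lem:normalization}; excluding pruning because the two inputs of $\alpha$ are a constant and a non-constant gate; and then arguing that any post-passing match around $\gamma$ other than a single double-negation would cost a second binary gate, violating the exactly-one-per-step budget that the all-stops definition enforces.
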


The sole binary gate cannot be eliminated via a fixing rule, as otherwise the circuit is not constant free and must either become constant or we could eliminate more costly gates. These highly-structured manipulations will help us build up a robust structure for optimal simple extension circuits in the next subsection. However, we must first show that such restrictions even exist. We surpass this goal: we construct one that is also layered.

\begin{lemma}\label{lem:exists-all-stops}
  Let $f \in \mathcal{F}_n$ and suppose $g \in \mathcal{F}_{n + m}$ is a simple extension of $f$.  For any optimal circuit $G$ computing $g$, there is an all-stops restriction $\rho$ for $G$ such that
  \begin{enumerate}
  \item $\rho$ is terminal and layered, and
  \item $\rho(G)$ is an optimal circuit computing $f$.
  \end{enumerate}
\end{lemma}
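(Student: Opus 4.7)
My plan is to apply Lemma~\ref{lem:normalization} to $G$ restricted by any key, analyze the resulting normalization tightly, and then decompose it into $m$ per-variable stages.

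Fix any key $k$ to $f$ in $g$. By Corollary~\ref{cor:extension-read-once}, each $y_i$ has fanout exactly $1$, so $G|_k$ has $m$ constants of total fanout exactly $m$. Lemma~\ref{lem:normalization} yields a terminal layered simplification $\lambda$ with $|\lambda(G|_k)| \leq |G| - m = CC(f)$; since $\lambda(G|_k)$ computes the non-degenerate function $f$, this bound is tight and $F := \lambda(G|_k)$ is an optimal circuit for $f$, which establishes condition~(2). Note that any choice of key suffices for this half of the argument.

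The next step is to argue that $\lambda$ consists of exactly $m$ passing-rule applications plus optional negation cleanups. Revisiting the accounting inside the proof of Lemma~\ref{lem:normalization}, each rule reduces the constant-fanout potential $\phi$ by at most $1$ and eliminates at least one binary gate; since $\phi$ must drop from $m$ to $0$ while exactly $m$ binary gates vanish, every binary-eliminating step in $\lambda$ must reduce $\phi$ by exactly $1$. Reading off Table~\ref{tab:fanout-updates}, fixing yields $\Delta\phi = \fo(\alpha) - 1 \geq 0$, resolving yields $\Delta\phi = \fo(\alpha) \geq 1$, and pruning yields $\Delta\phi = 0$; only passing achieves $\Delta\phi = -1$. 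So every binary-eliminating step of $\lambda$ is a passing rule, matching the ``simple'' simplification shape of Claim~\ref{cor:all-stop-rewrites-are-simple}.

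With this structure in hand, I would construct $\rho$ by interleaving. Sort the extension variables by the depth of their unique passing rule in $\lambda$ and, for each in order, perform the substitution followed by its local chunk---at most a constant-negation cleanup, then the passing rule, then a double-negation cleanup. Because passing leaves $\gamma$ and its subcircuit intact, no base variable is ever disconnected and each remaining $y$-variable retains fanout $1$. So the intermediate circuit $G_i$ after the $i$-th stage has size $CC(f) + (m-i)$, is normalized, reads every base variable, and reads every remaining extension variable exactly once; Lemma~\ref{lem:no-false-witnesses} then certifies $G_i$ as an optimal circuit for a non-degenerate simple extension of $f$ on $n+m-i$ variables---exactly the all-stops property. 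Terminality comes from $F$ being normalized, and layeredness is inherited from $\lambda$'s depth-ordering.

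The main obstacle is watertightly threading the correspondence between substituted variables and passing-rule applications inside $\lambda$: each fanout-$1$ constant introduced by a substitution must be shown to be consumed by exactly one passing rule (possibly preceded by a $\neg 0 \to 1$ or $\neg 1 \to 0$ cleanup, and possibly spawning a $\neg\neg\gamma \to \gamma$ cleanup), so that $\lambda$ cleanly partitions into $m$ disjoint local chunks. The tight potential-accounting guarantees this correspondence exists, but verifying that each chunk's cleanups only touch its own substituted $y_i$---and that reordering the chunks and substitutions into the interleaved $\rho$ still respects the layered order---requires a careful local-neighborhood analysis around each substituted variable.
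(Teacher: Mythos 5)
Your argument for condition~(2) --- that $F := \lambda(G|_k)$ is optimal for $f$ --- is fine and essentially matches the paper. The gap is in the step where you claim ``only passing achieves $\Delta\phi = -1$.'' This is false, and the paper's own Figure~\ref{fig:fix-with-constant-example} illustrates the counterexample: when the matched node $\gamma$ of a fixing (or pruning) rule is \emph{itself} a constant, $\Delta\phi = \fo(\alpha) - 2$, which equals $-1$ when $\fo(\alpha) = 1$. Your reading of Table~\ref{tab:fanout-updates} tacitly assumed $\gamma$ is never a constant, but in $G|_k$ this is exactly what happens whenever two extension variables (or one extension variable and a constant produced earlier in $\lambda$) feed the same gate --- the generic situation inside a $Y$-tree of weight $> 1$. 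Consequently, the tight accounting does \emph{not} force every binary-eliminating step of $\lambda$ to be a passing rule, and the ``unique passing rule per variable'' needed for your depth-sort is not well-defined: a single rule at a gate fed by two constants consumes two of your constants at once.

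This breaks the interleaving in a substantive way. Suppose $\alpha = y_i \land y_j$ with $\fo(\alpha) = 1$ and your fixed key assigns $c_i = 0$. If you substitute $y_i$ first, the fixing rule $0 \land y_j \to 0$ fires, the wire to $y_j$ is cut, and $y_j$ is garbage-collected \emph{before it has been substituted}. The intermediate circuit now computes a degenerate function on $n + m - 2$ inputs, so it is not an optimal circuit for a simple extension on $n + m - 1$ inputs, and the all-stops property fails at that stage. Worse, for a key with $c_i = c_j = 0$ (both absorbing), \emph{no} order of substitution of $y_i, y_j$ avoids this --- the correct move is to swap to the key $(c_i, c_j) = (0,1)$, which is still a valid key because the two keys give the same restriction of $g$. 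This is precisely why the paper does \emph{not} fix a key up front: it maintains the full set $K$ of keys, considers whether there is a key making the depth-maximal variable pass, and, if not, proves the sibling $\gamma$ must be another extension variable $(\neg) y_j$ (the ``layer tie'') and then \emph{constructs} key bits $s_j, s_i$ that pass through $\alpha$ while reproducing the same constant $c_i$ downstream. You flag this as ``the main obstacle'' in your last paragraph, but the claim that ``the tight potential-accounting guarantees this correspondence exists'' is exactly what the counterexample above refutes; the correspondence only exists for a \emph{well-chosen} key and substitution order, and establishing their existence is the real content of the lemma.
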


\begin{proof}
  Let Boolean functions $f, g$ and circuit $G$ be as in the Lemma.  Sort the binary gates of $G$ by depth (breaking ties arbitrarily) to fix a total order on gates. This ordering remains consistent throughout the entirety of our proof.  Denote by $K \subseteq \{0,1\}^m$ the set of all keys to $f$ in $g$.  Normalize $G$ to eliminate double negations --- no other rules can apply because $G$ is optimal (Lemma \ref{lem:normalization}).  Record these initial steps in $\rho_0$, which we then iteratively extend to the desired all-stops restriction.

  Take $y_i$, read by the maximal in $G$.  Such $y_i$ are well-defined because $G$ reads all extension variables exactly once (Lemma \ref{lem:no-false-witnesses}) and so each $y_i$ has a unique depth in $G$.  Consider the subcircuit around $y_i$: denote by $\alpha$ the \textbf{binary} main connective that reads $(\neg)y_i$ and by $\gamma$ the sibling of $(\neg)y_i$ (the ``matched gate'' in elimination rules).  Now let $K' \subseteq K$ be the subset of keys such that substituting $y_i$ according to any $k \in K'$ would eliminate $\alpha$ with a passing rule.

  If $K'$ is non-empty, then substitute $y_i$ according to any key in $K'$ and normalize the resulting circuit.  Record these steps in a restriction $\rho_1$ that extends $\rho_0$.  The resulting circuit $G' = \rho_1(G)$ is in normal form, reads each base variable exactly as many times as $G$ did, and reads each remaining extension variable exactly once.  Therefore, the function computed by $G'$ is a simple extension of $f$ on $(n + m - 1)$ variables, and $G'$ is an optimal circuit (Lemma \ref{lem:no-false-witnesses}).  By construction, $\rho_1$ is layered and terminal, so we will continue extending $\rho_1$  by selecting a depth-maximal extension variable in $G'$ and re-starting this case analysis.
  
  If $K'$ is empty, consider cases depending on the type of $\gamma$.  Suppose towards contradiction that $\gamma$ is not an extension variable or the negation of an extension variable; in this case there is no $y_j$ with a path to $\gamma$ because we selected a $y_i$ of maximal depth.  Substitute $y_i$ according to any key in $K$, and observe that we obtain a circuit with total constant-fanout exactly $\fo(\alpha) \geq 1$ after applying a fixing rule.  Then substitute all other extension variables according to the same key: the resulting circuit $G''$ has total constant-fanout $\fo(\alpha) + m - 1$.  Normalizing, we contradict either the circuit size or non-degeneracy of $f$ (exactly as in the proof of Lemma 16).

  Therefore, if $K'$ is empty, we must have that $\gamma$ is $(\neg)y_j$ for some $j \neq i$ --- a \emph{layer tie} between extension variables. Thus, a restriction will be layered regardless of which substitution (into $y_i$ or $y_j$) is made first. Using the particular structure of this sub-circuit (i.e., ``$~ (\neg)y_j ~ \alpha ~ (\neg)y_i ~$'' where $\alpha$ is the main binary connective) we construct a key that sets $y_j$ to eliminate $\alpha$ with a passing rule while preserving the rest of $K$.  The idea is to extract information from $G$ after $y_i$ is substituted according to any key in $K$. Such substitution mutates the sub-circuit around $y_i$ in $G$.  If we can \emph{identically mutate} $G$ after setting $y_j$ to eliminate $\alpha$ by passing \emph{first}, then we are done.  The details of this construction follow.

  Purely to inspect the results, substitute $y_i$ in $G$ according to any key in $K$, and observe that we obtain a circuit $G'$ with total constant fanout exactly $\fo(\alpha)$ after applying a fixing rule.  Furthermore, the function computed by $G'$ is an $(m-1)$-input extension of $f$ where $\alpha$ has been substituted with a particular constant value $c_i \in \{0,1\}$ and the variable $y_j$ is disconnected from the input (and thus degenerate).  This constant $c_i$ is all the information we need.

  Returning to $G$, there \emph{must} exist a substitution $s_j \in \{0,1\}$ for $y_j$ that eliminates $\alpha$ via a passing rule: this follows immediately by inspecting the truth tables of binary $\{\land, \lor\}$ and case analysis on the occurrence of a negation gate between $y_j$ and $\alpha$. Extend $\rho$ by first substituting $s_j$ for $y_j$ in $G$ and normalize.  The resulting circuit $G''$ passes all wires reading $\alpha$ to  $(\neg)y_i$ and loses exactly one costly gate.  Again by inspection, we can substitute $y_i$ with some value $s_i \in \{0,1\}$ such that the wires reading $(\neg)y_i$ read the constant $c_i$ --- identical to the subcircuit in $G'$.  Thus, $G'$ computes an $(m-1)$-input extension $g_j$ of $f$ witnessed by the key $\{ y_j \mapsto s_j \}$.  Extend the working restriction $\rho_1$ by first $\{ y_j \mapsto s_j \}$ and then $\{ y_i \mapsto s_i \}$, recording normalization steps in between and afterwards.

Finally, observe that $G'$ meets all the preconditions of Lemma \ref{lem:no-false-witnesses} and so $g_j$ is a simple extension of $f$ and $G'$ is an optimal circuit computing $g_j$.  The proof concludes by iterating this case analysis on the set of keys and depth-maximal extension variables until all extension variables are eliminated.
\end{proof}

\subsection{\texorpdfstring{$Y$}{Y}-tree Decomposition}
Equipped with layered all-stops restrictions, we can now inductively prove that optimal simple extension circuits have the following nice structure: each extension variable occurs in an isolated read-once subformula called a $Y$-tree that depends only on other extension variables, and the output of these read-once formulas is read by a single costly gate called a combiner. Formally,

\begin{definition}[Y-Tree Decomposition]\label{def:y-tree-decomposition}
  Let $G$ be a circuit with two distinguished sets of inputs: \emph{base variables} $X$ and \emph{extension variables} $Y$. We say a triple $\langle \delta , b , T \rangle$, where $\delta$---referred to as a \emph{combiner}---is a binary gate in $G$, bit $b \in \bool$ designates an input of $\delta$, and $T$ is a sub-circuit of $G$ rooted at the $b$ child of $\delta$, is \emph{admissible} if the following local conditions are met:
    \begin{enumerate}
      \item Each $T$ is a read-once formula in only extension variables $Y$.
      \item Each $T$ is \emph{isolated} in $G$ --- gate $\gamma$ is the unique gate reading from $T$, and it only reads the root of $T$.
      \item The sub-circuit of $G$ rooted at the $\neg$b child of $\gamma$ contains at least one $X$ variable.
    \end{enumerate}
  
  A \emph{$Y$-Tree Decomposition} of $G$ is a set of admissible triples which are \emph{non-intersecting}, that is, each $y_i \in Y$ appears in at most one $T$. The \emph{size} of a decomposition is the number of tuples present in the decomposition. The \emph{weight} of a $Y$-tree decomposition is the number of extension variables that are read in some $T$. We say a $Y$-tree decomposition is \emph{total} if its weight is $|Y|$, i.e. every extension variable appears.
\end{definition}

We first remark that the binary gates spread across the read-once formulas and combiners in a total decomposition account for all $m$ binary gates eliminated when substituting and simplifying with a key.

\begin{theorem}\label{thm:y-tree-decomposition}
  If $G$ is a minimal circuit for a simple extension, then $G$ has a total $Y$-tree decomposition.
\end{theorem}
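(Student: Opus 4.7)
The plan is induction on $m$, the number of extension variables in $g$. The base case $m=0$ is trivial: $g=f$ and the empty decomposition is total.

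For the inductive step, fix an optimal circuit $G$ for $g$ with $m\geq 1$ extension variables, and invoke Lemma~\ref{lem:exists-all-stops} to obtain a layered all-stops restriction $\rho$. Let $\rho_1$ be its first substitution together with the ensuing simplifications. By Claim~\ref{cor:all-stop-rewrites-are-simple}, $\rho_1$ substitutes one extension variable---call it $y_1$---optionally collapses a negation on the $y_1$-side, eliminates the unique binary gate $\alpha$ consuming $(\neg)y_1$ via a passing rule, and optionally removes a resulting double negation. Writing $G_1=\rho_1(G)$, the all-stops property combined with Lemma~\ref{lem:no-false-witnesses} gives that $G_1$ is an optimal circuit for a simple extension $g_1$ on $m-1$ extension variables, so the inductive hypothesis supplies a total decomposition $\cT_1$ of $G_1$.

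The main step is lifting $\cT_1$ to a decomposition $\cT$ of $G$. Let $\gamma$ denote the sibling of $(\neg)y_1$ at $\alpha$, and split on whether the subcircuit of $G$ rooted at $\gamma$ contains a base variable. If it does, I would add the fresh singleton tuple $\langle \alpha, b, \{(\neg)y_1\}\rangle$ to an otherwise unchanged copy of $\cT_1$; admissibility is immediate from Corollary~\ref{cor:extension-read-once} and the case assumption. If $\gamma$'s subtree is pure-extension, then after passing the node $\gamma$ occupies $\alpha$'s former position inside a unique tuple $\langle \delta, b, T\rangle\in\cT_1$; I would replace it by $\langle \delta, b, T'\rangle$, where $T'$ is obtained from $T$ by grafting: the occurrence of $\gamma$ in $T$ expands back into the subformula $\alpha((\neg)y_1,\gamma)$, with the rest of $T$ preserved verbatim.

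A short fanout calculation justifies the graft. The passing update $\fo_{G_1}(\gamma)=\fo_G(\gamma)-1+\fo_G(\alpha)$, combined with the isolation of $T$ inside $G_1$ (forcing the occurrence of $\gamma$ in $T$ to have fanout exactly $1$) and the fact that every surviving gate has fanout at least $1$, should pin $\fo_G(\alpha)=\fo_G(\gamma)=1$. Consequently $\alpha$ slots in as a new internal node of $T'$, preserving both the read-once property and the isolation of $T'$ in $G$. Every other tuple of $\cT_1$ transports verbatim, because passing preserves the identifiers of surviving gates and the optional negation cleanups sit on the $(\neg)y_1$--$\alpha$--$\gamma$ path, strictly outside the interior of any other Y-tree. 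Totality is then immediate: the new or modified tuple accounts for $y_1$, while $\cT_1$ covers the remaining $m-1$ extension variables.

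The step I expect to be most delicate is the bookkeeping around the optional double-negation removal. Depending on whether the cancelled $\neg$ sat on the $(\neg)y_1$-side of $\alpha$ or on the outgoing wire from $\alpha$ to its reader, the polarity of the leaf ``$(\neg)y_1$'' in the graft, or of $\gamma$'s re-expansion inside $T'$, may flip. I would dispatch these subcases by observing that such parity flips only toggle the ``$(\neg)$'' decoration on leaves already accounted for by the graft, so the admissibility conditions of Definition~\ref{def:y-tree-decomposition} are unaffected.
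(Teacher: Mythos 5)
Your proof is correct and follows essentially the same route as the paper's: induction on $m$, invoking Lemma~\ref{lem:exists-all-stops} to obtain a layered all-stops restriction, peeling off the first substitution step $\rho_1$ using Claim~\ref{cor:all-stop-rewrites-are-simple}, applying the inductive hypothesis to $G_1=\rho_1(G)$, and lifting the resulting total decomposition back to $G$ by either adding a fresh singleton $Y$-tree or grafting onto an existing one.

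One difference worth flagging: in the ``modify'' case the paper's argument invokes the \emph{layered} property of $\rho$ to conclude that the sibling $\gamma$ of $(\neg)y_1$ must itself be an extension variable (otherwise a deeper binary gate reading an extension variable would have been eliminated first), after which the graft is always at a leaf of $T$. You instead pin down $\fo_G(\alpha)=\fo_G(\gamma)=1$ via the passing-rule fanout identity plus the isolation of $T$, which makes the graft well-defined wherever $\gamma$ happens to sit inside $T$ --- root or interior. This is a valid and arguably cleaner local argument; it handles a case (interior $\gamma$) that the paper's layering argument shows cannot arise, but the extra generality is harmless. Your case split on ``does the subtree of $\gamma$ contain an $X$ variable'' is also slightly different from the paper's ``is $\gamma$ present in some tuple of $\cT_1$,'' but the two are equivalent: isolation plus admissibility condition 3 forces any gate whose subtree is pure-extension to lie inside a single $Y$-tree of $\cT_1$, and conversely. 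Like the paper, you correctly identify the negation bookkeeping as the finicky part and sketch a plausible dispatch; the paper is similarly terse there.
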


\begin{proof}
    We proceed via induction over $m$, the number of extension variables. When $m = 0$, the statement is vacuously true: the empty $Y$-tree decomposition is total for any optimal circuit of the base function. 
    
    Assume the statement holds for any simple extension with $k-1$ extra variables. Let $G_{k}$ be an optimal circuit for $g_{k}$, some simple extension of a function $f$, with $k$ extension variables. By Lemma \ref{lem:exists-all-stops}, there exists an all-stops layered restriction $\rho$ such that $\rho(G_{k})$ is optimal for $f$. By definition, there exists a prefix $\rho_1$ of $\rho$ such that $G_{k-1} = \rho_1(G_k)$ is an optimal circuit for a simple extension of $f$ with $k-1$ extension variables. Without loss of generality, assume $y_k$ is the variable substituted and eliminated by $\rho_1$. By our inductive hypothesis, $G_{k-1}$ admits a total $Y$-tree decomposition, $\mathcal{Y}_{k-1}$. We will show how to add to/modify $\mathcal{Y}_{k-1}$ to obtain a total $Y$-tree decomposition for $G_k$ by carefully considering the steps of $\rho_1$, .
    
    By Corollary \ref{cor:all-stop-rewrites-are-simple}, $\rho_1$ is exactly the following sequence: a single substitution, (possibly) a constant negation, a passing rule, and (possibly) a double negation elimination. Consider this passing rule, and denote by $\alpha$ its the main connective, $\kappa$ its constant, and $\gamma$ its matched node. Observe that $(\neg)y_k$ is $\kappa$ since, after substitution (and possibly constant negation), it is the only constant in the circuit. Let $(\neg)\beta$ be the matched node $\gamma$, i.e. the other input to $\alpha$. We remark that $\rho_1$ only impacts the local neighborhoods of nodes adjacent to $\alpha$; any triples of $\mathcal{Y}_{k-1}$ not involving these nodes are still admissible in $G_k$. We modify $\mathcal{Y}_{k-1}$ in the following ways, depending on whether $\beta$ is present in any triples.
    \begin{description}
    \item[Adding a Triple:] If $\beta$ is not present in any triple, we argue that the triple $\langle \alpha, b, T'=(\neg)y_i\rangle$, where $b$ denotes which input of $\alpha$ is $(\neg)y_i$, can be added to $\mathcal{Y}_{k-1}$. It is easy to see this triple is admissible: since $\rho$ is layered, $\beta$ can only depend on $X$ variables, else there are deeper nodes reading extension variables that must be eliminated first. However, we must check that all triples of $\mathcal{Y}_{k-1}$ remain admissible in $G_k$. The only nodes of $\mathcal{Y}_{k-1}$ possibly impacted from $G_k$ to $G_{k-1}$ are combiners, since only those can read $(\neg)\beta$ in $G_{k-1}$. However, these combiners still fulfill the third condition in $G_k$: the sub-circuit rooted at $(\neg)\alpha$ still depends contains $X$ variables since $(\neg)\alpha$ reads $(\neg)\beta$. Hence all of these triples are admissible, and are non-intersecting---forming a total $Y$ decomposition for $G_k$.

    \item[Modifying a Triple:] It is not hard to see that the non-intersecting requirement enforces that $\beta$ appear in at most one triple. Furthermore, if $\beta$ is present in some triple, then it must \emph{be} an extension variable: if it is an interior binary gate of some tree $T$, then there are deeper binary gates than $\alpha$ that read extension variables which are eliminated later in the layered $\rho$. When $\beta$ is present in some triple of $\mathcal{Y}_{k-1}$, we will need to modify its tree to include $y_k$ and $\alpha$ 
    
    Let $t = \langle\delta_i,b_i, T_i\rangle$ be the triple from $\mathcal{Y}_{k-1}$ that includes $\beta$. Not only is $t$ inadmissible in $G_k$, it is not even well-formed: wires reading nodes of $T_i$ in $G_{k-1}$ are actually reading $(\neg)\alpha$ in $G_k$. As the modifications to $G_k$ by $\rho_1$ are local, every other triple in $\mathcal{Y}_{k-1}$ remains admissible in $G_k$. We simply need to incorporate $(\neg)\alpha$ and $(\neg)\gamma$ into the triple, ensuring the first two conditions are met. Observe that there is only one node reading $(\neg)\alpha$, as otherwise the fanout of $(\neg)\beta$ will have greater than one fanout in $G_{k-1}$, violating Corollary \ref{cor:extension-read-once}. We modify $t$ in two ways, depending on which gate in the triple is reading it.
    \begin{description}
        \item[$\alpha$ disconnects $\delta_i$ from $T_i$:] If $\delta_j$ reads $(\neg)\alpha$, then observe it, not the root of $T_i$, is the $b_i$ child of $\delta_j$. However, the sub-circuit $T'$ rooted at $(\neg)(\alpha)$ is a read-once formula in only extension variables. Replacing $T_i$ with $T'$ in $t$ repairs the triple and produces a total decomposition. 
        
        \item[$\alpha$ disconnects $\beta$ inside $T_i$:] If a node in $T_i$ reads $(\neg)\alpha$ in $G_k$, then $T_i$ is not actually a well-formed sub-circuit of $G_k$. However, the sub-circuit $T'$ rooted at the $b_i$ child of $\delta_i$ is still read-once formula over extension variables, once we include $(\neg)\alpha$ and $(\neg)y_k$. Hence $\langle\gamma_j, b_j, T'\rangle$ can replace $t$ to produce a total decomposition $\mathcal{Y}_k$. 
    \end{description}
    \end{description}
\end{proof}
\section{The \texorpdfstring{$f$}{f}-Simple Extension Problem is Fixed-Parameter Tractable}\label{sec:xor-se-solver}

In this section we build a polynomial time algorithm for the $\XOR$-Simple Extension Problem. By Lemma \ref{lem:no-false-witnesses}, it suffices to find a normalized circuit for $g$ of size $CC(f) + m$ which reads every extension variable. 

For exposition, recall the framing of encoding simple circuit extensions from Section \ref{sec:fpt-sep-solver}: suppose $g$ is a simple extension of $f$ and Alice knows $G$, an optimal circuit for $g$.  Alice can obtain an optimal circuit $F$ computing $f$ by simply restricting the $y$-variables of $G$ with a key and performing gate elimination.  Now consider the following communication problem: Bob knows $F$, and Alice would like to send him $G$ using as few bits as possible.  Because $g$ is a simple extension of $f$,  Alice can compute the $Y$-tree decomposition of optimal circuit $G$.  The idea is to send Bob a sequence of instructions that tell him exactly how to graft each Y-Tree of $G$ onto the gates of $F$.

\subsection{Grafting \texorpdfstring{$Y$}{Y}-Trees: Key Ideas \& Structural Properties}

We begin by illustrating the ``grafting'' idea for the algorithm.  First consider the case where there is only a single Y-tree $T_\mathcal{Y}$ in the decomposition of $G$.  Then there must be some costly gate $\eta$ --- an \emph{origin} already present in $F$ --- that is combined with $T_\mathcal{Y}$ in $G$.  Our decomposition theorem shows that every possible arrangement of this \emph{graft} is depicted in Figure \ref{fig:combiner}, which shows the local neighborhood\footnote{Think of this as ``zooming in'' to inspect one of the shaded circles in Figure \ref{fig:example-y-tree}.} of the single Y-tree in $G$.  

\begin{figure}[t]
  \begin{subfigure}{0.45\textwidth}
    \centering
    \includegraphics[]{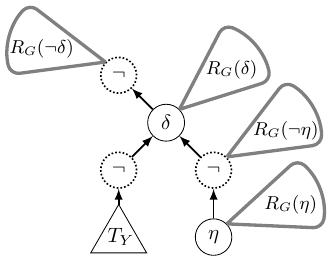}
    \caption{Before any extension variable restrictions}
    \label{subfig:pre-elim}
  \end{subfigure}
  \begin{subfigure}{0.45\textwidth}
    \centering
    \includegraphics[]{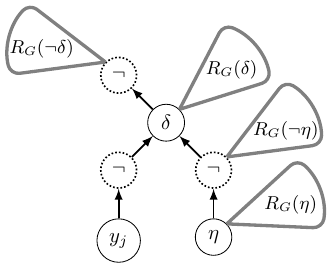}
    \caption{All extension variables restricted except $y_j$}
    \label{subfig:post-elim}
  \end{subfigure}

\caption{The local neighborhood of a combiner $\delta$ and $Y$-tree $T_{\mathcal{Y}}$ grafted on an origin $\eta$.}
\label{fig:combiner}
\end{figure}

In Figure \ref{fig:combiner}, both $\delta$ and $\eta$ represent costly gates that \emph{must} be present, the dotted negations represent NOT gates that \emph{may} be present, and the cones represent connections to the rest of $G$.  The notation $R_G(\beta)$ means ``the set of costly gates that read gate $\beta$ in circuit $G$.''  These sets are depicted by shaded cones, because at least one of them must be non-empty --- otherwise, $G$ would not be an optimal circuit --- but we do not know which.  Note how the $Y$-tree and potential negation atop it are \emph{isolated} from the rest of the circuit, except for connections via the \emph{combiner} gate $\delta$.  We will describe in some detail how this single graft can be transmitted and sketch the issues involved in efficiently coding \emph{all} grafts and Y-trees for Bob.

First, Alice applies gate elimination to $G$ using a prefix of a terminal and layered all-stops restriction (Lemma \ref{lem:exists-all-stops}), eliminating all but one of the $y$-variables in $T_\mathcal{Y}$.  Because $T_\mathcal{Y}$ is an isolated formula in $G$, a single variable $y_j$ is left in $G$ at the end of this process: the local neighborhood transforms from Figure \ref{subfig:pre-elim} to Figure \ref{subfig:post-elim}.  Crucially, all gates outside $T_\mathcal{Y}$ are unaffected.  Now, Alice runs a final step of gate elimination, substituting $y_j$ according to the key.  The result will be as depicted in Figure \ref{fig:origin-F-nhood}: the local neighborhood of $\eta$ in $F$.  The key observation is that \emph{Bob has perfect knowledge of this structure} --- it is simply a sub-circuit of $F$.  If Alice can send (1) an identifier for $\eta$ (2) a ``diff'' between the neighborhood of $\eta$ in $F$ compared to $G$ (Fig. \ref{subfig:post-elim} vs. \ref{fig:origin-F-nhood})  and (3) a description of $T_\cY$ this would suffice for Bob to efficiently transform $F$ into $G$.

\begin{figure}[t]
    \centering
    \includegraphics[]{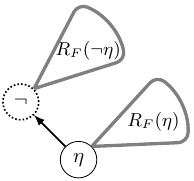}
    \caption{The origin $\eta$ after the $Y$-tree has been pruned.}
    \label{fig:origin-F-nhood}
\end{figure}

The most straightforward protocol would send $O(\log(\#\mathsf{gates}(F)))$ bits to identify $\eta$ for Bob.  This is too expensive if there is more than one Y-tree in $G$.  Recall that we can only tolerate runtime of $2^{O(n + m)}$ and intend to brute-force all possible codes.  There could be as many as $m$ Y-trees with a single variable each and thus $m$ origins, so brute-forcing this simple code would require time $2^{O(m\log(n) + n)}$ for $\#\mathsf{gates}(F) = O(n)$ --- super-polynomial in $2^{(n+m)}$ and therefore unacceptable.

Instead, Alice can send a $\#\mathsf{gates}(F)$-bit origin indicator vector $\chi$, where $\chi_i$ is 1 if gate $i$ of $F$ is the origin of some Y-tree.  It is feasible to brute-force these vectors in $2^{O(n)}$-time for any possible number of origins given a linear upper bound on the circuit size of $f$.  Returning to the case where $G$ has a single Y-tree, Bob identifies $\eta$ by reading the single 1-bit of $\chi$.  For the local neighborhood of $\eta$, the following information must be transmitted to summarize the ``diff'' between $F$ and $G$:
\begin{enumerate}
\item The costly functions computed by gates $\eta$ and $\delta$,
\item presence or absence of each possible NOT gate depicted in Figure \ref{subfig:post-elim},
\item whether $\eta$ is connected to the left or right input of $\delta$,
\item the description of $T_\cY$, and
\item the sets of gates that read from each individual element of the graft, $R_G(\cdot)$.
\end{enumerate}
Items 1 - 3 can be described by a constant-length bitstring, depending only on the enumeration of all possible ``graft'' sub-circuits implied by our characterization of gate elimination.  We code the exact Y-Tree $T_\cY$ (item 4) explicitly for now and eliminate it later.  Here, we are concerned with how to efficiently code the sets $R_G(\cdot)$ without sending explicit ``pointers'' to nodes of $F$, which remain too expensive per the discussion of transmitting $\eta$ above.

To code these wire movements efficiently, Alice will exploit the relationship between the gates reading from $\eta$ in $F$ and the gates reading from $\eta$ in $G$ as determined by gate elimination.  Bob knows the contents of $R_F(\eta)$ and $R_F(\neg \eta)$ --- the sets of costly gates reading from $(\neg)\eta$ in $F$.  Collect the gates that read from the graft, in both $G$ and $F$:
\begin{align*}
  R_G &= R_G(\eta) \cup R_G(\neg \eta) \cup R_G(\delta) \cup \R_G(\neg \delta)  \\
  R_F &= R_F(\eta) \cup R_F(\neg \eta)
\end{align*}
Observe that $R_G \subseteq R_F$ because during the single run of gate elimination that transforms Figure \ref{subfig:post-elim} into Figure \ref{fig:origin-F-nhood}, all the wires reading the eliminated gate $\delta$ are ``inherited'' by $\eta$.  Furthermore, \emph{Bob knows $\eta$ from $\chi$ and thus can reconstruct $R_F$}, so Alice can identify any wire relevant to the graft by coding ``the $j$-th element of $R_F$.''  Here we must apply the assumption that optimal circuits for $f$ have at most constant fanout, independent of $n$, to bound the length of these \emph{relative} wire-identifiers by a constant.  Thus, Alice can identify \emph{exactly which} wires should be moved from $\eta$ to read $(\neg)\delta$ in $G$.  Inspecting Figure \ref{subfig:post-elim} again, she can also code \emph{where} they move using constantly many bits, because there are only two options: reading from $\neg \delta$ or $\delta$ directly.

This discussion has outlined the base case of our encoding/decoding argument, where only a single Y-tree is transmitted to Bob.  Notice that, if there are multiple \emph{combiner-disjoint} Y-trees in $G$, an essentially identical strategy could encode all these Y-trees.  However, reverse gate elimination can create somewhat more complex structures: specifically, we can have combiners $\delta_i$ grafted onto \emph{a distinct combiner} $\delta_j$, instead of a gate $\eta$ that was present in the original circuit $F$.  These \emph{compounded combiners} correspond exactly to the ``adding a triple'' case in the proof that $Y$-Tree decompositions exist (Theorem \ref{thm:y-tree-decomposition}) where $\alpha$ occurs \emph{between} $\beta$ and an existing combiner.

Even so, Alice can use the fact that every combiner $\delta_i$ has a \emph{unique} origin $\eta$ to identify the ``first'' combiner grafted onto $\eta$ and instruct Bob to add subsequent $\eta$-derived combiners in depth-respecting order.  We now prove the required properties of origins and combiners.

\begin{definition}[Original \& Originating Gates]
Let $G$ and $F$ be circuits, $\rho$ be an all-stops restriction, and suppose $\rho(G) = F$.  We call a gate $\beta$ of $G$ \emph{original} if it is not garbage collected by $\rho$ --- i.e., if $\beta \in F$.  Furthermore, let $G$ have a $Y$-tree decomposition with elements $\langle \delta ,~ b ,~ T\rangle$.  The \emph{origin} of each $\delta$ is the depth-maximal original gate $\eta$ such that $X$ variables occur in $G[\eta]$ and there is a path from $\eta$ to the output of $G$ through $\delta$.
\end{definition}

The origin of a combiner can be depth-trivial: a single input gate $x_i$ or the unique output gate are both valid origins.  Intuitively, if $\eta$ is the origin of $\delta$, then $\delta$ ``takes'' wires from $\eta$ when it is grafted into the circuit.  When the maximum fanout of an optimal circuit is constant, this imposes a hard limit on the number of compounded combiners that share the same origin.

\begin{lemma}[Every Combiner has a Unique Origin]
\label{lem:unique-origins}
  Suppose $G$ is a circuit with $Y$-tree decomposition $\cD$, and let $\langle \delta ,~ b ,~ T\rangle$ be an arbitrary element of $\cD$.  Then $\delta$ has exactly one origin in $G$.
\end{lemma}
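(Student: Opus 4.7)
The approach is to locate the origin by tracing a canonical ``X-side chain'' upstream from $\delta$, and then argue that the terminus of this chain is the unique depth-maximal candidate. Let $\mu_0$ denote the $\neg b$ child of $\delta$; by Condition~3 of the $Y$-Tree Decomposition, $G[\mu_0]$ contains at least one $X$ variable. If $\mu_0$ is original, set $\eta^\star=\mu_0$. Otherwise, by Claim~\ref{cor:all-stop-rewrites-are-simple} the only non-original gates possibly arising in $G$ are (i) other combiners from $\cD$ (eliminated by a passing rule under the all-stops restriction of Lemma~\ref{lem:exists-all-stops}), or (ii) negation gates removed via constant- or double-negation elimination. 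In each case there is a canonical next node to follow upstream --- the $\neg b'$ child if $\mu_0$ is another combiner $\delta'$, or the unique input if $\mu_0$ is a non-original $\neg$ gate. This produces a deterministic walk $\delta \to \mu_0 \to \mu_1 \to \cdots$ whose depth strictly increases at each binary step, so it is finite and halts at the first original gate $\eta^\star$.

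Next I would verify that $\eta^\star$ satisfies both clauses of the origin definition. Condition~3 of the decomposition propagates the ``$X$ variable in subcircuit'' property up the chain through each combiner encountered, and $\neg$ gates trivially preserve it. The concatenated walk from $\eta^\star$ through $\mu_{k-1},\ldots,\mu_0,\delta$ to the output witnesses the path-through-$\delta$ requirement. For uniqueness, let $\eta'$ be any other candidate. Because the $b$ input of $\delta$ feeds only into $T$, a read-once formula over $Y$ alone (Condition~1), and $\eta'$ carries an $X$ dependency, every path from $\eta'$ to the output through $\delta$ must enter $\delta$ via the $\neg b$ input, i.e., pass through $\mu_0$. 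Iterating this observation at every combiner on the chain --- where the $b'$ side is again a $Y$-only read-once formula --- any such path must pass through each $\mu_i$, and in particular through $\eta^\star$. Thus $\eta'$ lies on the chain strictly upstream of $\eta^\star$, which places it at a different (deeper) depth, making $\eta^\star$ the unique depth-maximal candidate.

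The step I expect to be the main obstacle is the ``compounded combiners'' case, where the $\neg b$ child of a combiner is itself another combiner from $\cD$ (as produced by the ``adding a triple'' step in the proof of Theorem~\ref{thm:y-tree-decomposition}). Here one must verify that the walk never leaks onto the $b'$ side of an intermediate combiner, and that the non-intersecting property of the decomposition rules out any spurious alternative chain reaching $\delta$ from a different branch of $G$. A secondary subtlety is that some negation gates \emph{do} survive as original gates after normalization, so $\eta^\star$ may itself be a $\neg$ gate; the case analysis has to track exactly which negations $\rho$ consumes via Lemma~\ref{lem:normalization}, guided by the local neighborhoods depicted in Figure~\ref{fig:combiner}. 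Once these structural facts are pinned down, both existence and uniqueness fall out of the chain construction.
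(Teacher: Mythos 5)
Your proposal takes essentially the same approach as the paper: walk deterministically down the $\neg b$ side of $\delta$, recursing through any intermediate (compound) combiners, and terminate at the first original gate, with uniqueness coming from the fact that this walk is fully determined. Your treatment is slightly more careful than the paper's proof about the case where intermediate nodes are non-original negation gates (the paper simply asserts that the first non-combiner encountered is original), but the underlying argument is the same.
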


\begin{proof}
  We argue by induction on the structure of $G$, exploiting that the DeMorgan basis has fan-in two.  We will walk ``down'' the circuit from $\delta$ until an original gate is encountered.  By definition, $\delta$ has only two inputs: input $b$ leads to a $Y$-tree, and input $\neg b$ leads to a sub-circuit containing $X$ variables.  Thus we must go down the $\neg b$ argument to find the origin. Argument $\neg b =$ gate $\beta$ is either another combiner in the decomposition or not --- if $\beta$ is not a combiner, then it is an original gate, and thus unique because the path to $\beta$ from $\delta$ was fully determined.  $\beta$ is topologically maximal because it was the first non-combiner encountered walking ``down'' the circuit in topological order.  Then, if argument $\neg b$ is another combiner, we are done by induction: there is simply another deterministic choice to find a sub-circuit where $X$ variables occur, because argument $b$ of $\beta$ is also a $Y$-tree.
\end{proof}

Finally, observe that there are only constantly-many ways to eliminate a binary gate by a simple simplification (Claim \ref{cor:all-stop-rewrites-are-simple}) as depicted in Figure \ref{fig:widget-diagram}.  This follows by brute-forcing all possible sequences of simple simplifications around eliminating one costly gate, formalizing the idea of a particular ``graft'' sub-circuit from earlier in this section: call each possible realization of sub-circuits a \emph{widget}.

\begin{figure}[h]
    \centering
    \includegraphics[width=\textwidth]{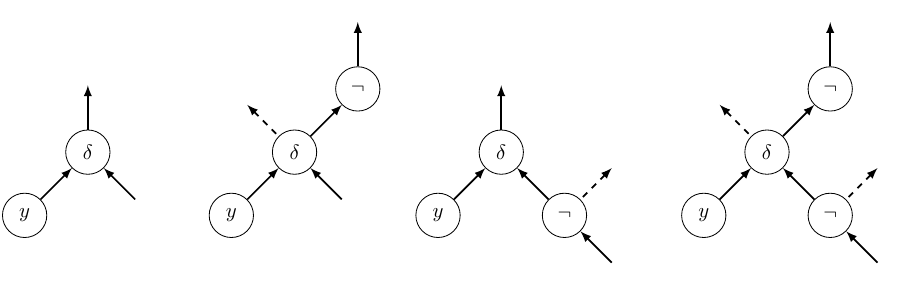}
    \caption{The four possible widgets in a splice code (up to symmetry). The node labeled $y$ represents which child of the combiner $\delta$ is the $Y$-tree. A dotted edge not leading to another node represents one (or more) possible connections to a \emph{costly} gate outside of the widget. A solid edge not leading to another node represents one (or more) guaranteed connections to a \emph{costly} gate outside of the widget.}
    \label{fig:widget-diagram}
\end{figure}

\subsection{Grafting \texorpdfstring{$Y$}{Y}-Trees: Efficient and Explicit Encodings}

Suppose $\eta$ is a gate in some circuit $C$, and let the gates $\zeta_1 , \dots , \zeta_{\leq \ell}$ read $\eta$ in $C$.  We use $\eta$-\emph{relative fanout-coding} to name another gate $\beta$ in a circuit $C'$ obtained by transforming $C$ by writing an $\ell$-bit vector $v$ where:
\begin{equation*}
  v_i(\beta,\eta) =
  \begin{cases}
    1, & \text{if $\zeta_i$ reads $\beta$ in $C'$} \\
    0, & \text{otherwise}
  \end{cases}
\end{equation*}
Trivially, when $C = C'$ and we fanout-code $\eta$ relative to itself, the result is $1^{\fo(\eta)} \circ 0^{\ell - \fo(\eta)}$.  Suitability for fanout-coding is the prime motivation of our requirement that gate identifiers are \textbf{never} mutated (only recycled or discarded) by circuit manipulation.

\begin{definition}[Splice Codes]
  Suppose $F$ is a circuit with maximum fanout $\ell$.  A splice code $E$ is a sequence of tuples that encode a transformation of $F$ into $G$, a circuit for a simple extension of $F$.  The splice code begins with an \textbf{Origin Indicator Vector:} a bitvector of length $CC(f)$ where entry $i$ is set to 1 iff gate $i$ of $F$ is an origin.  Then, for \textbf{each} origin $\eta$, a \textbf{sequence of \emph{splices}} follows --- one splice per combiner originating in $\eta$.  A splice is:
  \begin{enumerate}
  \item \textbf{Target Gate:} Which gate should be grafted onto? Written as $\eta$-fanout-relative code.
  \item \textbf{Selected Wires:} Which wires does the spliced widget take from the target gate?  Written as $\ell$-bit indicator vector, where 1-bits must be a subset of the target gate's code.
  \item \textbf{Widget:} A constant number of bits naming one of constantly-many combiner widgets.
  \item \textbf{Wire Moves:} Map from taken wires taken to gates of the widget: at most $(\ell - 1)$ constant-length identifiers.
  \item \textbf{Explicit $Y$-Tree:} A fully specified read-once formula in the $Y$ variables.
  \end{enumerate}
\end{definition}

\begin{theorem}[Splice Codes for Simple Extensions]
\label{thm:splice-codes}
Suppose $g \in \cF_{n+m}$ is a simple extension of $f \in \cF_n$ and let $\tilde{G}$ be an isomorphism class of minimal circuits for $g$.  Then, there is an ``unlocked'' isomorphism class $\mathsf{UL}(\tilde{G})$ of minimal circuits computing $f$ such that, for every representative $F$ of $\mathsf{UL}(\tilde{G})$, there is a \emph{splice code} $E$ such that an efficient algorithm $\mathtt{Decode}(F, E)$ prints a circuit in the isomorphism class $\tilde{G}$.  The length of $E$ is $O(\ell \cdot |F| \cdot m \cdot \operatorname{bits}(Y))$, where $\ell$ is the maximum fanout of $F$ and $\operatorname{bits}(Y)$ is the number of bits required to code a $Y$-tree on $m$ variables.
\end{theorem}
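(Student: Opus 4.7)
The plan is to construct $\mathsf{UL}(\tilde{G})$ and the splice code $E$ directly from the $Y$-tree decomposition of a representative $G \in \tilde{G}$. First I would pick any representative $G$ and apply the terminal, layered, all-stops restriction $\rho$ guaranteed by Lemma \ref{lem:exists-all-stops} using an arbitrary key; set $F = \rho(G)$, which is an optimal circuit for $f$. Define $\mathsf{UL}(\tilde{G})$ to be the isomorphism class of $F$. One must check that this is well-defined: a different choice of representative in $\tilde{G}$ or of key yields an isomorphic $F$, because all-stops restrictions remove exactly $m$ binary gates via simple simplifications (Claim \ref{cor:all-stop-rewrites-are-simple}) and every such $F$ must be a normalized optimal circuit for $f$.

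Next I would build $E$ by reverse-engineering the gate-elimination steps of $\rho$. Using Theorem \ref{thm:y-tree-decomposition}, every combiner $\delta$ of $G$ has a unique origin $\eta \in F$ by Lemma \ref{lem:unique-origins}; group combiners by origin and emit the origin indicator vector of length $|F|$. For each origin $\eta$, order its combiners in \emph{reverse} topological order (the deepest combiner is the one spliced onto $\eta$ first when reading the all-stops restriction backwards), with a canonical tie-breaker inherited from the layering. For each combiner in this order, the splice records: a target-gate identifier, which is $\eta$ itself for the first combiner of the group and the previously-emitted combiner for subsequent ones, written in $\eta$-relative fanout code (at most $\ell$ bits since $F$ has maximum fanout $\ell$); the indicator vector of wires taken from that target; a constant-length name for one of the finitely-many widgets of Figure \ref{fig:widget-diagram}; at most $\ell$ constant-length wire-move instructions; and the explicit $Y$-tree rooted at $\delta$, which costs $\operatorname{bits}(Y)$.

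The $\mathtt{Decode}(F, E)$ algorithm simply plays these instructions forward: for each origin $\eta$ flagged by the indicator vector, it processes the splices in order, instantiating each widget ``on top of'' the target gate, redirecting the selected wires according to the wire-moves, and hanging the read-once $Y$-tree off the $b$-child of the new combiner $\delta$. Because gate identifiers are never mutated by any of our circuit-manipulation primitives (only recycled or discarded), $\eta$-relative fanout codes remain interpretable throughout decoding, and because origins are unique (Lemma \ref{lem:unique-origins}) the splice groups do not interfere with each other's local neighborhoods. Correctness --- that the printed circuit lies in $\tilde{G}$ --- follows by induction on the number of combiners: after processing all splices rooted at a given origin, the local neighborhood matches Figure \ref{subfig:pre-elim} for each combiner, which is exactly the sub-circuit removed by the corresponding block of steps in $\rho$. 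The length bound $O(\ell \cdot |F| \cdot m \cdot \operatorname{bits}(Y))$ is immediate: the indicator costs $|F|$ bits, and there are at most $m$ splices (one per combiner), each of cost $O(\ell + \operatorname{bits}(Y))$.

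The main obstacle I anticipate is verifying that the enumeration in Figure \ref{fig:widget-diagram} is actually exhaustive and that the wire-move data suffices to disambiguate between widgets with symmetric external appearance. For this, one must do a careful case analysis tracking exactly how simple simplifications (Claim \ref{cor:all-stop-rewrites-are-simple}) attach and detach optional negation gates around the combiner $\delta$ and its origin $\eta$; the finitely-many possibilities must be checked to be genuinely distinguished by the triple (widget name, selected-wires vector, wire-move map). A secondary subtlety is choosing the tie-breaker among combiners sharing an origin so that the decoder recovers the correct nesting order for ``compounded combiners'' without spending non-constant bits --- this is why splices within an origin block are emitted in a canonical depth order inherited from the layered all-stops restriction, so no extra ordering information needs to be transmitted.
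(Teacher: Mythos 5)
Your proposal is correct and follows essentially the same route as the paper: both obtain $F$ via the all-stops restriction from Lemma \ref{lem:exists-all-stops}, define $\mathsf{UL}(\tilde{G})$ as the isomorphism class of $F$, group combiners by their unique origins (Lemma \ref{lem:unique-origins}), emit an origin indicator vector, and use $\eta$-relative fanout codes to address targets among compounded combiners without explicit gate identifiers. The only cosmetic differences are your choice of reverse-topological ordering within an origin block (the paper uses a BFS-derived ordering) and your unnecessary worry about well-definedness of $\mathsf{UL}(\tilde{G})$ under the choice of key --- the theorem is existential, so any one valid $F$ suffices.
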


\begin{proof}
  Take a representative $G$ of the isomorphism class of ciurcits $\tilde{G}$ with gates named such that they are compatible with depth and a topological order.  Let $\mathcal{D}$ be the $Y$-tree decomposition of $G$ (Theorem \ref{thm:y-tree-decomposition}) and let $\rho$ be the particular all-stops restriction of $G$ implicit in the proof of Theorem \ref{thm:y-tree-decomposition} such that $F = \rho(G)$ computes $f$ and $F$ is optimal for $F$.  We will ``reverse'' $\rho$ to obtain a circuit isomorphic to $G$ from $F$.

  First, list all the origins in $G$ --- it is immmediate from the definition and inspecting $\rho$ that this is possible.  Now, if the set of combiners associated with each origin is of size 1, it is clear that the trivial fanout-relative code is used to set the Target Gate for each combiner, and the legnth lower bounds are immediate because the overhead of encoding sequences is linear, and all fields of the sequences are linear in $\ell$, a constant, or proportional to the size of a particular $Y$-tree.  Origin gates are \textbf{never} named explicitly; the order of sequences of splices suffices to associate them with the appropriate splice-sequence.

  Suppose now that some gates of $G$ originate $q$ combiners where $1 < q \leq m$ and fix an arbitrary such origin $\eta$.  To order the splices, perform breadth-first-search starting from $\eta$ in $G$ and observe which wires of $\eta$ have been ``spliced'' with which combiners in each stop of the all-stops restriction $\rho$.  Using this ordering and the sub-sequences of $\rho$ where combiners originating in $\eta$ are eliminated, observe that at each step there is there is a set of maximally-shallow \emph{available combiners} that could be grafted to: those between $\eta$ and the output of the circuit which feed into a gate that used to be fed by $\eta$ directly.

  Each available combiner must be uniquely identified by an $\eta$-relative fanout-code, because otherwise an intermediate normal circuit $\rho$ would not be size-optimal: we could restrict using the key embedded in $\rho$ to find that the same gate must read $\eta$ twice, which triggers a resolving gate-elimination.  Therefore, even compound-combiners can be uniquely regenerated by splice codes.

To conclude the proof, observe that the isomorphism class of $F$ is exactly $\mathsf{UL}(\tilde{G})$, as desired, because all coding operations relied only on being given a circuit isomorphic to $F$ whose gates were named in depth-sorted and thus topological order.  This is because our formalization of circuit manipulation never introduces ``fresh'' identifiers; it only deletes or recycles them.
\end{proof}

\subsection{The Final Ingredient: Truth-Table Isomorphism}

There are two issues with trivial brute-force over splice codes. First, there are too many base circuits for $\XOR_n$: there are at least as many optimal $\XOR_n$ circuits as there are labeled rooted binary trees with $n$ leaves. Let $C_{n}$ be the $n^{\text{th}}$ Catalan number; there are $n!\,C_{n-1} = 2^{\omega(n \log n)}$ labeled rooted binary trees  \cite{van_Lint_Wilson_1992}. Similarly, there are $m!\,C_{m-1}$ explicit $Y$-trees reading all $m$ variables. However, in both expressions, the dominating factorial terms $n!$ and $m!$ arise from permuting the labels of the variables. But two circuits which can be transformed into one another by permuting inputs compute \emph{truth-table isomorphic functions}.

\begin{definition}[\cite{Luks99,ArvindV14}]
    Two functions $f,g \in \cF_{n}$ are \emph{truth-table isomorphic} if there exists a permutation $\pi$ on $[n]$ such that $g(x) = f(\pi(x))$.
\end{definition}
It is straightforward to connect truth-table isomorphism with permuting circuit inputs.
\begin{observation}\label{obs:permuting-variables-truth-table-isomorphism}
The following two statements, the first a universal statement and the second existential, describe the relationship between permuting circuit variable labels and truth-table isomorphism.
\renewcommand\labelenumi{(\theenumi)}
    \begin{enumerate}
        \item  If a Boolean function $f \in \cF_n$ is truth-table isomorphic to another Boolean function $g \in \cF_n$ then \textbf{any circuit} for $f$ can be transformed into some circuit for $g$ by relabeling input $x_i$ with $x_{\sigma(i)}$ for all $i \in [n]$ using some permutation $\sigma$ on $[n]$. 

        \item If \textbf{a given circuit} for $f \in \cF_n$ can be transformed into \emph{a} circuit for $g \in \cF_n$ by relabeling input $x_i$ with $x_{\sigma(i)}$ for all $i \in [n]$ using some permutation $\sigma$ on $[n]$ then $f$ is truth-table isomorphic to $g$.
    \end{enumerate}
\end{observation}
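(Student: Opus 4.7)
The plan is to prove both directions by unwinding the definition of circuit evaluation and tracking how relabeling input gates interacts with function evaluation. The common lemma underlying both parts is a simple semantic fact about relabeling: if $C$ is a circuit on variables $x_1, \dots, x_n$ and $C'$ is obtained from $C$ by relabeling every input gate carrying the name $x_i$ to the new name $x_{\sigma(i)}$, then for any assignment $\alpha = (\alpha_1, \dots, \alpha_n)$ to the new labels, the value $C'(\alpha)$ equals $C(\alpha_{\sigma(1)}, \dots, \alpha_{\sigma(n)})$. This is immediate by induction on the DAG of $C$: the relabeling changes only the source nodes, and the subsequent evaluation rules are identical. I would state this as a preliminary claim and then invoke it in both directions.

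For part (1), I would fix a permutation $\pi$ witnessing truth-table isomorphism, so that $g(\alpha) = f(\alpha_{\pi(1)}, \dots, \alpha_{\pi(n)})$ for all $\alpha \in \bool^n$. Given an \emph{arbitrary} circuit $C_f$ for $f$, take $\sigma := \pi$ and relabel each input $x_i$ to $x_{\pi(i)}$, producing $C_g$. The preliminary claim gives $C_g(\alpha) = C_f(\alpha_{\pi(1)}, \dots, \alpha_{\pi(n)}) = f(\alpha_{\pi(1)}, \dots, \alpha_{\pi(n)}) = g(\alpha)$, so $C_g$ computes $g$, establishing the existential conclusion for any starting circuit.

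For part (2), I would reverse the direction: let $C_f$ compute $f$ and suppose that relabeling $x_i \mapsto x_{\sigma(i)}$ turns $C_f$ into a circuit $C_g$ computing $g$. Apply the preliminary claim again to obtain $g(\alpha) = C_g(\alpha) = f(\alpha_{\sigma(1)}, \dots, \alpha_{\sigma(n)})$ for every $\alpha$, which is exactly truth-table isomorphism via $\sigma$.

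The only real pitfall is bookkeeping: one could easily end up witnessing the isomorphism with $\sigma^{-1}$ instead of $\sigma$ if one conflates ``the input formerly called $x_i$ is now called $x_{\sigma(i)}$'' with ``the variable now called $x_i$ used to be called $x_{\sigma(i)}$.'' I would pin down the convention at the start (the relabeling is a renaming of \emph{source nodes}, not of \emph{assignment coordinates}) and carry it consistently through both evaluations. No additional machinery from earlier in the paper is needed; the observation is essentially a definitional unwinding.
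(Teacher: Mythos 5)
Your proof is correct and takes essentially the same definitional-unwinding approach as the paper: isolate the semantic fact that relabeling input gates by $\sigma$ turns $C(\cdot)$ into $C(\sigma(\cdot))$, then apply it in each direction. Your caution about the $\sigma$-versus-$\sigma^{-1}$ pitfall is well placed --- the paper's own proof of part~(1) asserts the intermediate identity $G(x) = F(\pi^{-1}(x))$ while relabeling $x_i \mapsto x_{\pi(i)}$, whereas the correct identity under that relabeling is $G(x) = F(\pi(x))$; the existential conclusion survives because one can just use the inverse permutation as the witness, but your version gets the bookkeeping right from the start.
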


\begin{proof}
    We first prove $(1)$. Let $\pi$ be the witnessing permutation between $f$ and $g$, i.e. $f(x) = g(\pi(x))$. Observe that $f(\pi^{-1}(x)) = g(\pi(\pi^{-1}(x)))=g(x).$ Let $F$ be any circuit for $f$. Let $G$ be the circuit obtained by relabeling each input $x_i$ by $x_{\pi(i)}$. We wish to argue $G$ computes $g$. Observe that evaluating $g$ on input $x$ is the same as evaluating $F$ on $\pi^{-1}(x)$ since $x_{\pi(\pi^{-1}(i))} = x_i$. Therefore $G(x) = F(\pi^{-1}(x)) = f(\pi^{-1}(x)) = g(x)$.
    
    For $(2)$, fix some optimal circuit $F$ and let $\sigma$ be the witnessing permutation used to relabel $F$ to obtain $G$, a circuit for $g$. We have $f(x) = F(x) = G(\sigma^{-1}(x)) = g(\sigma^{-1}(x))$. As $\sigma$ is a permutation on $[n]$, $\sigma^{-1}$ is also a permutation on $[n]$ and thus $f$ and $g$ are truth-table isomorphic.
\end{proof}

The implication is that do not need to try every possible base $\XOR_n$ circuit or even try every explicit splice code. We can take an unlabeled base circuit and unlabeled $Y$-trees and assign variables arbitrarily. We then just need to check whether the function our reconstructed circuit computes is isomorphic to $g$. This is feasible because (rather surprisingly) truth-table isomorphism testing can be done in polynomial time.

\begin{theorem}[Corollary 1.3 of \cite{Luks99}]
\label{thm:tt-iso}
    Given the truth-tables for two Boolean functions, testing whether they are equivalent under permutation of variables can be done in time $c^{O(n)}$ where $c$ is a constant.
\end{theorem}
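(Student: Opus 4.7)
The plan is to reduce truth-table isomorphism to a structured hypergraph isomorphism problem and then invoke Luks' polynomial-time machinery for permutation-group manipulation. The naive brute force --- try all $\pi \in S_n$ and check whether $f(x_1,\dots,x_n) = g(x_{\pi(1)},\dots,x_{\pi(n)})$ for every $x \in \bool^n$ --- costs $n! \cdot 2^n = 2^{\omega(n)}$, so any algorithm achieving $c^{O(n)}$ must bypass the factorial dependence on $n$.

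First, I would associate to each Boolean function $f \in \cF_n$ a colored incidence structure $H(f)$ on ground set $[n]$, whose $2^n$ hyperedges are the subsets $S_x = \{i \in [n] : x_i = 1\}$, each labeled by the color $f(x)$. A permutation $\pi \in S_n$ relabels the ground set and hence the hyperedges, and one easily verifies that $f$ and $g$ are truth-table isomorphic iff $H(f) \cong H(g)$ as colored hypergraphs. This reduction runs in time linear in the truth-table size.

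Second, I would invoke Luks' recursive framework for colored hypergraph isomorphism. The core scheme is divide-and-conquer over a candidate permutation group $G \leq S_n$: at each level, branch over cosets of a bounded-index subgroup of $G$, refine the color partition induced by the current partial match, and recurse on pointwise stabilizers. Classical computational group theory --- Schreier--Sims for a strong generating set, orbit computation, and intersection with subgroups --- supplies the necessary subroutines, each polynomial in $|G|$ and $n$.

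The main obstacle is the group-theoretic amortization. The recursion tree can have depth $\Theta(n)$ with naive branching factor $n$, which would yield $n^n = 2^{O(n \log n)}$ and miss the target. Luks' key contribution is showing that the \emph{cumulative} branching along any root-to-leaf path is bounded by $c^n$, via a potential argument tracking how orbit sizes and color-class refinements interact during the recursion. Matching this bound is the nontrivial content of Corollary 1.3 of \cite{Luks99}; the hypergraph-isomorphism reduction in step one contributes nothing meaningful to the final complexity, so essentially all of the work lies in reproducing this group-theoretic analysis.
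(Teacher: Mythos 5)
The paper does not prove this statement; it is cited directly as Corollary~1.3 of Luks, and used as a black box in Algorithm~\ref{alg:ckt-se-solver}. So there is no in-paper proof to compare against. What you have written is an attempted reconstruction of Luks' own proof, and it is worth assessing on those terms.

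Your reduction from truth-table isomorphism to colored hypergraph isomorphism on ground set $[n]$ is correct and standard, and it mirrors the organization of Luks' paper, which proves a $c^n$-time algorithm for hypergraph isomorphism and derives Boolean function equivalence as a corollary. However, past that point your proposal is a sketch, not a proof, and you say so yourself: ``essentially all of the work lies in reproducing this group-theoretic analysis.'' That is an accurate self-assessment, and it means the proposal does not actually establish the theorem. The phrases ``branch over cosets of a bounded-index subgroup,'' ``refine the color partition,'' and ``recurse on pointwise stabilizers'' describe the general flavor of Luks-style isomorphism algorithms, but the content of Corollary~1.3 is precisely the nontrivial amortization showing that this recursion, applied to a structure with $2^n$ hyperedges over only $n$ vertices, closes at $c^{O(n)}$ rather than the naive $2^{O(n\log n)}$. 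You name this as ``the main obstacle'' and then do not overcome it. If your goal was to justify the citation's use in the paper, the one-line attribution already suffices; if your goal was to supply a proof, the missing piece is the entire group-theoretic recursion and its potential-function analysis, for which you would need to actually reproduce Luks' argument rather than gesture at it.
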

To this end we formally define ``unlabeled'' optimal circuits.
\begin{definition}[Open Optimal Circuit]
    A circuit is \emph{open} if all of its inputs are unlabeled. An open circuit is optimal for a Boolean function $f$ if there exists some labeling of its' inputs so that the resulting circuit is an optimal circuit for $f$.
\end{definition}

\subsection{An Efficient Simple Extension Problem Solver for \texorpdfstring{$\XOR$}{XOR}}
\label{sec:se-algorithm}
The following algorithm, when given $L$, a complete list of representatives from each open isomorphism class of optimal circuits for $f$, decides the $f$-Simple Extension Problem. By ``implicit splice code,'' we mean a splice code without the $Y$-trees specified.

 \begin{algorithm}[h]
   \caption{$\texttt{Ckt-SE-Solver}(n \in \mathbb{N},~ g \in \cF_{n+m}, L)$}
     \label{alg:ckt-se-solver}
     \begin{algorithmic}[1]
         \State Verify $\exists \rho \in \{0,1\}^m$ such that $g|_\rho \equiv f$ and \textbf{return False} if not
         \State Verify $g$ is non-degenerate and \textbf{return False} if not
         \For{each open circuit $F$ in $L$}
         \State $s \gets$ number of costly gates in $F$
         \State{label the open nodes of $F$
           by an arbitrary permutation of $x_1, \dots , x_n$}
         \For{each implicit splice code $E$ of length at most $m + s$}
         \State $d \gets$ number of combiners recorded in $E$
         \For{each $a_1, \ldots a_d \in \mathbb{N}$ s.t. $\sum_{i = 1}^d a_i = m$}
        \Comment{Valid distribution of variables to $Y$-trees}
        \For{each $d$-tuple of read-once formulas
          with $a_1, \dots a_d$ open nodes}
        \State{label the open nodes of each read-once formula
          by an arbitrary permutation of $y$-inputs}
        \State{insert the read-once formulas into combiner instructions of $E$ in lexicographic order}
        \State $\Tilde{G} \leftarrow \texttt{Decode}(F, E)$ 
        \If{$\texttt{tt}(\Tilde{G}) \simeq \texttt{tt}(g)$}
        \Comment{Test using the procedure of Theorem \ref{thm:tt-iso}}
        \State \Return \textbf{True}
        \EndIf
        \EndFor
         \EndFor
         \EndFor
         \EndFor
         \State \Return \textbf{False}
     \end{algorithmic}
 \end{algorithm}

\begin{theorem}\label{thm:ckt-se-solver}
    Given $L$ is a list with a representative from every optimal open circuit class of $f$, Algorithm \ref{alg:ckt-se-solver} decides the $f$-Simple Extension Problem in time $|L| \cdot 2^{O(\ell(s+m))}$ where $\ell$ is the maximum fanout of any node in any circuit in $L$ and $s = CC(f)$.
\end{theorem}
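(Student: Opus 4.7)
The plan is to verify correctness---both soundness and completeness---and then bound the runtime by accounting for each loop in Algorithm~\ref{alg:ckt-se-solver}. The argument assembles the structural decomposition of optimal simple extension circuits (Theorem~\ref{thm:y-tree-decomposition}), the encoding of splice codes (Theorem~\ref{thm:splice-codes}), and the witness condition (Lemma~\ref{lem:no-false-witnesses}), using truth-table isomorphism (Theorem~\ref{thm:tt-iso}) to sidestep the need to try all variable labelings explicitly.

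For soundness, suppose the algorithm returns \textbf{True} in some iteration. Lines 1--2 already ensure $g$ is a non-degenerate extension of $f$. The decoded circuit $\tilde{G}$ has size exactly $s + m = CC(f) + m$ by construction, reads each extension variable exactly once, and reads every base variable (since $F \in L$ is optimal for non-degenerate $f$ and splicing only replaces wires reading \emph{existing} gates). If $\texttt{tt}(\tilde{G}) \simeq \texttt{tt}(g)$, then part (1) of Observation~\ref{obs:permuting-variables-truth-table-isomorphism} yields a relabeling of $\tilde{G}$ that computes $g$ with the same size and read-profile, so Lemma~\ref{lem:no-false-witnesses} certifies $g$ as a simple extension of $f$.

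For completeness, suppose $g$ is a simple extension of $f$ with optimal circuit $G$. Theorem~\ref{thm:y-tree-decomposition} supplies a total $Y$-tree decomposition of $G$; Theorem~\ref{thm:splice-codes} then furnishes an isomorphism class $\mathsf{UL}(\tilde{G})$ of optimal $f$ circuits along with a splice code $E^{\ast}$ that, applied to any representative, decodes to a circuit isomorphic to $G$. By hypothesis, the open (unlabeled) version of $\mathsf{UL}(\tilde{G})$ appears in $L$ and is visited by the outer loop. Although the algorithm labels the chosen open base circuit and each open $Y$-tree arbitrarily, the nested inner loops still enumerate every implicit splice code, every composition $(a_1, \ldots, a_d)$ of $m$, and every tuple of open read-once formulas of the appropriate shape. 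For at least one such combination, $\texttt{Decode}(F, E)$ produces a circuit structurally identical to $G$ up to a permutation of base and extension inputs; part (2) of Observation~\ref{obs:permuting-variables-truth-table-isomorphism} then guarantees $\texttt{tt}(\tilde{G}) \simeq \texttt{tt}(g)$, triggering the acceptance test.

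For the runtime, the outer loop contributes a factor of $|L|$. An implicit splice code consists of an $s$-bit origin indicator vector plus, per splice, $O(\ell)$ bits for the target gate, selected wires, widget, and wire moves; with at most $m$ splices, the number of implicit codes is $2^{s + O(\ell m)} = 2^{O(\ell(s+m))}$. The number of compositions of $m$ is at most $2^{O(m)}$, and the number of open read-once formulas on $a_i$ variables is the Catalan number $C_{a_i-1} \leq 4^{a_i}$, so the enumeration of formula tuples contributes $\prod 4^{a_i} = 4^m = 2^{O(m)}$. Each inner iteration runs $\texttt{Decode}$ in $\poly(s+m)$ time and the truth-table isomorphism test in $c^{O(n+m)}$ time (Theorem~\ref{thm:tt-iso}); since $\ell \geq 1$ and $s = \Omega(n)$, both are absorbed into $2^{O(\ell(s+m))}$, yielding the claimed bound. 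The principal obstacle is the completeness argument: one must reconcile the algorithm's arbitrary input labelings with the isomorphism-relative nature of splice codes, invoking both directions of Observation~\ref{obs:permuting-variables-truth-table-isomorphism} to ensure coverage of every minimal simple extension circuit.
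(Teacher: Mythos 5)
Your proof is correct and follows essentially the same approach as the paper: soundness via Observation~\ref{obs:permuting-variables-truth-table-isomorphism} part (1) plus Lemma~\ref{lem:no-false-witnesses}, completeness via Theorem~\ref{thm:splice-codes} plus Observation~\ref{obs:permuting-variables-truth-table-isomorphism} part (2), and runtime via $|L|$ times the splice-code/Catalan/composition enumeration, with the truth-table-isomorphism test of Theorem~\ref{thm:tt-iso} absorbed since $s = \Omega(n)$. One minor imprecision: the number of open read-once formulas on $a_i$ variables is not just the Catalan number $C_{a_i - 1}$ but $C_{a_i - 1} \cdot 2^{O(a_i)}$ (one must also choose $\land/\lor$ labels and negation placements), though this does not affect the stated $2^{O(m)}$ bound.
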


\begin{proof}
    We first prove completeness, i.e. that if $g$ is a simple extension then Algorithm \ref{alg:ckt-se-solver} returns true. By definition of simple extension, the checks in steps $1$ and $2$ pass. Since $g$ is a simple extension, $CC(g) = s + m$; fix any optimal circuit $G$ for $g$. By Theorem \ref{thm:splice-codes}, there is an explicit splice code $\hat{E}$ and circuit isomorphism class representative $\hat{F}$ such that $\texttt{Decode}(\hat{F}, \hat{E})$ produces a circuit $\hat{G}$ isomorphic to $G$. Let $\hat{T}_1, \ldots \hat{T}_t$ be the explicit $Y$-trees in $\hat{E}$. During some iteration of the algorithm (1) the implicit splice code will be consistent with $E$ and (2) $F$ and $T_1, \ldots T_t$, the chosen open read-once formulas, will be isomorphic to $\mathrm{open}(\hat{F}), \mathrm{open}(\hat{T}_1), \ldots \mathrm{open}(\hat{T}_t)$ (where $\mathrm{open}(C)$ is circuit $C$ with its input labels stripped away). In this iteration, let $F'$ and $E'$ be the arbitrary explicit completion of the open circuit $F$ and implicit splice code $E$ chosen by the algorithm. Let $G' = \texttt{Decode}(F', E')$ and observe $\mathrm{open}(G')$ is isomorphic to $\mathrm{open}(\hat{G})$. Thus there is a permutation of the variables of $G'$ such that the resulting circuit is isomorphic to $\hat{G}$. By part (2) of Observation \ref{obs:permuting-variables-truth-table-isomorphism}, the function computed by $G'$ is truth-table isomorphic to $g$ and thus the algorithm accepts.
    
    For soundness, observe that if the algorithm accepts then it passed step $1$ and $2$ which demonstrate the existence of a key and non-degeneracy of $g$. Finally in steps 12 and 13, the algorithm must have constructed a constant-free circuit of size $s+m$ which computes a function truth-table isomorphic to $g$. Applying part (1) of Observation $\ref{obs:permuting-variables-truth-table-isomorphism}$ we can transform this circuit into a circuit for $g$ by relabeling inputs. This constant-free circuit of size $s+m$ for $g$, in conjunction with the fact $f$ is non-degenerate and the existence of a key, guarantees $CC(g) = s + m$ by Lemma \ref{lem:no-false-witnesses}. Therefore, by definition, $g$ is a simple extension of $f$.

    For the running time, we first observe that steps $1$ and $2$ take $2^{O(s+m)}$ since the algorithm just verifies whether one of the $2^m$ restrictions yields $tt(f)$ (and $s = \Omega(n)$) and verifies for each of the $m$ extension variables there is an assignment where flipping the value of the variable changes the output of $g$. There are then $|L|$ iterations where each run in time $2^{O(\ell(s+m))}$ where $\ell$ is the maximum fanout of any node in any circuit in $L$ as the algorithm builds this number of explicit splice codes. The algorithm tries every explicit splice code where $y$-inputs are labeled arbitrarily (say, in increasing order). The number of such codes is $\sum_{d=1}^{m}r(d)$ where $r(d)$ is the number of such splice codes with $d$ combiners. We see 
    \[r(d) = \sum_{\substack{a_1 + \ldots + a_d  = m \\ a_i \in \mathbf{N}^+ }}\,\,\prod_{j=1}^{d} t(a_j),\] where $t(a_j)$ is the number of read-once formulas with $a_j$ open nodes.

    We first bound $t(a_j)$. A number of read-once formula with $a_j$ open nodes corresponds to the number of rooted binary trees where each internal node is labeled $\land$ or $\lor$ and each edge is weighted $0$ or $1$ (representing if there is a negation between those two nodes). The number of unlabeled rooted binary trees with $n$ leaves is $C_{n-1}$---the $(n-1)^{\text{th}}$ Catalan number \cite{van_Lint_Wilson_1992}. Thus $t(a_j) = C_{a_j-1} \cdot 2^{a_j - 1} \cdot 2^{2(a_j - 1) + 1} = C_{a_j-1} \cdot 2^{2{a_j}}$ where the latter factors are from labeling the internal nodes and edge weights (including the output edge) respectively. Since $C_{a_j-1} < 4^{a_j}$, we have $t(a_j) < 2^{5{a_j}}$ \cite{van_Lint_Wilson_1992}. Since the $a_j$ sum to $m$, then $\prod_{j=1}^d t(a_j) < 2^{5m} = 2^{O(m)}$. The number of solutions to $a_1 + \ldots + a_d = m$ where $a_j > 0$ is ${m + d - 1 \choose d - 1}$ \cite{Feller1968}. Notice that ${m + d - 1 \choose d - 1} \leq \sum_{i=0}^{m+d-1}{m + d - 1 \choose i} = 2^{m+d-1}$ \cite{Feller1968}. Since $k < m$ then this is $2^{O(m)}.$ Therefore $r(d) < 2^{O(m)}$ and thus the number of explicit splice codes is at most $m 2^{O(m)} = 2^{O(m)}$.

    Once an explicit splice code is constructed, constructing $\tilde{G}$ with the decoder takes $\mathrm{poly}(s+m)$ time and testing truth table isomorphism takes $O((s+m)^2) \cdot 2^{O(n+m)} \cdot 2^{O(n+m)} = 2^{O(n+m)}$ as the algorithm has to write the truth-table and then run the algorithm from Theorem \ref{thm:tt-iso}. Since $s = \Omega(n)$ this takes $2^{O(s+m)}$. Overall the running time is therefore $|L| \cdot 2^O(\ell \cdot (s + m)).$
\end{proof}

In general, $|L|$ may be exponential in $|tt(g)|$, $\ell$ may be $\omega(1)$, and $s$ may be superlinear in $n$. As such Algorithm \ref{alg:ckt-se-solver} is not a polynomial time algorithm for the Simple Extension problem in general. However, as noted in Corollary \ref{cor:XOR-params}, these parameters for $\XOR_n$ are tractable. This yields our main theorem:
\begin{corollary}\label{cor:xor-se-is-easy}
    The Simple Extension Problem with the base function $f = \XOR_n$ is in $\P$
\end{corollary}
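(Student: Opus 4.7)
The plan is to apply Theorem \ref{thm:ckt-se-solver} directly, with $f = \XOR_n$. It suffices to verify that the three parameters controlling the running time $|L| \cdot 2^{O(\ell(s+m))}$ are sufficiently small, and that a list $L$ of representatives of open optimal $\XOR_n$ circuits can be enumerated efficiently. First, by Schnorr's exact bound $CC(\XOR_n) = 3(n-1)$, we immediately have $s = O(n)$, so the $2^{O(s+m)}$ factor contributes only $2^{O(n+m)}$.

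Second, I would invoke our Main Lemma (Theorem \ref{thm:XOR-structure}) to bound $|L|$ and $\ell$. The theorem says every optimal $\XOR_n$ circuit is obtained by arranging $(n-1)$ $(\neg)\XOR_2$ sub-blocks as a binary tree on $n$ inputs. An open optimal circuit is therefore specified by (i) a shape --- an unlabeled rooted binary tree with $n$ leaves, of which there are $C_{n-1}$, the $(n-1)^{\text{th}}$ Catalan number --- together with (ii) a choice, at each of the $n-1$ internal blocks, of which of the constantly many $(\neg)\XOR_2$ widgets (and negation-pattern around it) to use. Since $C_{n-1} = O(4^n)$ and the per-block choice contributes only a constant factor per node, we obtain $|L| = 2^{O(n)}$. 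Enumerating $L$ in time $2^{O(n)}$ is straightforward by walking the standard Catalan recursion and pairing each tree with all constant-many block labelings.

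Third, I would bound $\ell$. Inspecting the $(\neg)\XOR_2$ widgets permitted by Theorem \ref{thm:XOR-structure}, every internal node has fanout at most a small absolute constant (each gate inside a block feeds at most a constant number of other gates, and each block outputs either to one other block or to the circuit's output). Hence $\ell = O(1)$ over all circuits in $L$.

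Plugging into Theorem \ref{thm:ckt-se-solver} gives a running time of
\[
|L| \cdot 2^{O(\ell(s+m))} \;=\; 2^{O(n)} \cdot 2^{O(n+m)} \;=\; 2^{O(n+m)},
\]
which is polynomial in the input size $|tt(g)| = 2^{n+m}$. The main obstacle in this corollary is not the counting itself but the structural characterization it rests on: without Theorem \ref{thm:XOR-structure} (in particular, the extension from $\cR$ to $\cD$ established in Section \ref{sec:opt-XOR-ckt}), one could not bound $|L|$, $\ell$, or enumerate $L$ efficiently. Given that characterization, the corollary reduces to an arithmetic check that the three parameter bounds above combine to absorb into $2^{O(n+m)}$.
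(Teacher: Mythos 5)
Your proposal is correct and takes essentially the same route as the paper: the paper derives this corollary by plugging the parameter bounds of Corollary~\ref{cor:XOR-params} (linear size, constant fanout, $2^{O(n)}$ open optimal circuits, all consequences of Theorem~\ref{thm:XOR-structure}) into Theorem~\ref{thm:ckt-se-solver}. You spell out the Catalan-number counting and the efficient enumerability of $L$ slightly more explicitly than the paper does, which is a reasonable elaboration rather than a departure.
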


\section{Optimal \texorpdfstring{$\XOR$}{XOR} Circuits Are Binary Trees of \texorpdfstring{$\XOR_2$}{XOR2} Sub-Circuits}
\label{sec:opt-XOR-ckt}

Algorithm \ref{alg:ckt-se-solver} of Section \ref{sec:xor-se-solver} can only rule out a particular $f$ from showing hardness for $\MCSP$ via $\SEP{f}$ when the optimal circuits for $f$ are well-understood. It requires an exact characterization of the set of all optimal circuits---a requirement that currently is rarely fulfilled.

One of the earliest studied explicit Boolean functions, $\XOR$, is one of the only functions for which we even have partial fulfillment of this ornery prerequisite. Recall the definition $\XOR$, the parity function.

\begin{definition}[$\XOR$]\label{def:xor}
    For $x \in \{0,1\}^n$, 
        \(
            \XOR_n(x) = 
            \begin{cases} 
                1 & \text{ if an odd number bits of } x \text{ are } 1 \\
                0 & \text{otherwise}.
            \end{cases}
        \)
\end{definition}

Schnorr proved one of the first explicit circuit lower-bounds on $\XOR$.

\begin{theorem}[\cite{Schnorr74}]
\label{thm:schnorr}
  $\XOR_n$ requires at least $3(n-1)$ gates in the DeMorgan basis.
\end{theorem}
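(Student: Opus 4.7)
The plan is to argue by induction on $n$. The base case $n=2$ follows from direct inspection: any constant-free $\cD$-circuit with at most two binary gates realizes a function of the form $\ell_1 \star \ell_2$ or $(\ell_1 \star_1 \ell_2) \star_2 \ell_3$ (possibly with top-level negation), where each $\ell_i$ is a literal over $\{x_1, x_2\}$ and $\star_i \in \{\land, \lor\}$. Enumerating these finitely many cases shows that the truth table $(0,1,1,0)$ of $\XOR_2$ never arises; hence $CC^\cD(\XOR_2) \geq 3$.

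For the inductive step, assume $CC^\cD(\XOR_k) \geq 3(k-1)$ for every $k < n$ and let $C$ be a normalized optimal circuit for $\XOR_n$. The key technical lemma I would prove is that there exist an input variable $x_i$ and a value $b \in \{0,1\}$ such that substituting $x_i \to b$ in $C$ and then normalizing via Lemma \ref{lem:normalization} eliminates at least three binary gates. Since $\XOR_n|_{x_i \to b}$ is either $\XOR_{n-1}$ or $\neg \XOR_{n-1}$, and these two functions share the same $\cD$-complexity (a top-level $\neg$ gate never changes the binary-gate count), the resulting circuit witnesses $CC^\cD(\XOR_{n-1}) \leq |C| - 3$. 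Combined with the inductive hypothesis, $3(n-2) \leq |C| - 3$, i.e.\ $|C| \geq 3(n-1)$.

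To establish the key lemma, I would first observe that in any optimal $\XOR_n$ circuit, every input variable has fan-out at least two; otherwise a single-bit restriction of a fan-out-1 variable would, via a fixing or passing rule at the unique gate it feeds, yield a circuit for $(\neg)\XOR_{n-1}$ eliminating too few gates, and iterating with a second variable would violate the inductive bound. I would then locate a depth-minimal binary gate $\alpha$ in $C$, whose two inputs must be literals $(\neg)x_i$ and $(\neg)x_j$, and pick the substitution $x_i \to b$ that makes $\alpha$ fire a \emph{fixing} rule (so $\alpha$ collapses to a constant, not a pass-through). After this substitution, the total constant fan-out in $C$ is at least $\fo(\alpha) + \fo(\gamma) \geq 2$, where $\gamma$ is the second gate reading $x_i$; Lemma \ref{lem:normalization} then eliminates at least two further binary gates beyond $\alpha$ itself, for a total of three.

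The main obstacle will be the case analysis around the second occurrence of $x_i$: one must check that the value $b$ chosen to fix-eliminate $\alpha$ is simultaneously compatible with producing a cascading simplification at $\gamma$, and one must handle the delicate sub-cases where $x_i$ and $x_j$ share gates, where the second occurrence sits directly under a negation, or where $\gamma$ is itself another depth-minimal gate already accounted for. These configurations are finite in number; in each one, the fact that $\XOR_n$ is non-degenerate on every variable --- together with the fan-out-$\geq 2$ property established for \emph{all} variables --- rules out the problematic possibilities, letting the three-gate bound go through uniformly and completing the induction.
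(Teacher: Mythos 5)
The paper does not prove this theorem itself --- it is cited from \cite{Schnorr74}, and Section 1.4.1 only sketches Schnorr's strategy informally. Your skeleton matches that strategy (one-bit restriction eliminating three costly gates, plus downward self-reducibility of $\XOR$), and your base case, your reduction of ``$|C|\ge 3(n-1)$'' to the key lemma, and your choice of a depth-minimal binary gate $\alpha$ reading two distinct literals are all sound. However, there is a genuine gap in the fan-out argument and a bookkeeping error in the gate count.

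Fan-out gap: your justification that every input has fan-out at least two --- ``iterating with a second variable would violate the inductive bound'' --- does not produce a contradiction. If $x_i$ has fan-out $1$ and you substitute it, eliminating one gate yields a circuit of size $|C|-1$ for $(\neg)\XOR_{n-1}$; the inductive hypothesis then gives $|C|\ge 3(n-2)+1=3n-5$, which is weaker than the goal and is satisfied by \emph{every} circuit for $\XOR_n$, so nothing is violated. The correct argument (exactly the paper's Lemma \ref{lem:xor-is-read-twice}, case 1) exploits non-degeneracy rather than the inductive bound: if $x_i$ feeds only $\alpha$, the sibling input $\beta$ of $\alpha$ computes a non-constant function (by optimality) of variables disjoint from $x_i$, so some restriction of \emph{other} variables drives $\beta$ to the fixing value of $\alpha$. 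This disconnects $x_i$ without ever substituting it, contradicting Fact \ref{fact:xor-non-degeneracy}. Bookkeeping error: the quantity $\fo(\alpha)+\fo(\gamma)$ is not the constant fanout after substitution, and $\fo(\gamma)$ --- the fanout of the binary gate $\gamma$ --- is not a constant's fanout at all. What you need is: after substituting $x_i\to b$ the constant fanout is $\fo(x_i)\ge 2$; after the fixing rule at $\alpha$ (which requires you to first observe $\alpha$ is not the output, else the whole circuit becomes constant) the constant fanout is $\fo(\alpha)+1\ge 2$ (one wire from the constant $x_i$ to $\gamma$, plus $\fo(\alpha)$ wires from the new constant $\alpha$); then Lemma \ref{lem:normalization} applied to this intermediate circuit eliminates at least two more binary gates, for a total of three. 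You must also note that no \emph{input} gate can be garbage-collected during normalization, since the restricted circuit computes $(\neg)\XOR_{n-1}$, which depends on every remaining variable.
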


This lower-bound in fact has a matching upper bound and the construction is straightforward: take any binary tree with $n$ leaves labeled $x_1, \ldots, x_n$ where the $n-1$ interior nodes have been labeled by $\oplus$ and replace the $\oplus$ nodes with any circuit of size $3$ that computes $\XOR_2$. Furthermore, since $\neg$ gates do not count towards the circuit size, any circuit for $\XOR_n$ can easily be transformed into an equal size circuit computing $\neg\XOR_n$ and vice versa. Combining these observations yields:

\begin{corollary}[\cite{Schnorr74,Wegener1987}] 
\label{cor:opt-xor-size}
  A circuit $C$ computing $(\neg) \XOR_n$ is optimal if and only if $|C| = 3(n-1)$.
\end{corollary}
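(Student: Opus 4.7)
The plan is to bundle three ingredients that are essentially in hand: Schnorr's lower bound (Theorem~\ref{thm:schnorr}), the matching upper-bound construction sketched in the paragraph preceding the corollary, and the fact that $\neg$-gates are free under $\mu_{\cD}$. First I would fix the measure to $\mu_{\cD}$ and establish $CC(\XOR_n) = 3(n-1)$. The lower bound is exactly Theorem~\ref{thm:schnorr}. For the matching upper bound, I would fix any rooted binary tree $T$ with $n$ leaves labeled $x_1,\dots,x_n$, and replace each of the $n-1$ internal nodes by a $3$-binary-gate implementation of $\XOR_2$, e.g.\ $(u \land \neg v)\lor(\neg u \land v)$. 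A short induction on the subtrees of $T$ shows the resulting circuit computes $\XOR_n$, uses exactly $3(n-1)$ binary gates, and is normalizable (each $\XOR_2$ block applies $\neg$ only to raw inputs of the block, so no double-negations arise at the interfaces between blocks). Hence $CC(\XOR_n) = 3(n-1)$, and by the definition of optimality in $\cD$ a circuit $C$ for $\XOR_n$ is optimal if and only if $|C| = 3(n-1)$.

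Next I would transfer from $\XOR_n$ to $\neg\XOR_n$. Because $\neg$-gates contribute zero to $\mu_{\cD}$, appending a fresh $\neg$ at the output of any circuit $C$ for $\XOR_n$ yields a circuit for $\neg\XOR_n$ of the same $\mu_{\cD}$-size, and deleting an output $\neg$ (when present) performs the reverse. The only item to check is that these conversions preserve normalization. The output of any constant-free $\XOR_n$ circuit must be a binary gate, so appending a new $\neg$ cannot create a double-negation; and if we start from a normalized circuit for $\neg\XOR_n$ whose output is a $\neg$-gate, deleting that $\neg$ leaves the remaining circuit normalized. Therefore $CC(\neg\XOR_n) = CC(\XOR_n) = 3(n-1)$, and the ``optimal iff $|C|=3(n-1)$'' conclusion extends to $\neg\XOR_n$.

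There is no genuine obstacle here: this corollary is bookkeeping on top of Schnorr's theorem and the tree-of-$\XOR_2$ construction. The only mildly delicate point is the normalization check in $\cD$ when moving between $\XOR_n$ and $\neg\XOR_n$; I would handle that explicitly as above, by noting that the output of a constant-free $\XOR_n$ circuit is never a $\neg$-gate.
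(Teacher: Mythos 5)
Your proposal matches the paper's (sketched) argument: Schnorr's lower bound, the tree-of-$\XOR_2$ upper bound, and the free-negation transfer to $\neg\XOR_n$; the paper does not write out a separate proof beyond the paragraph preceding the corollary, so your fleshed-out version is the intended one. One small slip: your claim that ``the output of any constant-free $\XOR_n$ circuit must be a binary gate'' is false---for example $\neg\bigl((\neg u \lor v)\land(u\lor\neg v)\bigr)$ is a normalized constant-free circuit for $\XOR_2$ with a $\neg$-gate at the output---but this does not harm the argument, since $CC$ is defined as a minimum over \emph{all} circuits, so the transfer only needs \emph{some} same-size circuit for $\neg\XOR_n$, not a normalized one; if you want the normalization bookkeeping you should split into the two cases ``output is a binary gate: append $\neg$'' and ``output is a $\neg$-gate: delete it.''
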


However, this leaves open \emph{how} those $3(n-1)$ gates can be arranged. Are there optimal circuits for $\XOR$ that do not follow the upper-bound construction? Previous work showed that in $\cR$, all optimal circuits follow this construction: optimal $\XOR_n$ circuits are binary trees of $(n-1)$ $\XOR_2$ sub-circuits \cite{Kombarov2011}. In this section, we extend this characterization, as seen in Figure \ref{fig:XOR-true-shape}, to $\cD$.

\begin{theorem*}[Informal Statement of Theorem \ref{thm:XOR-structure}]
\label{thm:XOR-structure-informal}
    Optimal $(\lnot)\XOR_n$ circuits in $\cD$ partition into trees of $(n-1)$ $(\lnot)\XOR_2$ sub-circuits.
\end{theorem*}

\begin{figure}[b]
\includegraphics[width=\textwidth]{figures/images/demorgan-redkin-xor-structure}
\label{fig:xor-shape}
\end{figure}

This structural characterization implies the crucial combinatorial parameter bounds that are required to apply Algorithm \ref{alg:ckt-se-solver} and therefore rule out the possibility of proving hardness of $\MCSP$ via $\SEP{\XOR}$.

\begin{corollary}
\label{cor:XOR-params}
    The following properties hold for $\XOR_n$, where $n \geq 1$
    \begin{itemize}
        \item The size of optimal circuits computing $\XOR_n$ is linear (Corollary \ref{cor:opt-xor-size}).

        \item The maximum fan-out of such circuits is a constant. 

        \item The number of optimal circuits for $\XOR_n$, up to permutation of variables, is $2^{O(n)}$.
    \end{itemize}
\end{corollary}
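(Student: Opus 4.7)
The plan is to derive all three bullets as combinatorial consequences of Theorem \ref{thm:XOR-structure}, which guarantees that every optimal circuit for $(\neg)\XOR_n$ decomposes as a rooted binary tree $T$ with $n$ leaves labeled by the input variables, $n-1$ internal nodes, and each internal node realized by a $(\neg)\XOR_2$ sub-circuit drawn from a fixed finite family $\mathcal{G}$ of $O(1)$-sized gadgets. Given this characterization, the corollary reduces to bookkeeping; no deep argument is needed.

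The first bullet is immediate from Corollary \ref{cor:opt-xor-size}. For the second, I would observe that the fan-out of any wire in an optimal circuit decomposes into (a) its fan-out within its own gadget and (b) its fan-out into other gadgets. The tree structure forces (b) to be at most $1$: each $(\neg)\XOR_2$ sub-circuit has exactly one output wire going to its parent in $T$, and no other wires leave the gadget. Part (a) is bounded by the maximum in-gadget fan-out over $\mathcal{G}$, which is a fixed constant because every gadget has $O(1)$ gates. Each input variable $x_i$ sits at a single leaf of $T$, so its fan-out equals the in-gadget fan-out of that leaf --- once again a constant. Taking the maximum over these contributions yields the $O(1)$ bound.

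For the third bullet, I would count optimal circuits modulo variable permutation by decoupling ``tree shape'' from ``gadget choice.'' Since $\XOR_n$ is fully symmetric, permuting the $n$ input labels does not change the underlying shape of $T$, so the count modulo permutations is at most (number of unlabeled rooted binary trees with $n$ leaves) $\times$ (number of ways to assign a gadget from $\mathcal{G}$ to each of the $n-1$ internal nodes). The first factor is the Catalan number $C_{n-1} = O(4^n)$, and the second is $|\mathcal{G}|^{n-1} = c^{n-1}$ for a fixed constant $c$. Multiplying yields $2^{O(n)}$, as required.

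Because Theorem \ref{thm:XOR-structure} supplies the entire structural backbone, there is no genuine obstacle here. The only mild care needed is to verify that $\mathcal{G}$ is indeed finite in both $\cR$ and $\cD$ --- an exhaustive enumeration of optimal normalized $(\neg)\XOR_2$ circuits on two variables, each with at most a constant number of binary gates --- and to check that quotienting by the $S_n$-action on leaf labels genuinely collapses the factor of $n!$ in the labeled count, which it does because $\XOR_n$ is symmetric in all $n$ inputs.
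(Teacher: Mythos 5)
Your proposal is correct and takes the route the paper implicitly intends: the corollary is stated without an explicit proof, as an immediate bookkeeping consequence of Theorem \ref{thm:XOR-structure}, which is exactly how you treat it. One small factual slip in your fan-out argument: the output gate $\nu$ of each non-root block has fanout exactly \emph{two}, not one --- both outgoing wires feed the two binary gates $\alpha'$ and $\beta'$ of the parent block, which follows from Lemma \ref{lem:xor-is-read-twice} applied through the inductive step of Theorem \ref{thm:XOR-structure} (after restricting away the block, $\nu$'s successors become the successors of a variable in the smaller optimal $\XOR_{n-1}$ circuit, which must number exactly two). This does not affect the conclusion, since two is still $O(1)$, and the Catalan-times-finite-gadget count for the third bullet is sound.
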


Before we prove our main result of this section, we introduce some auxiliary facts that we will repeatedly require.

\subsection{Basic Properties of \texorpdfstring{$\XOR$}{XOR}}
We first give two basic facts about $(\neg)\XOR_n$ that are immediate consequences of the definition.

\begin{fact}[$(\neg)\XOR$ is Fully DSR] \label{fact:xor-dsr}
  $\XOR_n$ is \emph{fully downward self-reducible}, i.e. for any input $x \in \{0,1\}^n$, any non-empty sets $S$ and $T$ partitioning $[n]$,
  \[
    \XOR_n(x) = \XOR_2(\XOR_{|S|}(x_S), \XOR_{|T|}(x_T))
  \]
where $x_S = \{x_i : i \in S\}$ and $x_T = \{x_i : i \in T\}$. Furthermore, this means for any assignment $\alpha_S$ of variables in $x_S$, $\XOR_n(x)|_{\alpha_S} = (\neg)\XOR_{|T|}(x_T)$. The same is also true of $\neg\XOR_n(x)$
\end{fact}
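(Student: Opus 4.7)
The plan is to prove the fact directly from the parity-counting definition of $\XOR_n$, since this really is just associativity of addition mod $2$ in disguise. First I would unpack the definition: for any $y \in \bool^k$, the value $\XOR_k(y)$ equals $\sum_{i \in [k]} y_i \pmod 2$. Then for any partition of $[n]$ into non-empty $S$ and $T$, I would write
\[
\sum_{i \in [n]} x_i = \sum_{i \in S} x_i + \sum_{i \in T} x_i \pmod 2,
\]
which, re-reading each piece as a $\XOR$, is precisely the identity $\XOR_n(x) = \XOR_2(\XOR_{|S|}(x_S), \XOR_{|T|}(x_T))$.

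Next I would handle the restriction claim. Fix an assignment $\alpha_S$ of the variables indexed by $S$, and let $c = \XOR_{|S|}(\alpha_S) \in \bool$. Substituting into the identity above gives $\XOR_n(x)|_{\alpha_S} = \XOR_2(c, \XOR_{|T|}(x_T))$. A two-case check on $c$ finishes this part: if $c = 0$ the right side simplifies to $\XOR_{|T|}(x_T)$, and if $c = 1$ it simplifies to $\neg\XOR_{|T|}(x_T)$. In both cases the restricted function is $(\neg)\XOR_{|T|}(x_T)$, which is the conclusion.

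Finally, for $\neg\XOR_n$ I would just negate both sides of the identities already established and observe that negation commutes with the outer $\XOR_2$ up to flipping one of its inputs, so the same decomposition and restriction pattern holds. The argument is entirely routine; there is no real obstacle. The only minor care point is cleanly distinguishing the two cases for the value of $c$ so that the ``$(\neg)$'' on the right-hand side of the restriction statement is justified rather than asserted, and verifying that the statement remains symmetric in $S$ and $T$ (so the proof need only be written once).
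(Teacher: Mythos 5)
Your proof is correct and matches the paper's (implicit) treatment: the paper presents this as a Fact with no proof, explicitly describing it as an ``immediate consequence of the definition,'' and your argument is precisely the routine unpacking of $\XOR$ as addition mod $2$ that the authors have in mind. Nothing further is required.
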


\begin{fact}[All Subfunctions of $(\neg)\XOR$ are Non-Degenerate] \label{fact:xor-non-degeneracy}
  $(\neg)\XOR_n$ not only depends on all of its inputs but it is also \emph{maximally sensitive}, i.e. for all $i \in [n]$, for inputs $x \in \{0,1\}^n$, $(\neg)\XOR_n(x) \neq (\neg)\XOR_n(x \oplus e_i)$.
\end{fact}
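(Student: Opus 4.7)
The plan is to derive this fact by direct computation from Definition \ref{def:xor}. First I would let $w(z)$ denote the Hamming weight of $z \in \{0,1\}^n$ and rewrite $\XOR_n(z) = w(z) \bmod 2$. Next I would observe that $w(e_i) = 1$ and that $x$ and $x \oplus e_i$ agree on every coordinate except the $i$-th, so $w(x \oplus e_i) = w(x) \pm 1$ depending on whether $x_i = 1$ or $x_i = 0$. This gives $w(x \oplus e_i) \not\equiv w(x) \pmod 2$, hence $\XOR_n(x) \neq \XOR_n(x \oplus e_i)$.

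For the negated variant, I would simply invoke the identity $\neg \XOR_n(z) = 1 \oplus \XOR_n(z)$ and note that negation is an involution applied pointwise, so $\neg \XOR_n(x) \neq \neg \XOR_n(x \oplus e_i)$ holds if and only if $\XOR_n(x) \neq \XOR_n(x \oplus e_i)$, which was established above. Maximal sensitivity then immediately upgrades to non-degeneracy in the sense of Definition \ref{def:bit-dependency}: we have exhibited that \emph{every} input witnesses dependence on each $x_i$, which is strictly stronger than the existential statement required by the definition.

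The main obstacle: there isn't one. The fact is a one-line consequence of the definition, and the sketch above is essentially already a complete proof. It is isolated as a standalone fact only because both maximal sensitivity and non-degeneracy of arbitrary restrictions (combined with Fact \ref{fact:xor-dsr}, which says every restriction of $(\neg)\XOR_n$ is again a $(\neg)\XOR$ on fewer variables) are invoked repeatedly in the structural case analyses of Section \ref{sec:opt-XOR-ckt}; having one labelled reference keeps those arguments concise.
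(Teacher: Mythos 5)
Your proof is correct. The paper states Fact~\ref{fact:xor-non-degeneracy} without proof, introducing it as an ``immediate consequence of the definition,'' and the Hamming-weight parity computation you give is exactly the standard one-line argument that makes that immediacy precise; there is no divergence in approach to report.
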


Combining Facts \ref{fact:xor-dsr} and \ref{fact:xor-non-degeneracy}, if we substitute for a single variable in a $(\neg)\XOR_n$ circuit where $n \geq 2$ are guaranteed to get a circuit which computes $(\neg)\XOR_{n-1}$. Applying the tight version of Schnorr (Corollary \ref{cor:opt-xor-size}) we see that subsequent simplification cannot remove more than three costly gates. Formally,

\begin{corollary}[Elimination Rate Limit for Optimal $\XOR$ Circuits]\label{cor:XOR-single-bit-at-most-three}
  Let $C$ be an optimal circuit computing $(\neg)\XOR_n$ where $n \geq 2$. Let $C'$ be the circuit after substituting $x_i = \alpha$ for some $i \in [n]$ and $\alpha = \{0,1\}$ and applying simplifying. We have that $|C'| \geq |C| - 3$ and $C'$ is not constant.
\end{corollary}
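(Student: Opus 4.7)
The plan is to deduce both claims directly from the facts already established in this subsection, without any new combinatorial analysis. The structure of the proof is essentially: ``substitution of a single bit produces a circuit for a smaller $(\neg)\XOR$, and Schnorr's tight bound forces a minimum size on such a circuit.''

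First I would observe that substitution and simplification preserve the function computed by the circuit in the expected way: the substitution step replaces $x_i$ with the constant $\alpha$, restricting the computed function, and each gate-elimination rule is a Boolean identity, so simplification cannot change the function. Therefore $C'$ computes $(\neg)\XOR_n\hook_{x_i = \alpha}$. By Fact \ref{fact:xor-dsr} applied with $S = \{i\}$ and $T = [n] \setminus \{i\}$, this restricted function equals $(\neg)\XOR_{n-1}$ on the remaining $n-1$ variables, with the outer negation flipped or not depending on $\alpha$ and the original function.

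Next I would apply the matching upper and lower bounds for optimal $(\neg)\XOR$ circuits. Since $n \geq 2$, we have $n - 1 \geq 1$ so $(\neg)\XOR_{n-1}$ is a well-defined non-constant function (by Fact \ref{fact:xor-non-degeneracy}, it depends on every remaining variable, so in particular it is not constant). This immediately gives the second claim that $C'$ is not constant. For the first claim, Corollary \ref{cor:opt-xor-size} guarantees that any circuit computing $(\neg)\XOR_{n-1}$ has size at least $3(n - 2)$. Since $C$ was optimal, $|C| = 3(n - 1)$, and therefore
\[
  |C'| \;\geq\; 3(n - 2) \;=\; 3(n - 1) - 3 \;=\; |C| - 3,
\]
as required.

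There is no real obstacle here; the argument is essentially a one-line consequence of full downward self-reducibility of $\XOR$ combined with the tight $3(n-1)$ bound. The only subtlety worth being careful about is confirming that simplification does not silently remove an input variable and yield a circuit for a degenerate function; this is ruled out precisely because Fact \ref{fact:xor-non-degeneracy} forces $(\neg)\XOR_{n-1}$ to depend on all remaining variables, so any such ``accidental'' garbage collection would contradict the fact that $C'$ still computes $(\neg)\XOR_{n-1}$.
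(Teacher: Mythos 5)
Your proposal is correct and matches the paper's argument exactly: both deduce the corollary from downward self-reducibility (Fact \ref{fact:xor-dsr}), non-degeneracy of all subfunctions (Fact \ref{fact:xor-non-degeneracy}), and the tight $3(n-1)$ size bound (Corollary \ref{cor:opt-xor-size}), noting that $C'$ must compute $(\neg)\XOR_{n-1}$ and therefore have at least $3(n-2)$ gates. The paper gives this as a one-paragraph preamble rather than a formal proof block, but the reasoning is identical to yours.
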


We will repeatedly apply this corollary in our proof: deviation from the prescribed structure will often allow us to substitute and remove more than three distinct binary gates. The other main source of contradictions will be substitutions and rewrites that disconnect inputs (violating Fact \ref{fact:xor-non-degeneracy}) or that leave inputs with exactly one costly successor. This violates the fact that $\XOR_n$ reads each of its inputs twice.

\begin{lemma}[$(\lnot)\XOR$ is Read-Twice (Folklore)] 
\label{lem:xor-is-read-twice} 
  Let $C$ be a normalized optimal circuit computing $(\neg)\XOR_n$ where $n \geq 2$. The fanout of every variable $C$ is exactly 2.
\end{lemma}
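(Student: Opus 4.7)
The plan is to rule out, by contradiction and case analysis, every value of $x_i$'s \emph{literal fanout} other than $2$, where the literal fanout $k := a + c$ counts the costly gates reading $x_i$ directly (call this $a$) or reading the at-most-one $\neg x_i$ gate permitted by normalization (call this $c$).

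\textbf{Upper bound ($k \leq 2$).} Substitute $x_i$ with some constant $v \in \{0,1\}$ and count eliminations. If a $\neg x_i$ gate exists, the constant-negation rule removes it while delivering $\neg v$ to each of the $c$ gates reading $\neg x_i$; together with the $a$ gates reading $x_i$ directly (each of which fires a fixing or passing rule on input $v$), this already accounts for $1 + a + c = 1 + k$ binary-gate removals, so $k \geq 3$ immediately contradicts Corollary~\ref{cor:XOR-single-bit-at-most-three}. If no $\neg x_i$ gate exists and $k = a \geq 3$, choose $v$ under which at least one of the $a$ direct readers fires a fixing rule --- always possible, by case-splitting on whether the operations are all $\land$, all $\lor$, or mixed. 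Fixing turns a reader $\alpha$ into a constant; because normalization forbids $\gamma \land \gamma$ and $\gamma \land \neg\gamma$ patterns, each of $\alpha$'s $\geq 1$ outgoing wires feeds a \emph{distinct} other gate, and each recipient of a constant wire is itself eliminated by a fixing, passing, or constant-negation rule. If any such recipient is the output gate and fires a fixing or constant-negation, then $C'$ becomes constant, contradicting Corollary~\ref{cor:XOR-single-bit-at-most-three}. Otherwise we tally at least $k + 1 \geq 4$ eliminations, again contradicting the corollary.

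\textbf{Lower bound ($k \geq 2$).} If $k = 0$, then $x_i$ is disconnected from the output, violating Fact~\ref{fact:xor-non-degeneracy}. Suppose instead $k = 1$, with unique costly reader $\alpha$ of some literal of $x_i$, and let $\beta$ denote $\alpha$'s other input. Since $\alpha$ is the sole costly gate reading $x_i$, the sub-circuit rooted at $\beta$ depends only on $x_{-i}$, and the function computed by $C$ factors as
\[
  C(x) \;=\; R\bigl(\alpha\bigl((\neg)x_i,\,(\neg)\beta(x_{-i})\bigr),\; x_{-i}\bigr),
\]
for some $R$ that does not read $x_i$. Enumerating the eight candidates for $\alpha$ (operation in $\{\land,\lor\}$, each input positively or negatively signed) and matching against the identity $C(x) = x_i \oplus (\neg)\XOR_{n-1}(x_{-i})$ at $x_i \in \{0,1\}$ pins $\beta(x_{-i})$ to a single fixed value for every $x_{-i}$; hence $\beta$ computes a constant function. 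Replacing $\beta$'s sub-circuit with that constant and running the resulting elimination at $\alpha$ strictly reduces the binary-gate count of $C$, contradicting its optimality.

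Combining these bounds yields $k = 2$. The main obstacle I anticipate is the propagation bookkeeping in the upper bound; the trickiest sub-case is when all $a$ direct readers share the same operation, so one choice of $v$ gives only passing rules. In that case the \emph{other} choice of $v$ simultaneously fixes all $a$ gates, and the parallel propagations from these fixings are what force the contradictory $\geq 4$-gate removal.
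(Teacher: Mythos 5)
Your lower-bound argument is sound, and in fact takes a cleaner route than the paper. The paper shows $\beta$ is \emph{non-constant} (else $\alpha$ could be eliminated, contradicting optimality), and then derives a \emph{second} contradiction by restricting $\beta$'s variables to fix $\alpha$ and disconnect $x_i$, violating non-degeneracy. You instead argue directly from maximal sensitivity (Fact~\ref{fact:xor-non-degeneracy}) that $\beta$ must be constant, and then appeal only to optimality. Both reach the same place, but your direct enumeration of the eight $\alpha$-patterns is more economical.

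The upper bound, however, has two genuine gaps that leave the critical $k = 3$ case unproved. First, in the branch where a $\neg x_i$ gate is present, you tally ``$1 + a + c = 1 + k$ binary-gate removals,'' but the $+1$ is the negation gate, which in $\cD$ contributes nothing to circuit size and hence nothing to the budget of 3 in Corollary~\ref{cor:XOR-single-bit-at-most-three}. The true count from the direct readers is only $a + c = k$, so $k = 3$ gives exactly 3 removals---permitted, not contradictory---and you have not run the cascade argument in this branch at all. Second, in the branch without a negation gate, you fix some reader $\alpha$ and tally ``$k + 1$'' by adding $\alpha$'s recipients, but you never argue that any of those recipients lies \emph{outside} the set of $k$ direct readers. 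If $\alpha$'s fanout wires all land on other gates that also read $x_i$ (a chain like $\alpha_1 \to \alpha_2 \to \alpha_3$), the recipient is double-counted and your tally stays at $k$. The paper closes both gaps with a single device: sort the readers of $(\neg)x_i$ by depth, choose $v$ to \emph{fix} the shallowest one $\alpha_3$, and observe that any gate $\beta_3$ reading $\alpha_3$ has strictly smaller depth, so $\beta_3$ cannot be among $\alpha_1, \alpha_2, \alpha_3$. Adding this depth-minimal choice would repair both branches of your upper bound.
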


\begin{remark}
    In the proofs of this section, we will omit negations whenever reasonable, since we are more interested in the binary gates which solely contribute to circuit size in $\cD$. We write $\alpha$ is a \emph{costly successor} to $\beta$ to mean $(\neg)\beta$ is an input to binary gate $\alpha$.
\end{remark}

\begin{proof}
  Let $C$ be an optimal normalized circuit computing $(\neg)\XOR_n$ for arbitrary $n$.  Suppose there is a variable $x_i$ whose fanout is not $2$. There are two cases: (1) the fanout of $x_i$ is $1$ and (2) the fanout of $x_i$ is at least $3$.

  (1) Take $\alpha$, the assumed unique costly successor of $x_i$ and let $\beta$ be the input to $\alpha$. Let $X'$ be the set of input variables that $\beta$ depends on (i.e. that are present in the sub-circuit rooted at $\beta$). Observe that $x_i \not\in X'$. Consider the truth-table of the sub-circuit rooted at $\beta$ and observe it must be a non-constant function of the variables of $X'$. If it were constant, then we could replace $\beta$ with this constant and remove $\alpha$ from the circuit via a gate elimination rule, reducing the size of the circuit and violating optimality. Therefore, there is an assignment of the variables in $X'$ such that if we substitute and simplify, eventually a constant will feed into $\alpha$ that allows us to remove it with a fixing rule. This disconnects $x_i$ from the circuit, which violates Fact $\ref{fact:xor-non-degeneracy}$ since we did not substitute for $x_i$ and thus the the resulting parity function must still depend on it.

  (2) Suppose $x_i$ has more than two costly successors. Let $\alpha_1, \alpha_2$ and $\alpha_3$ be three of these and without loss of generality assume these are indexed in descending depth. Observe that $(\neg)\alpha_3$ is not the output of the circuit as otherwise we could substitute $x_i$ to be some constant that fixes $\alpha_3$ and make the circuit constant. This would violate the rate limit on eliminations for optimal $\XOR$ circuits (Corollary \ref{cor:XOR-single-bit-at-most-three}). Let $\beta_3$ be the costly successor of $\alpha_3$ and notice that, since $\alpha_1, \alpha_2$ and $\alpha_3$ are in descending depth order, $\beta_3$ is a new distinct gate. Thus if we substitute $x_i$ to fix $\beta_3$ and simplify, we can remove $\alpha_1, \alpha_2, \alpha_3$ and $\beta_3$ with a passing or fixing rule applying to $\beta_3$ after applying a fixing rule to $\beta_3$. This violates Corollary \ref{cor:XOR-single-bit-at-most-three}.
  
  Both cases reach a contradiction and therefore every input has two costly successors in any normalized optimal circuit computing $(\neg)\XOR_n$.
\end{proof}

\subsection{Optimal \texorpdfstring{$(\lnot)\XOR_n$}{(¬)XOR} Circuits Are Binary Trees of \texorpdfstring{$(\lnot)\XOR_2$}{(¬)XOR2} Blocks}
We now have the tools required to show that binary trees of optimal $(\neg)\XOR_2$ subcircuits are the \emph{only} optimal circuits for computing $\XOR_n$. Formally,

\begin{theorem}
    \label{thm:XOR-structure}
    Optimal $(\neg)\XOR$ circuits \emph{partition into} trees of $(\neg)\XOR_2$ sub-circuits --- even when NOT gates are free.  Formally, for every circuit $C$ with the minimum number of AND,OR gates computing $\XOR_n$, there is a partition of the gates of $C$ into $(n-1)$ blocks together with a multi-labelling of each wire $w$ in $C$ by tuples $\langle i,t \rangle$ where $t \in \{\mathsf{in}, \mathsf{out}, \mathsf{core} \}$ describes the role that $w$ plays in block $i$, such that:
    \begin{enumerate}
    \item Each block is a three-gate $\XOR_2$ sub-circuit, with distinguished $\mathsf{input}$, $\mathsf{output}$, and $\mathsf{core}$ wires.
    \item The $\mathsf{input}$ wires of every block are also the output wires of a different block or the input gates.
    \item Contracting all the $\mathsf{core}$ wires of $C$ results in a binary tree.
    \end{enumerate}
\end{theorem}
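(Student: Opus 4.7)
The plan is to prove Theorem~\ref{thm:XOR-structure} by strong induction on $n$. The base case $n=2$ is immediate: by Corollary~\ref{cor:opt-xor-size} every optimal circuit uses exactly three costly gates, and these three gates together with the relevant $\neg$ wires trivially form a single $(\neg)\XOR_2$ block with its two inputs as the block's input wires and the circuit output as the block's output wire. For the inductive step, given an optimal circuit $C$ for $(\neg)\XOR_n$, the strategy is to locate a \emph{leaf block}: a three-gate sub-circuit $\mathcal{B}$ computing $(\neg)\XOR_2(x_i,x_j)$ on two inputs $x_i,x_j$ whose internal wires have fanout~$1$ and whose output feeds exactly one costly gate outside $\mathcal{B}$. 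Once such a block is in hand, contracting $\mathcal{B}$ to a single fresh input $z$ yields, by Fact~\ref{fact:xor-dsr}, a circuit of size $3(n-1)-3=3(n-2)$ computing $(\neg)\XOR_{n-1}$ on $\{z\}\cup([n]\setminus\{i,j\})$, which is optimal by Corollary~\ref{cor:opt-xor-size}; the inductive hypothesis then furnishes a tree-of-blocks decomposition of the contracted circuit, and reinserting $\mathcal{B}$ in place of $z$ adds one new leaf block, giving the desired decomposition of $C$.

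The core of the argument is the leaf-block existence lemma. Pick a depth-maximal input variable $x_i$ and let $\alpha_1,\alpha_2$ be its two distinct costly successors, which exist by Lemma~\ref{lem:xor-is-read-twice}. For each $c\in\{0,1\}$, restricting $x_i\leftarrow c$ followed by a terminal layered simplification (Lemma~\ref{lem:normalization}) removes both $\alpha_1$ and $\alpha_2$, and by Corollary~\ref{cor:XOR-single-bit-at-most-three} removes \emph{exactly} one additional costly gate $\gamma_c$. The plan is to enumerate, using the rules of Table~\ref{tab:rules}, the possible local neighborhoods of $\{\alpha_1,\alpha_2\}$ consistent with (i) both $\alpha_1,\alpha_2$ being simplified away under both values of $x_i$, (ii) exactly one further gate $\gamma_c$ being simplified on each side, and (iii) no input variable getting disconnected (Fact~\ref{fact:xor-non-degeneracy}). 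This case analysis forces $\alpha_1$ and $\alpha_2$ to share their other input, coming through a gate $\beta$, and forces $\beta$'s costly successors (apart from $\alpha_1,\alpha_2$) to collapse into the single third gate $\gamma_0=\gamma_1$ that combines the outputs of $\alpha_1$ and $\alpha_2$. Inspecting the surviving gate types and negation patterns then identifies the sub-circuit spanned by $\alpha_1,\alpha_2,\gamma_0$ as computing $(\neg)\XOR_2(x_i,\beta)$.

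It remains to identify $\beta$ with an input variable $x_j$ and to verify isolation of $\mathcal{B}$. Because $x_i$ was chosen depth-maximal, $\beta$ cannot be the output of a costly gate strictly below $x_i$'s depth; a direct check of the remaining options (using that $C$ is normalized so $\neg$ gates have fanout~$1$) forces $\beta=(\neg)x_j$ for some input $x_j$ of the same maximal depth. Then Lemma~\ref{lem:xor-is-read-twice} applied to $x_j$ says $x_j$ has fanout exactly $2$, and since $\alpha_1,\alpha_2$ already account for both of those costly successors, $x_j$ does not leave $\mathcal{B}$ through any other wire. Isolation of $\mathcal{B}$'s three internal wires follows from the rate limit: any extra fanout on one of them would, upon restricting $x_i$ or $x_j$, leave a constant feeding a second costly gate outside $\mathcal{B}$ and violate Corollary~\ref{cor:XOR-single-bit-at-most-three} or produce an optimal $(\neg)\XOR_{n-1}$ circuit that is too small.

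The main obstacle is the exhaustive local case analysis in the second paragraph. Kombarov's argument for $\cR$ exploits that optimal $\cR$ circuits contain only a bounded, known number of negations, which severely restricts the possible placements of $\neg$ gates along the wires incident to $\alpha_1,\alpha_2,\gamma_c$. In $\cD$ negations are free, so at every wire in the local neighborhood we must separately consider the presence or absence of a $\neg$, roughly doubling the number of subcases to rule out. Each subcase is dispatched by one of a small number of stock contradictions -- a disconnected input, an ``extra'' simplification step that beats the $3(n-1)$ lower bound, or a gate that ends up with fanout~$0$ after normalization -- but keeping the bookkeeping organized (and confirming that negation insertions never let us evade these contradictions) is the most delicate part of the proof and the place where the extension of Kombarov's result to $\cD$ requires genuinely new work.
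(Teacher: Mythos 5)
Your outline shares the paper's skeleton: strong induction on $n$, a ``leaf-block existence'' lemma (the paper's Lemma~\ref{lem:xor-has-blocks}) that extracts a three-gate $(\neg)\XOR_2$ sub-circuit on two input variables, and an inductive step that contracts the block and lifts the resulting decomposition. The tools you cite---read-twice (Lemma~\ref{lem:xor-is-read-twice}), the elimination rate limit (Corollary~\ref{cor:XOR-single-bit-at-most-three}), non-degeneracy (Fact~\ref{fact:xor-non-degeneracy}), and the $3(n-1)$ upper bound---are exactly the ones driving the paper's argument, and your closing remark that freeness of NOT in $\cD$ is the genuinely new obstacle over Kombarov's $\cR$ result is accurate.

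Two issues. First, a detour in the block-finding step: the paper begins from a \emph{depth-maximal binary gate} $\alpha$. Because $\alpha$ is deepest, both of its fan-in wires must come from (possibly negated) input variables, so the two block inputs $x_i, x_j$ drop out immediately and the case analysis only has to pin down $\beta_i$, $\beta_j$, and $\nu$. You instead begin from a \emph{depth-maximal input variable} $x_i$ and its two costly successors $\alpha_1, \alpha_2$. Depth-maximality of $x_i$ only forces the \emph{deeper} of $\alpha_1, \alpha_2$ to have a variable as its other input; you still must rule out, e.g., the shallower one reading from $\alpha_1$ itself or from some intermediate gate. Those configurations are killable by the same rate-limit/read-twice contradictions, so this is not a dead end, but it inflates rather than shrinks the case analysis compared to the paper's choice of starting gate.

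Second, and more importantly, the proposal defers the entire local case analysis (``this case analysis forces $\alpha_1$ and $\alpha_2$ to share their other input $\ldots$'') as ``the most delicate part.'' In the paper, that case analysis \emph{is} the proof of Lemma~\ref{lem:xor-has-blocks}: a sequence of claims ($\nu$ is $\alpha$'s unique successor, $\nu \neq \beta_i$, $\beta_i$ has a unique successor, $\beta_j = \beta_i$, $\beta$ feeds $\nu$) each closed by a carefully chosen substitution that over-eliminates or disconnects a variable. Your outline correctly identifies the invariants the enumeration must respect and the three stock contradictions it should produce, but it does not actually perform the enumeration nor argue that the enumeration is exhaustive. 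As written, this is a faithful plan for the paper's proof rather than a proof; the missing content is precisely the part you flag as the place where extending Kombarov to $\cD$ requires new work.
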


The proof will proceed via induction, however most of the work will be in proving that we can find a $(\lnot) \XOR_2$ block in any $\XOR_{n}$ circuit which we can ``peel off'' with a single variable substitution. The resulting circuit will compute $\XOR_{n-1}$ allowing us apply our inductive hypothesis in order to get a partition we can lift back up to the original circuit. For this reason we separate this out as a lemma.

\begin{lemma}
\label{lem:xor-has-blocks}
    Let $C$ be a circuit computing $\XOR_n$ for $n\geq 3$. There exists two inputs $x_i$ and $x_j$ that feed into a block $B$ in $C$ as described in Theorem \ref{thm:XOR-structure}.
\end{lemma}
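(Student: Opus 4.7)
The plan is to locate a three-gate $(\neg)\XOR_2$ block at the bottom of $C$ by starting from a deepest costly gate and using the elimination rate-limit (Corollary \ref{cor:XOR-single-bit-at-most-three}) together with the read-twice property (Lemma \ref{lem:xor-is-read-twice}) to rule out any deviation from the target template. First, select a deepest binary gate $\alpha$ in $C$. Both inputs of $\alpha$ must be (possibly negated) input variables, since any internal binary descendant would contradict maximality of depth, and constants are forbidden in a normalized optimal circuit. Call these inputs $(\neg)x_i$ and $(\neg)x_j$; normalization forbids identical or complementary inputs, so $i \neq j$.

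Second, apply Lemma \ref{lem:xor-is-read-twice}: each of $x_i, x_j$ has fanout exactly two, so each has a unique second costly successor, $\beta_i$ and $\beta_j$ respectively. The heart of the argument is to show $\beta_i = \beta_j$. Suppose for contradiction that $\beta_i \neq \beta_j$. Substitute $x_i$ with the constant that fixes $\alpha$ and simplify. Gate $\alpha$ disappears, which may cascade into $\alpha$'s costly parent via a pass or fix. Independently, the same constant reaches $\beta_i$ as a (possibly negated) input and eliminates at least one further costly gate. Enumerating the finitely many combinations of gate types at $\beta_i$ and at $\alpha$'s parent, together with the negation parities on the wires $x_i \to \alpha$ and $x_i \to \beta_i$, one finds a substitution that drops at least four binary gates, contradicting Corollary \ref{cor:XOR-single-bit-at-most-three}. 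A symmetric argument on $x_j$ closes the case, so there is a common second successor $\beta := \beta_i = \beta_j$.

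Third, locate the top of the block. Let $\gamma$ be the costly successor of $\alpha$ (if $\alpha$ has fanout greater than one, the same substitution-style analysis aligns the extra successors or else rules them out to preserve optimality). I then argue $\gamma$'s other input is $(\neg)\beta$: a passing substitution for $x_i$ at $\alpha$ forces $\gamma$ to inherit $(\neg)\beta$ in $\alpha$'s place, and any other partner gate again leaks a fourth eliminated gate under a well-chosen restriction. With $\alpha, \beta, \gamma$ reading only $(\neg)x_i, (\neg)x_j$ and connecting to the rest of $C$ exclusively through $\gamma$, the sub-block computes some binary function $h(x_i, x_j)$. Fixing the remaining variables to arbitrary constants, $\XOR_n$ restricts to either $x_i \oplus x_j$ or $\neg(x_i \oplus x_j)$, so $h$ must equal one of these; any other binary function is realizable with fewer than three gates and would violate optimality of $C$ locally.

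I expect the main obstacle to be the enumerative step forcing $\beta_i = \beta_j$: the case analysis must simultaneously track gate types at $\beta_i, \beta_j$, the negation parities on all four wires incident to $\alpha, \beta_i, \beta_j$, and the cascading elimination above $\alpha$. The target, however, is uniform across cases --- exhibit a single-variable restriction that eliminates at least four binary gates --- so the argument, while delicate, is systematic rather than ad hoc. Once the block is identified, plugging it into the inductive framework of Theorem \ref{thm:XOR-structure} is routine: substitute a key that collapses the block (by Fact \ref{fact:xor-dsr}, the resulting circuit computes $(\neg)\XOR_{n-1}$, and by Corollary \ref{cor:opt-xor-size} it is still optimal), and lift the recursively-obtained partition back to $C$ by adjoining the distinguished block.
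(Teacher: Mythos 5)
Your target block structure ($\alpha$ and $\beta$ both reading $x_i,x_j$, with a third gate $\gamma$ above) matches the paper's, and your bookends --- depth-maximal $\alpha$ reads two distinct inputs, read-twice supplies $\beta_i,\beta_j$, and the final step that the block computes $(\neg)\XOR_2$ --- are all on target. But the two central steps have genuine gaps.

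The claim that $\beta_i\neq\beta_j$ lets you \emph{find a substitution eliminating four costly gates} is not correct, and no enumeration of gate types and negation parities will rescue it. Once you have established (as the paper does first, and you fold into a parenthesis) that $\alpha$ has a unique costly successor $\nu=\gamma$, then: substituting $x_i$ by the constant that fixes $\alpha$ eliminates $\alpha$, $\nu$, and at least one rule at $\beta_i$ --- that is exactly three gates whenever the constant merely \emph{passes} $\beta_i$; the opposite constant passes $\alpha$ (so $\nu$ need not be eliminated at all) while possibly fixing $\beta_i$ --- again three gates. Both outcomes are fully consistent with Corollary~\ref{cor:XOR-single-bit-at-most-three}, so there is no rate-limit contradiction to be had. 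What actually closes this step is Lemma~\ref{lem:xor-is-read-twice}: after fixing $\alpha$ with $x_i$, exactly $\alpha,\beta_i,\nu$ are gone, and $x_j$ inherits nothing new, so if $\beta_j\notin\{\beta_i,\nu\}$ then $x_j$ has fanout one in the resulting optimal $\XOR_{n-1}$ circuit --- a read-twice violation. Even then you have only $\beta_j\in\{\beta_i,\nu\}$, and a further case (which your sketch omits entirely) is needed to rule out $\beta_j=\nu$.

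Your argument that $\gamma$'s other input is $(\neg)\beta$ also rests on a misstatement of the passing rule. When you pass $\alpha$ by substituting $x_i$, the gate $\gamma$ inherits $\alpha$'s \emph{other} input, which is $(\neg)x_j$, not $(\neg)\beta$; nothing about that step connects $\gamma$ to $\beta$. The paper again closes this via read-twice, not a fourth eliminated gate: if $\beta$'s unique costly successor were some gate other than $\nu$, then fixing $\alpha$ with $x_j$ passes $\beta$ onto its remaining input $x_i$, which then has only a single costly successor in the resulting circuit. The one place your ``leak a fourth gate'' instinct does work is the preliminary step showing $\fo(\alpha)=1$; that deserves to be an explicit lemma in its own right rather than a parenthetical, since the later arguments silently rely on it.
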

\begin{proof}
    Let $C$ be an optimal normalized circuit computing $\XOR_n$ where $n \geq 3$. We first identify two variables which will be inputs to block $B$ (which we will later prove that $B \equiv (\neg)\XOR_2$). Let $\alpha$ be a maximum depth binary gate of $C$. As in the proof of Theorem \ref{thm:schnorr}, we know that $\alpha$ must read two distinct inputs $x_i$ and $x_j$ for some $i,j \in [n]$ since otherwise we can apply a simplification rule and remove $\alpha$, which contradicts that it is an optimal normalized circuit. Our local view of $\alpha$ can be seen in Figure \ref{fig:XOR-1}, where dashed arrows indicate one (or more) adjacent binary gates whose existence has not yet been established.

    \begin{figure}[h]
        \centering
        \includegraphics[]{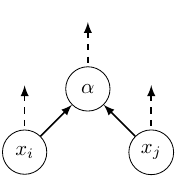}
        \caption{The local neighborhood of $\alpha$}
        \label{fig:XOR-1}
    \end{figure}

    By Lemma \ref{lem:xor-is-read-twice}, we know that both $x_i$ and $x_j$ have exactly two costly successors. Let $\beta_i$ be the other successor of $x_i$. We first observe that neither $\alpha$ nor $\beta_i$ can be the output of the circuit: if they were, some substitution of $x_i$ would fix the output and leave the resulting circuit constant, violating Corollary \ref{cor:XOR-single-bit-at-most-three}. Hence, $\alpha$ and $\beta_i$ must each feed into at least one more costly gate. Let $\nu$ be a costly gate fed by $\alpha$. We remark that $\nu$ could be $\beta_i$ and hence in Figure \ref{fig:XOR-2}, we denote $\nu$ with a dashed node until we establish $\nu \neq \beta_i$.

    \begin{figure}[h]
    \centering 
    \includegraphics[]{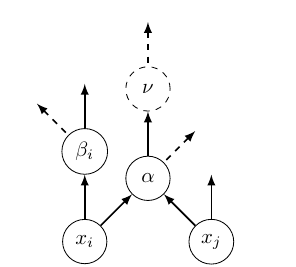}
    \caption{Our view after establishing $\beta_i \neq \alpha$ and neither are the output}
    \label{fig:XOR-2}
    \end{figure}
    
    We first show that $\nu$ is the only successor of $\alpha$. Towards a contradiction, suppose $\alpha$ also feeds other gates besides $\nu$. If it feeds more than one other gate, then substituting $x_i$ to fix $\alpha$ immediately eliminates more than three gates. Hence, it can at most feed one more gate $\nu'$. In particular, we must analyze two cases depending on whether or not $\beta_i$ is distinct from $\nu$ and $\nu'$. These cases can be seen in Figure \ref{fig:XOR-3}. In both cases, we can eliminate more than $3$ costly gates, violating Corollary \ref{cor:XOR-single-bit-at-most-three}.

    \begin{figure}[h]
    \centering
    \captionsetup[subfigure]{justification=centering}
    \captionsetup[subfigure]{labelformat=empty}
    \begin{subfigure}{.33\textwidth}
        \includegraphics[]{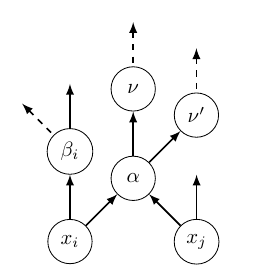}
        \caption{(a) $\beta_i$ is distinct from $\nu'$}
    \end{subfigure}
    \begin{subfigure}{.33\textwidth}
        \includegraphics[]{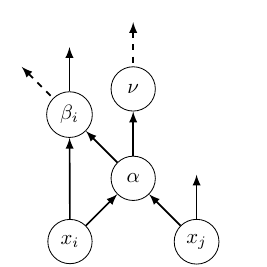}
        \caption{\hspace{-0.2cm}(b) $\beta_i$ is not distinct from $\nu'$}
    \end{subfigure}
    \caption{The two cases if $\alpha$ feeds two gates}
    \label{fig:XOR-3}
    \end{figure}

    \begin{enumerate}
        \item Assume $\beta_i$ is distinct from $\nu$ and $\nu'$. Fixing $\alpha$ with $x_i$ will eliminate four distinct gates, $\alpha, \beta_i, \nu,\nu'$.
        \item Assume $\beta_i$ is not distinct from $\nu$ and $\nu'$. Without loss of generality, assume $\beta_i = \nu'$. Substituting $x_i$ to fix $\alpha$ also fixes $\beta_i$ since both of its inputs are now constants. If $\beta_i$ has any costly successors besides $\nu$ then we are done, since that eliminates at least four gates ($\alpha, \nu, \beta_i,$ and any distinct successor). If $\beta_i$ only feeds into $\nu$, then $\nu$ also becomes fixed. Hence, $\nu$ cannot be the output of the circuit and thus it must have at least one costly successor that is distinct from $\alpha, \nu$ and $\beta_i$. This successor can also be eliminated from our substitution.
    \end{enumerate}
    
    We now show that $\nu$ and $\beta_i$ are distinct. Assume otherwise, then fixing $\alpha$ by substituting $x_i$ also fixes $\beta_i = \nu$, which we know is not the output of the circuit. This eliminates at least 3 (and therefore exactly three) gates ($\alpha, \beta_i,$ and $\beta_i$'s successor which we will call $\upsilon$) and results in an optimal $\XOR_{n-1}$ circuit. We now consider the other gate fed by $x_j$ which we will call $\beta_j$. Again, there are two cases as seen in Figure \ref{fig:XOR-5}: either $\beta_j$ is distinct from $\upsilon$ or they are the same.

    \begin{figure}[h]
    \centering
    \captionsetup[subfigure]{justification=centering}
    \captionsetup[subfigure]{labelformat=empty}
    \begin{subfigure}{.3\textwidth}
        \includegraphics[]{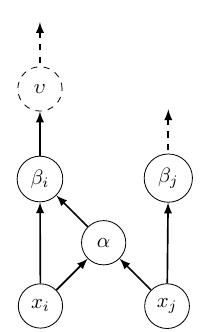}
        \caption{}
    \end{subfigure}
    \begin{subfigure}{.3\textwidth}
        \includegraphics[]{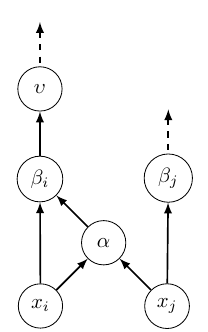}
        \caption{\hspace{-0.5cm}(a) $\beta_j$ is distinct from $\upsilon$}
    \end{subfigure}
    \begin{subfigure}{.3\textwidth}
        \includegraphics[]{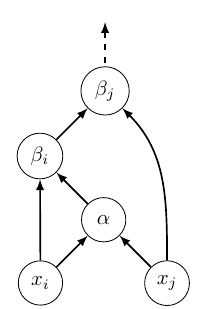}
        \caption{\hspace{-0.5cm}(b) $\beta_j$ is not distinct from $\upsilon$}
    \end{subfigure}
    \caption{The two cases if $\nu = \beta_i$}
    \label{fig:XOR-5}
    \end{figure}

    If $\beta_i$ does not also feed into $\beta_j$, then $x_j$ feeds only $\beta_j$ in the resulting circuit contradicting Lemma $\ref{lem:xor-is-read-twice}$. If $\beta_j$ is fed by $\beta_i$, then consider instead substituting $x_j$ to fix $\beta_j$. Observe that $\beta_j$ is not the output of the circuit, and hence fixing it eliminates four gates: $\alpha$, $\beta_i$, $\beta_j$, and $\beta_j$'s costly successors. In both cases we reach a contradiction, hence $\beta_i \neq \nu$. Furthermore, notice that $\beta_i$ cannot be the output gate and it must have only one costly successor. Otherwise, using $x_i$ to fix $\beta_i$ would eliminate at least four gates: $\alpha, \beta_i$, and the costly successors of $\beta_i$. Hence, we arrive at the local view around $h$ as shown in Figure \ref{fig:XOR-6}.

    \begin{figure}[h]
        \centering
    \includegraphics[]{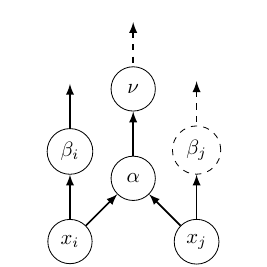}
    \caption{Our view after establishing $\nu \neq \beta_i$}
    \label{fig:XOR-6}
    \end{figure}

    Now, if we fix $\alpha$ using $x_i$, then $\alpha, \beta_i,$ and $\nu$ are eliminated and thus must be the only gate that are eliminated. 
    Thus, if $\beta_j$, the other successor of $x_j$ is distinct from these three gates, then $x_j$ is left with one costly successor in the resulting circuit computing $\XOR_{n-1}$ thus violating Lemma \ref{lem:xor-is-read-twice}. Hence $\beta_j$ must be one of $\beta_i$ or $\nu$. We will show it must be $\beta_i$.

    Suppose, for the sake of contradiction, that $\beta_j = \nu$. We can see that $\beta_i$ must be fed by $\nu$, otherwise we could substitute $x_j$ to fix $\alpha$ and the resulting $\XOR_{n-1}$ circuit only reads $x_i$ once. Now that $\beta_i$ is fed by $\nu$, if we substitute $x_i$ to fix $\beta_i$, then this eliminates $\beta_i, \alpha$ and $\nu$ and disconnects $x_j$ entirely from the resulting circuit, but the resulting $(\neg)\XOR_{n-1}$ circuit must still depend on $x_j$. 
    
    We now have that $\beta_j = \beta_i$---let us call the gate $\beta$. Recall $\beta$ has exactly one costly successor, and if this costly successor isn't $\nu$ and is another gate then fixing $\alpha$ with $x_j$ eliminates three gates $(\alpha, \beta, \nu)$ but leaves $x_j$ with exactly one costly successor it inherited from $\beta$. This is a contradiction, and hence $\beta$ must feed $\nu$ which yields a block $B$ as shown in Figure \ref{fig:XOR-8}.
\end{proof}

    \begin{figure}[h]
        \centering
        \includegraphics[]{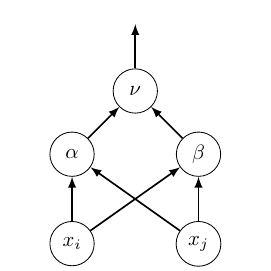}
    \caption{The local area around $x_i$ and $x_j$ form a block B.}
    \label{fig:XOR-8}
    \end{figure}

We now provide the proof of Theorem \ref{thm:XOR-structure} 

\begin{proof}[Proof of Theorem \ref{thm:XOR-structure} using Lemma \ref{lem:xor-has-blocks}]
    We prove this via induction. For $n=1$ and $n=2$ the theorem trivially holds: $(\neg)x_i$ is the unique normal optimal circuit for $(\neg)\XOR_1$ and normal optimal $(\neg)\XOR_2$ circuits trivially define a single block.

    Assume the statement holds for some $k-1 \geq 2$. Let $C$ be a normal optimal circuit computing $(\neg)\XOR_k$ for $k \geq 3$. By Lemma \ref{lem:xor-has-blocks}, we know $C$ must have a block $B$ that is fed by two distinct inputs $(\neg)x_i$ and $(\neg)x_j$. Let the three gates be $\alpha, \beta,$ and $\nu$ where $(\neg)\nu$ is the output gate of $B$. We label the wires from $(\neg)x_i$ and $(\neg)x_j$ as $\texttt{in}$, the wires from $(\neg)\nu$ as $\texttt{out}$ and all other internal wires of $B$ as $\texttt{core}$. We first need to partition the rest of the circuit into blocks and ensure that the two $\texttt{out}$ wires are $\texttt{in}$ wires to the same block in the rest of $C$.

    If we substitute $x_i = b$ to fix $\alpha$ and simplify, we see that $x_j$'s successors are replaced by $(\neg)\nu$'s successors and that the three costly gates in $B$ have been eliminated. Therefore the circuit computes $(\neg)\XOR_{k-1}$ and is optimal. Applying the inductive hypothesis, we can partition the remaining circuit into blocks which we lift back to the original. Notice that $x_j$'s new successors are in the same block and therefore in $C$, $(\neg)\nu$'s successors are also in the same block as desired.

    It remains to prove that $B$ computes $(\neg)\XOR_2$. If we substitute $x_i = b$ to fix $h$ and rewrite as above we see that $B$ reduces to $(\neg)x_j$, i.e. $B(b, x_j) = (\neg)x_j$. We argue that if we instead substitute $x_i = 1-b$, then $B$ would also reduce to $(\neg)x_j$. It cannot reduce to a constant as otherwise $C$, which now computes $\XOR_{k-1}$ does not depend on $x_j$, contradicting Fact \ref{fact:xor-non-degeneracy}. We also observe $B$ cannot reduce to just $x_j$ in both cases (or $\neg x_j$ in both cases), as otherwise we could replace $B$ with $x_j$ (or $\neg x_j$) and $C$ would still correctly compute $\XOR_n$ with three fewer gates --- contradicting that $C$ is optimal. Therefore $B(1-b, x_j) = \neg B(b, x_j)$ and the only binary Boolean functions that satisfy these two equations are $\XOR_2$ and $\neg\XOR_2$.
\end{proof}

\end{document}